\definecolor{Brown}{rgb}{0.55,0.0,0.10}
\definecolor{dgreen}{rgb}{0.00,0.56,0.00}
\definecolor{vertmoinsfonce}{rgb}{0.00,0.50,0.00}
\definecolor{vert}{rgb}{0.00,0.60,0.00}
\definecolor{llightggray}{rgb}{0.97,0.97,0.97}
\definecolor{lightggray}{rgb}{0.9,0.9,0.9}
\definecolor{ggray}{rgb}{0.5,0.5,0.5}
\definecolor{darkggray}{rgb}{0.25,0.25,0.25}
\definecolor{ddarkggray}{rgb}{0.1,0.1,0.1}
\definecolor{bleu}{rgb}{0.00,0.00,1.00}
\definecolor{darkblue}{rgb}{0,0,0.7}
 \definecolor{newlightblue}{rgb}{0,1,1}
\newtheorem{lemma}{Lemma}
\newtheorem{theorem}{Theorem}[section]
\newtheorem{definition}[theorem]{Definition}
\newtheorem{corollary}[theorem]{Corollary}
\newtheorem{proposition}[theorem]{Proposition}
\newtheorem{example}[theorem]{Example}
\theoremstyle{plain}{\theorembodyfont{\rmfamily}%
}
\theoremstyle{plain}{\theorembodyfont{\rmfamily}%
}
\theoremstyle{plain}{
\theorembodyfont{\rmfamily}

	\newtheorem{remark}[theorem]{Remark}

	}
\newcommand{\R}{\mathbb{R}}
\newcommand{\N}{\mathbb{N}}
\newcommand{\E}{\mathbb{E}}
\newcommand{\Q}{\mathbb{Q}}
\font\dsrom=dsrom10 scaled 1200 \def \indic{\textrm{\dsrom{1}}}
\newcommand{\UN}{\indic}
\newcommand{\C}{\mathcal{C}}
\newcommand{\QQ}{\mathcal{Q}}
\newcommand{\D}{\mathcal{D}}
\newcommand{\PP}{\mathcal{P}}
\newcommand{\mc}{\mathcal}
\newcommand{\bd}{\textbf}
\begin{document}


\title{Joint Empirical Coordination of Source and Channel}
%
%
%


\author{\IEEEauthorblockN{Ma\"{e}l Le Treust, \emph{Member, IEEE}}
\thanks{Manuscript received March 26, 2014; revised July 09, 2015; September 21, 2016; accepted May 17, 2017. Date of publication XXX; date of current version June 02, 2017. The author acknowledges financial support from INS2I CNRS through projects  JCJC CoReDe 2015 and PEPS StrategicCoo 2016. This paper was presented in part at the Paris Game Theory Seminar June 23, 2014. 

Ma\"{e}l Le Treust is with ETIS UMR 8051, Université Paris Seine, Université Cergy-Pontoise, ENSEA, CNRS, 6, avenue du Ponceau, 95014 Cergy-Pontoise CEDEX, France (e-mail: mael.le-treust@ensea.fr).

Communicated by H. Permuter, Associate Editor for Shannon Theory. 

Digital Object Identifier 10.1109/TIT.2017.XXX}}





\maketitle

\begin{abstract}
In a decentralized and self-configuring network, the communication devices are considered as  autonomous decision-makers that sense their environment and that implement optimal transmission schemes. It is essential that these autonomous devices cooperate and coordinate their actions, to ensure the reliability of the transmissions and the stability of the network. We study  a point-to-point scenario in which the encoder and the decoder implement decentralized policies that are coordinated. The coordination is measured in terms of  empirical frequency of symbols of source and channel. The encoder and the decoder perform a coding scheme such that the empirical distribution of the symbols is close to a target joint probability distribution. We characterize the set of achievable target probability distributions for a point-to-point source-channel model, in which the encoder is non-causal and the decoder is strictly causal \textit{i.e.}, it returns an action based on the observation  of the past channel outputs. The objectives of the encoder and of the decoder, are captured by some utility function, evaluated with respect to the set of achievable target probability distributions. In this article, we investigate the maximization problem of a utility function that is common to both encoder and decoder. We show that the compression and the transmission of information are particular cases of the empirical coordination. 

\end{abstract}



\begin{IEEEkeywords}
Empirical coordination, game theory, joint source and channel coding.
\end{IEEEkeywords}

%
\IEEEpeerreviewmaketitle

\section{Introduction}\label{sec:Introduction}

A decentralized network is composed of communication devices that sense their environment and that choose autonomously the best transmission scheme to implement. \textcolor[rgb]{0.00,0.00,0.0}{The decision process in large and self-configuring networks composed by different communication technologies is decentralized and does not require a central controller.} However, it is essential that the communication devices cooperate and coordinate their actions, in order to ensure the reliability of the transmissions and the stability of the network. We investigate the problem of the coordination of two autonomous devices, by considering a point-to-point model, represented by Fig. \ref{fig:1CwithChannel}, with an information source and a noisy channel. In the classical scenario, both encoder and decoder have the same objective: to implement a reliable transmission scheme. We wonder how this simple network operates when the devices are autonomous and try to coordinate their actions in order to achieve a broader common objective.


We study this problem using a two-step approach. {First}, we characterize the coordination possibilities available for the encoder and the decoder, by using the concepts of empirical  distribution and empirical frequencies of symbols. Based on their observations, the encoder and the decoder choose the sequences of channel input and decoder's output. We require that the empirical distribution of all the sequences of symbols, converges to a target joint probability distribution. The aim is to determine the minimal amount of information to exchange such that  the symbols of both transmitters  are coordinated  with the symbols of the source and of the channel.  From an information theoretic point of view, this problem is closely related to the joint source-channel coding problem with two-sided state information and correlated source and state \cite{LeTreust(ISIT-TwoSided)15}.  We characterize the set of achievable joint probability distributions using a single-letter information constraint that is related to the compression and to the transmission of information.
\begin{figure}[!ht]
\begin{center}
\psset{xunit=0.9cm,yunit=0.9cm}
\begin{pspicture}(0,-0.35)(8.5,1.7)
\pscircle(0,0.5){0.45}
\psframe(2,0)(3,1)
\pscircle(5,0.5){0.45}
\psframe(7,0)(8,1)
\psline[linewidth=1pt]{->}(0.5,0.5)(2,0.5)
\psline[linewidth=1pt]{->}(3,0.5)(4.5,0.5)
\psline[linewidth=1pt]{->}(5.5,0.5)(7,0.5)
\psline[linewidth=1pt]{->}(8,0.5)(9,0.5)
\rput[u](1,0.8){$U^n$}
\rput[u](3.75,0.8){$X^n$}
\rput[u](6.25,0.8){$Y^{i-1}$}
\rput[u](8.5,0.8){$V_i$}
\rput(0,0.5){$\PP_{\sf{u}}$}
\rput(2.5,0.5){$\C$}
\rput(5,0.5){$\mc{T}_{\sf{y|x}}$}
\rput(7.5,0.5){$\D$}
\end{pspicture}
\caption{The information source has i.i.d. probability distribution $\PP_{\sf{u}}(u)$ and the channel $\mc{T}(y|x)$ is memoryless. The encoder and the decoder implement a coding scheme such that, the empirical frequencies of symbols are close to the target joint probability distribution $\QQ$, defined over the symbols $\mc{U} \times \mc{X} \times \mc{Y} \times \mc{V} $, with high probability. Equivalently, the sequences of symbols $(U^n ,X^n,Y^n,V^n) \in A_{\varepsilon}^{{\star}{n}}(\QQ)$ are jointly typical for the probability distribution $\QQ$, with high probability. The encoder is non-causal $X^n = f(U^n)$ and the decoder is strictly causal $V_i = g_i(Y^{i-1})$, for all instant $i\in\{1, \ldots ,n\}$. We characterize the set of joint probability distributions $\QQ(u,x,y,v)$, that are achievable.}
\label{fig:1CwithChannel}
\end{center}
\end{figure}
{Second}, we consider that the autonomous devices are equipped with some utility function, capturing their objective. The set of achievable values of the utility function is the  image, by the expectation operator, of the set of achievable probability distributions. A utility value is achievable if and only if it corresponds to the expected utility, with respect to some achievable probability distribution. This approach simplifies the optimization of the long-run utility function, since the whole set of possible codes of large block-length reduces to the set of achievable target probability distributions, expressed with a single-letter formulation. As a particular case, our results boil down to the classical results of Shannon \cite{shannon-bell-1948}, when considering the  minimal distortion for the information source or the minimal cost for the channel inputs. In this paper, we consider a  utility function that is common to both encoder and decoder. The problem of strategical coordination, involving distinct utility functions, is considered in \cite{LeTreustTomala(Allerton)16}, using a repeated game approach \cite{AM95}, \cite{MertensSorinZamir15}, \cite{GossnerTomala(RG)07}. 

The notion of target probability distribution has been proposed by Wyner in  \cite{Wyner(CommonInfo)1975} for determining the common information of two correlated random variables. In the framework of quantum coding, the authors of \cite{KramerSavari02} and \cite{KramerSavari07} prove the existence of a code with minimal amount of information  exchanged, such that the empirical distribution of the sequences of symbols is close to the target probability distribution. The problem of empirical coordination is studied in  \cite{CoverPermuter07} for a three-node cascade network, for a multi-agent chain and for a multiple description setting. A stronger definition of coordination is considered in \cite{Cuff08} and \cite{Cuff(DistributedChannelSynth)13} for the problem of simulating and synthesizing a discrete memoryless channel, also related to the ``reverse Shannon Theorem'' \cite{Bennett02} and ``channel resolvability'' \cite{HanVerdu93}. The concept of ``coordination capacity'' is introduced in \cite{CuffPermuterCover10}, as a measure of the minimal amount of  information transmitted,  such that the nodes of a network can coordinate their actions. The authors consider two different notions of coordination, referred to as empirical coordination and strong coordination. For some networks, both notions coincide if the nodes have enough common randomness. Coordination over a network is also related to the multi-agent control problem with common randomness \cite{AnantharamBorkar07}. In \cite{GohariAnantharam11}, the authors investigate strong coordination of the actions, assuming that the nodes have multiple rounds of noise-free communication. Empirical distribution of  sub-codewords of length $k$ of a \textit{good code} is considered  in \cite{WeissmanOrdentlichISIT04} and \cite{WeissmanOrdentlich05},  and the authors prove that it converges to the product of the optimal probability distribution \cite{ShamaiVerdu97}.  Polar codes are under investigation  for empirical coordination in \cite{BlascoThobabenSkoglund12}, and for strong coordination  in \cite{BlochLuzziKliewer12}. In \cite{ChouBlochKliewer15}, the authors provide encoding and decoding algorithms that are based on polar codes, and that achieve the empirical and the strong coordination capacity. In \cite{CerviaLuzziBlochLeTreust(ITW)16}, the authors construct a polar code for empirical coordination with a noisy channel. Empirical coordination for triangular multi-terminal network is investigated in \cite{BereyhiBahramiMirmohseniAref13}. Strong coordination is studied for a multi-hop line network in \cite{VellambiKliewerBloch(ITW)15} and \cite{VellambiKliewerBloch(CISS)16}, for a three-terminal line network in \cite{SatpathyCuff13} and \cite{BlochKliewer13}, for a three-terminal relay network in \cite{BlochKliewer14}, for two-way communication with a relay in \cite{HaddadpourYassaeeGohariAref12}, and for signal's coordination in \cite{CerviaLuzziLeTreustBloch(ISIT)17}. The source coding problem of Ahlswede-K\"{o}rner \cite{Ahlswede-korner-it-1975} is investigated in \cite{GoldfeldPermuterKramer14}, with a  coordination requirement. The results for empirical coordination are extended to general alphabets in \cite{RaginskyISIT10} and \cite{Raginsky13}, by considering standard Borel spaces. The problems of zero-error coordination  \cite{AbroshanGohariJaggi(ITW)15} and  of strong coordination with an evaluation function \cite{OrlitskyRoche01} are both related to graph theory.


Coordination is also investigated in the literature of game theory \cite{GossnerVieille02}, \cite{GossnerTomala06}, \cite{GossnerTomala07}, \cite{GossnerLarakiTomala09}, \cite{GossnerHernandezNeyman06} using the notion of \textit{implementable probability distribution}, that is related to empirical coordination. In \cite{GossnerHernandezNeyman06}, the authors consider a point-to-point scenario with an encoder that observes the sequence of symbols of source, called ``state of the nature'', and that chooses a sequence of actions. The channel is perfect and the decoder is strictly causal \textit{i.e.}, it returns an action based on the observation of the past actions of the encoder and past symbols of the source. The objective is to coordinate the actions of both players together with the symbols of the source. The main difference with the  settings described previously is that the channel inputs are also coordinated with the symbols of the source and the decoder's actions. The encoder chooses a sequence of channel inputs that conveys some information and that is coordinated with the sequences of symbols of source.  The authors  characterize the set of implementable target joint probability distributions and evaluate the long-run utility function of the players, by considering the expected utility. 
The results of \cite{GossnerHernandezNeyman06} are extended  in \cite{SamsonBenjaminISIT13}, \cite{LarrousseLasaulceBloch(IT)14}, by considering a noisy channel. The authors characterize the set of implementable probability distributions and apply their result to the interference channel in which the power control is used to encode embedded data about the channel state information. This approach is further applied to the two-way channel in \cite{LarrousseLasaulceWigger(ITW)15}, and to the case of causal encoding and decoding in \cite{LarrousseLasaulceWigger(ISIT)15}. The results of \cite{GossnerHernandezNeyman06} have also been extended in \cite{Cuff(ImplicitCoordination)11} by considering the notion of empirical coordination and by removing the observation by the decoder of the past symbols of source. The tools for empirical coordination with a cascade of controllers \cite{Cuff(ImplicitCoordination)11} are also used in  \cite{AsnaniPermuterWeissman13}, for  the problem of cooperation in multi-terminal source coding with strictly causal, causal, and non-causal cribbing. In \cite{LetreustZaidiLasaulce(Allerton)11} and \cite[pp. 121]{LetreustThese11}, the authors investigate the empirical correlation for two dependent sources and a broadcast channel with an additional secrecy constraint. The problem of empirical coordination for a joint source-channel coding problem is solved in \cite{CuffSchieler11}, for strictly causal and causal encoder with non-causal decoder. These results are based on hybrid coding \cite{MineroLimKim(ISIT)11}, \cite{LimMineroKim(Allerton)10}, and are closely related to the problem of state communication  under investigation in \cite{ChoudhuriKimMitra10}, \cite{ChoudhuriKimMitra11}, \cite{ChoudhuriMitra(Allerton)12}. The results stated in \cite{CuffSchieler11} are extended in  \cite{LeTreust(ISITfeedbacks)15} by considering channel feedback available at the encoder. Channel feedback improves the coordination possibilities and simplifies the information constraint. For this problem, the authors of \cite{LeTreustBloch(ISIT)16} characterize of the region of achievable triple of information rate, empirical distribution and state leakage, with and without state information and noisy channel feedback.  The problem of empirical coordination for non-causal encoder and decoder is not yet completely solved, but the optimal solutions are characterized in \cite{LeTreust(CorrelationITW)14} for lossless decoding and in \cite{LeTreust(ISIT-TwoSided)15} for perfect channel and for independent source and channel, based the separation result of \cite{MerhavShamai03}. The duality \cite{CoverChiang02} between the channel coding of Gel'fand Pinsker  \cite{gelfand-it-1980} and the source coding  of Wyner Ziv \cite{wyner-it-1976} induces some similarities in the information constraints for lossless decoding \cite{LeTreust(CorrelationITW)14} and for perfect channel \cite{LeTreust(ISIT-TwoSided)15}. This open problem is closely related to the problem of non-causal state communication, under investigation in \cite{SutivongChiangCoverKim05}, \cite{ChoudhuriKimMitra13} and \cite{SutivongPhD03}. The problem of empirical coordination is a first step towards a better understanding of decentralized communication networks, in which the devices have different utility functions \cite{LeTreustTomala(Allerton)16} and choose autonomously the transmission power \cite{LeTreustLasaulce(PowerControlRG)10}, \cite{BelmegaLasaulceDebbah09} and the transmission rates \cite{BerryTse(ShannonMetNash)11}, \cite{PerlazaTandonPoorHan12}, depending on their observation structure \cite{LetreustLasaulce(DGAA)12}.

In this paper, we investigate a point-to-point coordination problem involving an i.i.d. information source and a memoryless channel, represented by Fig. \ref{fig:1CwithChannel}. The encoder and the decoder choose their sequences of actions \textit{i.e.}, channel input and decoder output, so that the empirical distribution of the symbols converges to a target probability distribution. We assume that the decoder is strictly causal \textit{i.e.}, at each instant, it returns a symbol, also called an action, based on the observation of the \textit{past} channel outputs. This on-line coordination assumption  is related to the game theoretical framework \cite{GossnerHernandezNeyman06}, in which the encoder and the decoder are the players that choose their actions simultaneously, based on their past observations. Strictly causal decoding  has no impact on the information constraint for reliable transmission but it modifies the information constraint for empirical coordination. We characterize the set of achievable target joint probability distributions for non-causal encoder and strictly causal decoder and we relate the corresponding information constraint to the previous results from the literature, especially with the problem of source distortion and channel cost  \cite[pp. 47, 57 and 66]{ElGammalKim(book)11}. We analyze the optimization of some utility function over the set of achievable target probability distributions and we prove that this problem is convex. We also characterize the information constraint corresponding to causal decoder instead of strictly causal decoder. In that case, the actions of the decoder may also be coordinated with the \textit{current} channel output.

The article is organized as follows. Sec. \ref{sec:ModelDefinition} presents the channel model under investigation and defines the notion of achievable empirical distribution and strictly causal decoding. In Sec. \ref{sec:SCD}, we characterize the set of achievable target probability distributions  for non-causal encoder and strictly causal decoder. In Sec. \ref{sec:ParticularCases}, we compare our characterization to the previous results of the literature, for perfect channel and for independent random variables of source and channel. In Sec. \ref{sec:ExtensionsSF}, we investigate empirical coordination with source feedforward and in Sec. \ref{sec:ExtensionsT} we characterize the trade-off between empirical coordination and information transmission. In Sec. \ref{sec:Utilities}, we characterize the set of achievable utilities and we prove that the corresponding optimization problem is convex. We investigate two examples: the coordination game in Sec. \ref{sec:ExampleMatchingPennies} and the trade-off between source distortion and channel cost in Sec. \ref{sec:SourceDistortionCost}. In Sec. \ref{sec:CwithChannel}, we characterize the information constraint  for  causal decoding, instead of strictly causal decoding. Sec. \ref{sec:Conclusion} concludes the article. The proof of the main results are presented in App. \ref{sec:ProofDecomposition}-\ref{sec:ProofTheoConvexCD}.

%
%
%
%
%
%
%
%
%
%

\section{System model}\label{sec:ModelDefinition}

The problem under investigation is depicted in Fig. \ref{fig:1CwithChannel}. Capital letters like $U$ denote random variables, calligraphic fonts like $\mc{U}$ denote alphabets and lowercase letters like $u\in\mc{U}$ denote the realizations of random variables. We denote by $U^n$, $X^n$, $Y^n$, $V^n$ the sequences of random variables of the source symbols $u^n=(u_1,\ldots,u_n)\in\mc{U}^n$, of channel inputs $x^n\in\mc{X}^n$, of channel outputs $y^n\in\mc{Y}^n$ and of outputs of the decoder $v^n\in\mc{V}^n$. We assume the sets $\mc{U}$, $\mc{X}$, $\mc{Y}$ and $\mc{V}$ are discrete and $\mc{U}^n$ denotes the $n$-time cartesian product of set $\mc{U}$.  The notation $\Delta(\mc{X})$ stands for the set of the probability distributions over the set $\mc{X}$. The variational distance between two probability distributions $\QQ$ and $\PP$ is denoted by $||\QQ - \PP||_{\sf{tv}}= 1/2\cdot \sum_{x\in\mc{X}} |\QQ(x) - \PP(x)|$, see in \cite[pp. 370]{cover-book-2006} and in \cite[pp. 44]{CsiszarKorner(Book)11}. With a slight abuse of notation, we denote by $\QQ(x,v) \in \Delta(\mc{X}\times \mc{V})$ the joint probability distribution over $\mc{X}\times \mc{V}$. The notation $\UN(v=u)$ denotes the indicator function, that is equal to 1 if $v=u$ and 0 otherwise. We use the notation $Y  -\!\!\!\!\minuso\!\!\!\!-X    -\!\!\!\!\minuso\!\!\!\!-  U$ to denote the Markov chain property: $\PP(y|x,u) = \PP(y|x)$ for all $(u,x,y)$. The notation $A_{\varepsilon}^{{\star}{n}}(\QQ)$ denotes the set of sequences $(u^n,x^n,y^n,v^n)$ that are jointly typical with tolerance $\varepsilon>0$, for the probability distribution $\QQ \in \Delta(\mc{U} \times \mc{X}\times \mc{Y}\times \mc{V} )$, as stated  in \cite[pp. 25]{ElGammalKim(book)11}. The information source has i.i.d. probability distribution $\PP_{\sf{u}}(u)$  and the channel is memoryless  with conditional probability distribution $\mc{T}(y|x)$. The statistics of $\PP_{\sf{u}}(u)$ and $\mc{T}(y|x)$ are known by both encoder $\C$ and decoder $\D$.

Coding Process: A sequence of source symbols $u^n\in\mc{U}^n$ is drawn from the i.i.d. probability distribution denoted by $\PP_{\sf{u}}^{\times n} (u^n) \in \Delta(\mc{U}^n)$ and   defined by equation \eqref{eq:SourceProba}. The non-causal encoder $\C$ observes $u^n\in\mc{U}^n$  and sends a sequence of channel inputs $x^n\in\mc{X}^n$. The sequence of channel outputs $y^n \in \mc{Y}^n$ is drawn according to the discrete and memoryless channel whose i.i.d. conditional probability distribution is denoted by $\mc{T}^{\times n}(y^n|x^n) : \mc{X}^n \rightarrow \Delta(\mc{Y}^n)$ and defined by equation \eqref{eq:TransitionProba}. 
\begin{eqnarray}
\mc{P}_{\sf{u}}^{\times n}(u^n)&=& \prod_{i=1}^n \mc{P}_{\sf{u}}(u_i),\label{eq:SourceProba}\\
\mc{T}^{\times n}(y^n|x^n)&=& \prod_{i=1}^n \mc{T}(y_i|x_i).\label{eq:TransitionProba}
\end{eqnarray}

We consider that the decoder $\D$ is strictly causal. At instant $i\in\{1,\ldots,n\}$, it observes the sequence of past channel outputs $y^{i-1} = (y_1,\ldots,y_{i-1}) \in \mc{Y}^{i-1}$ and returns an output symbol $v_i \in \mc{V}$. The objective of this work is to characterize the set of empirical distributions $\QQ \in \Delta(\mc{U} \times \mc{X} \times \mc{Y} \times \mc{V}  ) $ that are achievable \textit{i.e.}, for which the encoder and the decoder can implement sequences of symbols  $(U^n ,X^n,Y^n,V^n) \in A_{\varepsilon}^{{\star}{n}}(\QQ)$ that are jointly typical for the probability distribution $\QQ$, with high probability. 

\begin{definition}\label{def:Code}
A  code $c\in\mc{C}(n)$ with non-causal encoder and strictly-causal decoder is a tuple of functions $c=(f,\{g_i\}_{i=1}^n)$ defined 
by:
\begin{eqnarray}
f &:& \mc{U}^n  \longrightarrow \mc{X}^n ,\label{eq:1CausalCodeSource1}\\
g_i &:& \mc{Y}^{i-1}  \longrightarrow \mc{V},\qquad i \in\{ 1,\ldots,n\} .\label{eq:1CausalCodeSource2}
\end{eqnarray}
We denote by $\textsf{N}(u|u^n) = \sum_{i=1}^n \UN(u_i=u)$ the number of occurrences of the symbol $u \in \mc{U}$ in the sequence $u^n \in \mc{U}^n$. The empirical distribution ${Q}^n \in \Delta(\mc{U}\times \mc{X}\times \mc{Y} \times\mc{V})$ of sequences $(u^n,x^n,y^n,v^n)\in \mc{U}^n\times \mc{X}^n \times \mc{Y}^n \times\mc{V}^n$ is defined 
by:
\begin{eqnarray}
{Q}^n(u,x,y,v) &=& \frac{\textsf{N}(u,x,y,v|u^n,x^n,y^n,v^n)}{n},\nonumber \\
 \forall  (u,x,y,v)& \in& \mc{U}\times \mc{X}\times \mc{Y} \times\mc{V}.\label{eq:EmpiricalDistribution}
\end{eqnarray}
The probability distributions of the source $\PP_{\sf{u}}(u)$, of the channel $\mc{T}(y|x)$ and the code $c \in \mc{C}(n)$ generate the random sequences of symbols $(U^n,X^n,Y^n,V^n)$. Hence, the empirical distribution $Q^n \in \Delta(\mc{U}\times \mc{X}\times \mc{Y} \times\mc{V})$ is a random variable. For an $\varepsilon>0$ and a target single-letter probability distribution $\QQ \in \Delta(\mc{U} \times \mc{X} \times \mc{Y} \times \mc{V}  )$, the error probability $\PP_{\textsf{e}}(c)$ of the code $c\in\mc{C}(n)$ is defined as:
\begin{eqnarray}
\PP_{\textsf{e}}(c) = \PP_c\bigg(\Big|\Big|Q^n - \QQ \Big|\Big|_{\sf{tv}}> \varepsilon\bigg).\label{eq:ErrorProba}
\end{eqnarray}
\end{definition}
In this scenario, the error probability of the code is based on the empirical coordination property instead of the lossless or lossy reconstruction of the symbols of source. 



\begin{definition}\label{def:AchievableDistribution}
The target probability distribution $\QQ \in  \Delta(\mc{U} \times \mc{X} \times \mc{Y} \times \mc{V}  )$ is achievable if for all $\varepsilon>0$, there exists an $\bar{n}\in \N$, such that for all $n \geq \bar{n}$, there exists a code $c\in\mc{C}(n)$ with strictly-causal decoder that satisfies:
\begin{eqnarray}
\PP_{\textsf{e}}(c)  = \PP_c\bigg(\Big|\Big|Q^n - \QQ \Big|\Big|_{\sf{tv}}> \varepsilon\bigg)\leq \varepsilon.\label{eq:AchievableDistribution}
\end{eqnarray}
\end{definition}
The notion of convergence of the Definition \ref{def:AchievableDistribution} is based on the  Ky Fan metric and it is equivalent to the convergence in probability, see in \cite[pp. 289, Theorem 9.2.2]{Dudley02}. If the error probability $\PP_{\textsf{e}}(c)$ is small, the empirical distribution $Q^n(u,x,y,v)$ is close to the probability distribution $\QQ(u,x,y,v)$, with high probability. In that case, the sequences of symbols are coordinated empirically. 



\section{Main results}\label{sec:MainResult}
\subsection{Characterization of the set of achievable probability distributions}\label{sec:SCD}

We fix the probability distribution of the source $\PP_{\sf{u}}(u) $ and the conditional probability distribution of the channel  $\mc{T}(y|x)$. 
For a non-causal encoder and a strictly causal decoder, we characterize the set of achievable probability distributions $\QQ(u,x,y,v) \in  \Delta(\mc{U} \times \mc{X} \times \mc{Y} \times \mc{V}  )$.

\begin{theorem}[Strictly causal decoding] \label{theo:1CwithChannel}
$\qquad$\\
A joint probability distribution $\QQ(u,x,y,v) \in  \Delta(\mc{U} \times \mc{X} \times \mc{Y} \times \mc{V}  )$ is achievable if and only if the two following conditions are satisfied:\\
1) It decomposes as follows:
\begin{eqnarray}
\QQ(u,x,y,v) = \PP_{\sf{u}}(u)   \times \QQ(x,v | u) \times  \mc{T}(y | x ),\label{eq:1CwithChannel0}
\end{eqnarray}
2) There exists an auxiliary random variable $W \in \mc{W}$ such that: 
\begin{eqnarray}
\max_{{\QQ}\in \Q} \bigg( I( W;Y  |V )  -   I( U ; V  ,W  )  \bigg) \geq 0,
\label{eq:1CwithChannel1}
\end{eqnarray}
where $\Q$ is the set of  joint probability distributions ${\QQ}(u,x,w,y,v)\in  \Delta(\mc{U} \times \mc{X}  \times \mc{W} \times \mc{Y} \times \mc{V}  )$ that decompose as follows:
\begin{eqnarray}
\PP_{\sf{u}}(u)   \times \QQ(x,v | u)  \times \QQ(w| u,x,v) \times  \mc{T}(y | x ) \label{eq:1CwithChannel5}
\end{eqnarray}
and the support of $W$ is bounded by: $ |\mc{W}| \leq  |\mc{U} \times   \mc{X} \times \mc{V}| +1$. 
\end{theorem}


The proof of Theorem \ref{theo:1CwithChannel} is stated in App. \ref{sec:ProofDecomposition}- \ref{sec:CardinalityBound}. App. \ref{sec:ProofDecomposition} proves the decomposition of the probability distribution \eqref{eq:1CwithChannel0}, App. \ref{sec:ProofAchievability} proves the achievability result when the information constraint is strictly positive. The case of information constraint equal to zero is stated in App. \ref{sec:ProofEqualityIC}. App. \ref{sec:ProofConverse} provides the converse result and App. \ref{sec:CardinalityBound} proves the upper bound on the cardinality of the support of the auxiliary random variable $W$. The information constraint \eqref{eq:1CwithChannel1} was also obtained in \cite{LarrousseLasaulceBloch(IT)14} and \cite{LarrousseLasaulceWigger(ISIT)15}, for  implementable probability distribution instead of empirical coordination. Since the target probability distribution $\PP_{\sf{u}}(u)   \times  {\QQ}(x,v | u) \times  \mc{T}(y | x )$ is fixed, the maximum in equation \eqref{eq:1CwithChannel1} can be also taken over the conditional probability distributions ${\QQ}(w | u,x,v)$ such that $ |\mc{W}| \leq  |\mc{U} \times   \mc{X} \times \mc{V}| +1$.

\textit{Proof ideas:} 

$\bullet$ The achievability proof is inspired by the block-Markov code represented by Fig. 6 in \cite{LetreustZaidiLasaulce(Allerton)11}. It is the concatenation of a lossy source coding, for the block $b+1$ and a channel coding with two-sided state information, for the block $b$.  The auxiliary random variable $W$ characterizes the trade-off between the correlation of  the channel input $X$ with the pair $(U,V)$, considered as channel-states.

$\bullet$ The converse is based on Csisz\'{a}r Sum Identity \cite[pp. 25]{ElGammalKim(book)11}, with the identification of the auxiliary random variable $W_{i} = (Y^{i-1}  , U^n_{i+1})$, and using Fano's inequality for empirical coordination.

\begin{example}[Empirical distribution of sequences]\label{Example:EmpDistr}
$\qquad$\\
We consider a binary information source with uniform probability distribution $\PP(U=0) = \PP(U=1)=0.5$, and a perfect channel $Y=X$. Suppose that a coding scheme induces the following sequences of symbols $ \textcolor[rgb]{0.00,0,0.00}{(U^n,X^n,V^n)}$, of length $n=12$:
\begin{small}
 \begin{figure}[ht]
\begin{center}
\psset{xunit=0.5cm,yunit=0.5cm}
\begin{pspicture}(0,1)(12,4)
\psframe[linecolor = black](0,1)(12,4)
\psline[linecolor = black](0,2)(12,2)
\psline[linecolor = black](0,3)(12,3)
\psline[linecolor = black](0,4)(12,4)
\psline[linecolor = black, linewidth = 1pt](1,4)(1,1)
\psline[linecolor = black, linewidth = 1pt](2,4)(2,1)
\psline[linecolor = black, linewidth = 1pt](3,4)(3,1)
\psline[linecolor = black](4,4)(4,1)
\psline[linecolor = black](5,4)(5,1)
\psline[linecolor = black](6,4)(6,1)
\psline[linecolor = black](7,4)(7,1)
\psline[linecolor = black](8,4)(8,1)
\psline[linecolor = black](9,4)(9,1)
\psline[linecolor = black](10,4)(10,1)
\psline[linecolor = black](11,4)(11,1)
\rput(-1,3.5){$ \textcolor[rgb]{0.00,0,0.00}{U^n}$}
\rput(-1,2.5){$ \textcolor[rgb]{0.00,0,0.00}{X^n}$}
\rput(-1,1.5){$ \textcolor[rgb]{0.00,0,0.00}{V^n}$}
\rput(0.5,3.5){$0$}
\rput(0.5,2.5){$0$}
\rput(0.5,1.5){$0$}
\rput(1.5,3.5){$0$}
\rput(1.5,2.5){$1$}
\rput(1.5,1.5){$0$}
\rput(2.5,3.5){$1$}
\rput(2.5,2.5){$1$}
\rput(2.5,1.5){$0$}
\rput(3.5,3.5){$1$}
\rput(3.5,2.5){$1$}
\rput(3.5,1.5){$1$}
\rput(4.5,3.5){$0$}
\rput(4.5,2.5){$0$}
\rput(4.5,1.5){$1$}
\rput(5.5,3.5){$0$}
\rput(5.5,2.5){$0$}
\rput(5.5,1.5){$0$}
\rput(6.5,3.5){$1$}
\rput(6.5,2.5){$1$}
\rput(6.5,1.5){$1$}
\rput(7.5,3.5){$1$}
\rput(7.5,2.5){$0$}
\rput(7.5,1.5){$0$}
\rput(8.5,3.5){$0$}
\rput(8.5,2.5){$0$}
\rput(8.5,1.5){$0$}
\rput(9.5,3.5){$0$}
\rput(9.5,2.5){$1$}
\rput(9.5,1.5){$1$}
\rput(10.5,3.5){$1$}
\rput(10.5,2.5){$0$}
\rput(10.5,1.5){$1$}
\rput(11.5,3.5){$1$}
\rput(11.5,2.5){$1$}
\rput(11.5,1.5){$1$}
\rput(5,0.5){$n=12$}
\psframe*[linecolor = yellow](0,1)(1,4)
\psframe[linecolor = black](0,1)(1,4)
\psline[linecolor = black](0,2)(1,2)
\psline[linecolor = black](0,3)(1,3)
\small
\rput(0.5,3.5){$0$}
\rput(0.5,2.5){$0$}
\rput(0.5,1.5){$0$}
\psframe*[linecolor = yellow](5,1)(6,4)
\psframe[linecolor = black](5,1)(6,4)
\psline[linecolor = black](5,2)(6,2)
\psline[linecolor = black](5,3)(6,3)
\rput(5.5,3.5){$0$}
\rput(5.5,2.5){$0$}
\rput(5.5,1.5){$0$}
\psframe*[linecolor = yellow](8,1)(9,4)
\psframe[linecolor = black](8,1)(9,4)
\psline[linecolor = black](8,2)(9,2)
\psline[linecolor = black](8,3)(9,3)
\rput(8.5,3.5){$0$}
\rput(8.5,2.5){$0$}
\rput(8.5,1.5){$0$}
\psframe*[linecolor = newlightblue](3,1)(4,4)
\psframe[linecolor = black](3,1)(4,4)
\psline[linecolor = black](3,2)(4,2)
\psline[linecolor = black](3,3)(4,3)
\small
\rput(3.5,3.5){$1$}
\rput(3.5,2.5){$1$}
\rput(3.5,1.5){$1$}
\psframe*[linecolor = newlightblue](6,1)(7,4)
\psframe[linecolor = black](6,1)(7,4)
\psline[linecolor = black](6,2)(7,2)
\psline[linecolor = black](6,3)(7,3)
\rput(6.5,3.5){$1$}
\rput(6.5,2.5){$1$}
\rput(6.5,1.5){$1$}
\psframe*[linecolor = newlightblue](11,1)(12,4)
\psframe[linecolor = black](11,1)(12,4)
\psline[linecolor = black](11,2)(12,2)
\psline[linecolor = black](11,3)(12,3)
\rput(11.5,3.5){$1$}
\rput(11.5,2.5){$1$}
\rput(11.5,1.5){$1$}
\end{pspicture}
\end{center}
\end{figure}
\end{small}\\
The empirical distribution $\textcolor[rgb]{0,0,0}{Q^n} \in \Delta( \mc{U} \times   \mc{X} \times   \mc{V} )$ induced by the sequences of symbols $ \textcolor[rgb]{0.00,0.0,0.00}{(U^n,X^n,V^n)}$, is represented by Fig. \ref{fig:Distribution}. 
\begin{small}
 \begin{figure}[ht]
\begin{center}
\psset{xunit=0.75cm,yunit=0.75cm}
\begin{pspicture}(13,0.5)(23,4.3)
\psframe[linecolor = black](14.5,1)(17.5,4)
\psframe[linecolor = black](19,1)(22,4)
\psline[linecolor = black](16,1)(16,4)
\psline[linecolor = black](20.5,1)(20.5,4)
\psline[linecolor = black](14.5,2.5)(17.5,2.5)
\psline[linecolor = black](19,2.5)(22,2.5)
\footnotesize
\rput(13.75,1.75){$X=1$}
\rput(13.75,3.25){$X=0$}
\rput(18.25,1.75){$X=1$}
\rput(18.25,3.25){$X=0$}
\rput(16,0.4){$U=0$}
\rput(20.5,0.4){$U=1$}
\rput(19.75,4.25){$V=0$}
\rput(21.25,4.25){$V=1$}
\rput(15.25,4.25){$V=0$}
\rput(16.75,4.25){$V=1$}
\normalsize
\rput(15.25,3.25){$\textcolor[rgb]{1,0,0}{\frac{3}{12}}$}
\rput(21.25,1.75){$\textcolor[rgb]{1,0,0}{\frac{3}{12}}$}
\rput(21.25,3.25){$\textcolor[rgb]{0,0,1}{\frac{1}{12}}$}
\rput(15.25,1.75){$\textcolor[rgb]{0,0,1}{\frac{1}{12}}$}
\rput(16.75,3.25){$\textcolor[rgb]{0,0,1}{\frac{1}{12}}$}
\rput(16.75,1.75){$\textcolor[rgb]{0,0,1}{\frac{1}{12}}$}
\rput(19.75,3.25){$\textcolor[rgb]{0,0,1}{\frac{1}{12}}$}
\rput(19.75,1.75){$\textcolor[rgb]{0,0,1}{\frac{1}{12}}$}
\end{pspicture}
\caption{Empirical distribution $ \textcolor[rgb]{0,0,0}{Q^n}$ of the sequences of symbols $ \textcolor[rgb]{0,0,0}{(U^n,X^n,V^n)}$.}\label{fig:Distribution}
\end{center}
\end{figure}
\end{small}
We evaluate the information constraint \eqref{eq:1CwithChannel1} corresponding to the empirical distribution $ \textcolor[rgb]{0,0,0}{Q^n}$: 
\begin{align}
&\max_{{\QQ}\in \Q} \big( I( W;Y  |V )  -   I( U ; V  ,W  )  \big) \nonumber \\
&= H(X |V)   - I(U ;X,V ) =H(X,U|V) - H(U) \\
&= 1 + \frac{1}{2} \cdot \log_2(3) - 1 = \frac{1}{2} \cdot \log_2(3)\simeq 0.79\geq0.\label{eq:PositiveIC}
\end{align}
The first equality comes from the hypothesis of perfect channel, see \cite{GossnerHernandezNeyman06} and Corollary \ref{coro:PerfectChannel}. The empirical distribution $ \textcolor[rgb]{0,0,0}{Q^n}$ has a positive information constraint \eqref{eq:PositiveIC}.
\end{example}

\begin{remark}[Markov chain]
The strictly causal decoding $V_i = g_i(Y^{i-1})$ induces a Markov chain $Y  -\!\!\!\!\minuso\!\!\!\!-X    -\!\!\!\!\minuso\!\!\!\!-  V$, since at each instant $i\in\{1,\ldots,n\}$, the symbol $V_i$ is generated by the decoder before it observes $Y_i$.
\end{remark}

\begin{remark}[Asynchronous source and channel]\label{remark:bandwidth}
We consider the setting of bandwidth expansion and compression, for which $k_s \in \N $ symbols of source are synchronized with $k_c \in \N $ channel uses. When we introduce super-symbols of source $(\tilde{U}, \tilde{V})  = ( U^{k_s} ,  V^{k_s}) $ corresponding to the sequences of length  $k_s \in \N$ and super-symbols of channel $(\tilde{X}, \tilde{Y})  = ( X^{k_c} ,  Y^{k_c}) $ corresponding to the sequences of length $k_c \in \N$, Theorem III.1 characterizes the achievable empirical distributions between the super-symbols of source and channel $(\tilde{U},\tilde{X}, \tilde{Y}, \tilde{V})$.
\end{remark}

We compare the result of Theorem \ref{theo:1CwithChannel} with previous results stated in the literature.



\subsection{Particular cases}\label{sec:ParticularCases}

The case of perfect channel $Y=X$ was characterized for \textit{implementable probability distribution} in \cite{GossnerHernandezNeyman06} and for empirical coordination in \cite{Cuff(ImplicitCoordination)11}. When the channel is perfect, the information constraint \eqref{eq:1CwithChannel1} of Theorem \ref{theo:1CwithChannel} reduces to the one of \cite{GossnerHernandezNeyman06} and \cite{Cuff(ImplicitCoordination)11}.

\begin{corollary}[Perfect channel]\label{coro:PerfectChannel}
We consider a perfect channel $Y  =X $. The information constraint \eqref{eq:1CwithChannel1} of Theorem \ref{theo:1CwithChannel} reduces to:
\begin{align}
&\max_{{\QQ}\in \Q} \bigg( I( W;Y  |V )  -   I(  U ; V  ,W  )  \bigg)\nonumber \\
&= H(X |V)   - I(U ;X,V ).
\end{align}
The optimal auxiliary random variable is $W =X $.
\end{corollary}

\begin{proof}[Corollary \ref{coro:PerfectChannel}] 
We consider the information constraint \eqref{eq:1CwithChannel1} of Theorem \ref{theo:1CwithChannel}, with a perfect channel $Y  =X $.
\begin{align}
&I( W;X  |V )  -   I(  U ; V  ,W  )\\
&=  I( W;X  |V )  -   I(  U ; W | V ) -  I(  U ; V  ) \\
&=  H( W| V, U )  -   H( W| V, X )-  I(  U ; V  ) \\
&=  I( W;X  |V, U )  -   I(  W ; U | V, X ) -  I(  U ; V  ) \\
&\leq  H(X  |V, U )  -  I(  U ; V  )  \\
&=  H(X  |V )  -  I(  U ; X,V  ) .
\end{align}
For all auxiliary random variable $W$, the information constraint stated in \cite{GossnerHernandezNeyman06}, \cite{Cuff(ImplicitCoordination)11} is larger than equation \eqref{eq:1CwithChannel1} of Theorem \ref{theo:1CwithChannel}. We conclude the proof of Corollary \ref{coro:PerfectChannel}, by replacing the auxiliary random variable $W =X $, by the channel input.
\begin{eqnarray}
 I( W;X  |V )  -   I(  U ; V  ,W  )  = H(X |V)   - I(U ;X,V )  .
\end{eqnarray}
\end{proof}

The joint source-channel coding result of Shannon \cite{shannon-bell-1948} states that the source $U$ can be recovered by the decoder if and only if:
\begin{eqnarray}
 \max_{\PP(x)} I(X;Y) - H(U) \geq0. \label{eq:ShannonLossless}
\end{eqnarray}
We denote by $\PP^{\star}(x) \in \Delta(\mc{X})$ the probability distribution that achieves the maximum in equation \eqref{eq:ShannonLossless}. Although the sequences of symbols $(U^n,X^n,Y^n,V^n)$ satisfy the Markov chain $U^n  -\!\!\!\!\minuso\!\!\!\!- X^n   -\!\!\!\!\minuso\!\!\!\!-  Y^n  -\!\!\!\!\minuso\!\!\!\!-  V^n $, the empirical distribution writes as a product:
\begin{align}
&\QQ(u,v) \times \QQ(x,y) \nonumber \\ 
&=  \PP_{\sf{u}} (u)\times \UN(v=u)\times\PP^{\star}(x) \times \mc{T}(y|x) ,
\end{align}
In that case, the random variables of the source $(U,V)$ are independent of the random variable of the channel $(X,Y)$.
Corollary \ref{coro:Shannon} establishes that the information constraint \eqref{eq:1CwithChannel1} of Theorem \ref{theo:1CwithChannel} reduces to the one of Shannon \cite{shannon-bell-1948}, when the target distribution $\QQ(u,x,y,v) =\QQ(u,v) \times \QQ(x,y)$ decomposes as a product.

\begin{corollary}\label{coro:Shannon}
Suppose that the random variables $(U,V)$ are independent of $(X,Y)$. The information constraint \eqref{eq:1CwithChannel1} of Theorem \ref{theo:1CwithChannel} reduces to:
\begin{align}
&\max_{{\QQ}\in \Q} \bigg( I( W;Y  |V )  -   I(  U ; V  ,W  )  \bigg) \nonumber \\
&=  \quad I(X  ;Y  )   -  I(U ;V ) .\label{eq:ShannonIC}
\end{align}
The optimal auxiliary random variable is $W =X $.
\end{corollary}

\begin{proof}
We consider the following equations:
\begin{align}
&\max_{{\QQ}\in \Q} \bigg( I( W;Y  |V )  -   I(  U ; V  ,W  )  \bigg) \nonumber \\
&=\max_{{\QQ}\in \Q} \bigg( I( W;Y  |V )  -   I(  U ; W  |V )  \bigg)  -     I(  U ; V   )\label{eq:InequalityIC1}  \\
&\leq\max_{{\QQ}\in \Q} \bigg( I( W,V;Y   )    \bigg)  -     I(  U ; V   )\label{eq:InequalityIC1b}  \\
&\leq  \quad I(X  ;Y  )   -  I(U ;V ) .\label{eq:InequalityIC2}
\end{align}
Equation \eqref{eq:InequalityIC2} comes from the Markov chain property $Y  -\!\!\!\!\minuso\!\!\!\!-X    -\!\!\!\!\minuso\!\!\!\!-  (U,V,W)$ induced by the memoryless channel. We conclude the proof of Corollary \ref{coro:Shannon} by choosing the auxiliary random variable $W =X $, independent of $(U ,V )$:
\begin{align}
 I( W;Y  |V )  -  I(  U ; V  ,W  )  
 &=  I( X;Y  |V )  -  I(  U ; V  ,X  ) \nonumber \\
  &=   I( X;Y   )  -  I(  U ; V    ).
\end{align}
\end{proof}

Corollary \ref{coro:Shannon} also proves that strictly causal decoding has no impact on the information constraint stated by Shannon in \cite{shannon-bell-1948}. In fact the information constraint $I(X  ;Y  )   -  I(U ;V )\geq0$ characterizes the optimal solution for non-causal encoding and strictly causal, causal or non-causal decoding. This is not true anymore when the random variables $(X,Y)$ are empirically coordinated with $(U,V)$.

\begin{remark}
\bd{(Product of target distributions vs. separation source-channel)}
Corollary \ref{coro:Shannon} shows that when the target distribution $\QQ(u,v)\times \QQ(x,y)$ is a product, then separation of source coding and channel coding is optimal. This remark also holds when considering two-sided channel state information, see \cite[Theorem IV.2]{LeTreust(ISIT-TwoSided)15}. We can ask whether, for the point-to-point model, the optimality of the separation of source coding and channel coding is equivalent to the decomposition of the target distribution into a product.
\end{remark}

\begin{remark}
\label{remark:CoordRestrictive}
\bd{(Coordination is more restrictive than information transmission)}
Equation \eqref{eq:InequalityIC2} implies that the information constraint corresponding to $\QQ(x,v|u) $ is lower than the one corresponding to the product of marginals $\QQ(x) \times \QQ(v|u) $. We conclude that the empirical coordination is more restrictive than lossless or lossy transmission of the information source. 
\end{remark}


\subsection{Empirical coordination with source feedforward}\label{sec:ExtensionsSF}

We consider the scenario with \textit{source feedforward}  corresponding to the definition \ref{def:CodeTransmission}, represented by Fig. \ref{fig:feedforward}.

\begin{figure}[!ht]
\begin{center}
\psset{xunit=0.9cm,yunit=0.9cm}
\begin{pspicture}(0.3,-0.5)(8.5,1.5)
\pscircle(0,0.5){0.449}
\psframe(2,0)(3,1)
\pscircle(5,0.5){0.45}
\psframe(7,0)(8,1)
\psline[linewidth=1pt]{->}(0.5,0.5)(2,0.5)
\psline[linewidth=1pt]{->}(3,0.5)(4.5,0.5)
\psline[linewidth=1pt]{->}(5.5,0.5)(7,0.5)
\psline[linewidth=1pt]{->}(8,0.5)(9.3,0.5)
\psline[linewidth=1pt]{->}(0,0)(0,-0.5)(7.5,-0.5)(7.5,0)
\rput[u](1,0.8){$U^n$}
\rput[u](3.75,0.8){$X^n$}
\rput[u](6.25,0.8){$Y^{i-1}$}
\rput[u](6.25,-0.2){$U^{i-1}$}
\rput[u](8.7,0.8){$V_i  $}
\rput(0,0.5){$\PP_{\sf{u}}$}
\rput(2.5,0.5){$\C$}
\rput(5,0.5){$\mc{T}_{\sf{y|x}}$}
\rput(7.5,0.5){$\D$}
\end{pspicture}
\caption{Source feedforward: the decoder generates a symbol $V_i$ based on the observation of the pair of sequences $(Y^{i-1},U^{i-1})$. }\label{fig:feedforward}
\end{center}
\end{figure}

\begin{definition}\label{def:CodeTransmission}
A  code $c\in\mc{C}_{\sf{s}}(n)$ with source feedforward is defined by:
\begin{eqnarray}
f &:& \mc{U}^n   \longrightarrow \mc{X}^n ,\label{eq:CodeTransS1}\\
g_i &:& \mc{Y}^{i-1}  \times \mc{U}^{i-1}\longrightarrow \mc{V},\qquad i \in \{ 1,\ldots,n\},\label{eq:CodeTransS2}.\label{eq:CodeTrans3}
\end{eqnarray}
A target probability distribution $\QQ(u,x,y,v)$ is achievable if
\begin{eqnarray}
&&\forall \varepsilon>0,\;\; \exists\bar{n}\in \N,\;\; \forall n\geq \bar{n},\;\; \exists c\in\mc{C}_{\sf{s}}(n),\nonumber\\
&&\PP_{\textsf{e}}(c)  = \PP_c\bigg(\Big|\Big|Q^n - \QQ \Big|\Big|_{\sf{tv}}> \varepsilon\bigg)\leq \varepsilon.\label{eq:AchievableDistributionS}
\end{eqnarray}
\end{definition}

The set of implementable probability distributions was characterized for a perfect channel in \cite{GossnerHernandezNeyman06} and for a noisy channel in \cite{LarrousseLasaulceBloch(IT)14}, where the target probability distribution $\PP_{\sf{u}} (u)\times \QQ(x,v|u) \times \mc{T}(y|x)$ is achievable if and only if: $I(X;Y|U,V)\geq I(U;V)$.
\begin{corollary}[Source feedforward]\label{coro:FeedbackSource}
We consider a probability distribution $\PP_{\sf{u}} (u)\times \QQ(x,v|u) \times \mc{T}(y|x)$. Then we have: 
\begin{eqnarray}
&&\max_{{\QQ}\in \Q} \bigg( I( W; U,Y  |V )  -   I(  U ; V  ,W  )  \bigg)\label{eq:ICfeedforward2}\\
 &=&I(X  ;Y  |   U  ,V )   - I(U ;V )\label{eq:ICfeedforward1}.
\end{eqnarray}
Here $\Q$  is the set of probability distributions ${\QQ}   \in  \Delta(\mc{U}   \times \mc{W} \times \mc{X} \times \mc{Y} \times \mc{V}  )$ defined in Theorem \ref{theo:1CwithChannel}. The optimal auxiliary random variable  in \eqref{eq:ICfeedforward2} is $W =X $.
\end{corollary}

The main difference with the problem stated in Sec. \ref{sec:MainResult} is that the decoder observes the sequences $(Y^{i-1}, U^{i-1})$ instead of $Y^{i-1}$. If we replace the symbol $Y$ by the pair $(Y,U)$ in the information constraint \eqref{eq:1CwithChannel1} of Theorem \ref{theo:1CwithChannel}, then we obtain equation \eqref{eq:ICfeedforward2} that boils down to the information constraint \eqref{eq:ICfeedforward1} of \cite{LarrousseLasaulceBloch(IT)14}.

\begin{proof}[Corollary \ref{coro:FeedbackSource}]
We have the following equations:
\begin{align}
&   \max_{{\QQ}\in \Q} \bigg( I( W;Y , U  |V )  -   I(  U ; V  ,W  )  \bigg) \\
&=  \max_{{\QQ}\in \Q} \bigg(  I( W;Y | U  ,V ) +  I( W; U  |V ) -   I(  U ; V  ,W  )  \bigg)\\
&=  \max_{{\QQ}\in \Q} \bigg(  I( W;Y | U  ,V )  -   I(  U ; V  )  \bigg)\\
&=   I( X;Y | U  ,V )  -   I(  U ; V  )  .
\end{align}
The Markov chain of the channel $Y  -\!\!\!\!\minuso\!\!\!\!-X    -\!\!\!\!\minuso\!\!\!\!-  (W ,U,V)$ implies that the term $ I( W;Y | U  ,V ) $ is maximal for the auxiliary random variable $W =X$. This concludes the proof of Corollary \ref{coro:FeedbackSource}.
\end{proof}

\subsection{Trade-off between empirical coordination and information transmission}\label{sec:ExtensionsT}

We investigate the trade-off between reliable transmission of a message $M\in \mc{M}$ and empirical coordination, as depicted in Fig. \ref{fig:CoordTrans}. We consider a positive target information rate $\textsf{R} \geq 0$  and a target joint probability distribution that decomposes as:  $\QQ(u,x,y,v)  = \PP_{\sf{u}}(u)   \times \QQ(x,v | u) \times  \mc{T}(y | x )$.  Corollary \ref{coro:CoordTrans} characterizes the set of achievable pairs of rate and probability distribution $(\textsf{R} , \QQ)$. 

\begin{figure}[!ht]
\begin{center}
\psset{xunit=0.9cm,yunit=0.9cm}
\begin{pspicture}(0.3,0)(8.5,2)
\pscircle(0,0.5){0.449}
\pscircle(0,1.5){0.449}
\psframe(2,0)(3,1)
\pscircle(5,0.5){0.45}
\psframe(7,0)(8,1)
\psline[linewidth=1pt]{->}(0.5,0.5)(2,0.5)
\psline[linewidth=1pt]{->}(3,0.5)(4.5,0.5)
\psline[linewidth=1pt]{->}(5.5,0.5)(7,0.5)
\psline[linewidth=1pt]{->}(8,0.5)(9.3,0.5)
\psline[linewidth=1pt]{->}(0.5,1.5)(2.5,1.5)(2.5,1)
\rput[u](1,0.8){$U^n$}
\rput[u](1,1.8){$M$}
\rput[u](3.75,0.8){$X^n$}
\rput[u](6.25,0.8){$Y^{i-1}$}
\rput[u](8.7,0.8){$(V_i ,\hat{M}) $}
\rput(0,0.5){$\PP_{\sf{u}}$}
\rput(0,1.5){$\PP_{M}$}
\rput(2.5,0.5){$\C$}
\rput(5,0.5){$\mc{T}_{\sf{y|x}}$}
\rput(7.5,0.5){$\D$}
\end{pspicture}
\caption{Simultaneous information transmission and empirical coordination.}\label{fig:CoordTrans}
\end{center}
\end{figure}

\begin{definition}\label{def:CodeTransmission}
A  code $c\in\mc{C}_{\sf{i}}(n)$ with  information transmission and strictly-causal decoder is a tuple of functions $c=(f,\{g_i\}_{i=1}^n, g)$ defined by equations \eqref{eq:CodeTrans1}, \eqref{eq:CodeTrans2} and \eqref{eq:CodeTrans3} .
\begin{eqnarray}
f &:& \mc{U}^n \times \mc{M}  \longrightarrow \mc{X}^n ,\label{eq:CodeTrans1}\\
g_i &:& \mc{Y}^{i-1}  \longrightarrow \mc{V},\qquad i \in \{ 1,\ldots,n\},\label{eq:CodeTrans2}\\
g &:& \mc{Y}^{n}  \longrightarrow  \mc{M} .\label{eq:CodeTrans3}
\end{eqnarray}
\end{definition}

\begin{definition}
The pair of rate and probability distribution $(\textsf{R} , \QQ)$ is achievable if for all $\varepsilon>0$, there exists an $\bar{n}\in \N$, such that for all $n \geq \bar{n}$, there exists a code  with information transmission $c\in\mc{C}_{\sf{i}}(n)$, that satisfies:
\begin{eqnarray}
&& \PP_c\bigg(\Big|\Big|Q^n - \QQ \Big|\Big|_{\sf{tv}}> \varepsilon\bigg) + \PP_c\bigg( M \neq \hat{M}\bigg)\leq \varepsilon, \label{eq:AchieTrans1} \\
&&\frac{\log_2 |\mc{M}|}{n} \geq  \textsf{R} -\varepsilon.  \label{eq:AchieTrans2}
\end{eqnarray}
\end{definition}

\begin{corollary}[Information transmission] \label{coro:CoordTrans}
The pair of rate and probability distribution $(\textsf{R} , \QQ)$ is achievable if and only if: 
\begin{eqnarray}
 \max_{{\QQ}\in \Q} \bigg( I( W;Y  |V )  -   I( U ; V  ,W  )  \bigg) \geq&  \textsf{R}& \geq 0.  \label{eq:CoordTrans}
\end{eqnarray}
Here $\Q$ is the set of probability distributions ${\QQ}   \in  \Delta(\mc{U}   \times \mc{W} \times \mc{X} \times \mc{Y} \times \mc{V}  )$ defined in Theorem \ref{theo:1CwithChannel}.
\end{corollary}

The proof of Corollary \ref{coro:CoordTrans} relies on the proof of Theorem \ref{theo:1CwithChannel} and a sketch is stated in App. \ref{sec:ProofCorollaryTrans}. The achievability is based on rate splitting and the converse involves the auxiliary random variable $W_i = (U^n_{i+1} , M, Y^{i-1} )$. Corollary \ref{coro:CoordTrans} characterizes the optimal trade-off between the transmission of information and the empirical coordination. In case of strictly positive information constraint  \eqref{eq:1CwithChannel1}, it is possible to transmit reliably an additional message $M\in \mc{M}$ to the decoder.

\begin{remark}
\label{remark:StateAmplification}
\bd{(Causal Encoding)}
The trade-off between information transmission and empirical coordination was also characterized in \cite[eq. (5)]{LeTreustBloch(ISIT)16}, for the case of causal encoding and non-causal decoding.  
\end{remark}





\section{Convex optimization problem}\label{sec:ConvexOptimization}

\subsection{Characterization of achievable utilities}\label{sec:Utilities}

In this section, we evaluate the performance of a coding scheme $c\in\mc{C}(n)$, by considering a utility function $\Phi(u,x,y,v)$, defined over the symbols of the source and of the channel:
\begin{eqnarray}
\Phi : \mc{U} \times \mc{X} \times \mc{Y} \times \mc{V}   \longrightarrow \R.\label{eq:UtilityFunction}
\end{eqnarray}
The utility function $\Phi(u,x,y,v)$ is general  and captures the different objectives of the coding process. It can be a distortion function $\Phi(u,x,y,v)=d(u,v)$ for the source coding problem, a  cost function $c(x)$ for the channel coding problem, or a payoff function $\pi(u,x,y,v)$ for the players of a repeated game \cite{GossnerHernandezNeyman06}. The probability distribution of the source $\PP_{\sf{u}}(u)$, the channel conditional probability distribution $ \mc{T}(y | x )$, and the code with strictly causal decoder $c\in\mc{C}(n)$, induce  sequences of random variables $(U^n,X^n,Y^n,V^n)$. Each stage $i \in \{1,\ldots,n\}$, is associated with a stage utility $\Phi (U_i,X_i,Y_i,V_i)$. We evaluate the performance of a code $c\in\mc{C}(n)$ using the $n$-stage utility $\Phi^n(c)$.

\begin{definition}
The  $n$-stage utility $\Phi^n(c)$ of the code $c\in\mc{C}(n)$  is defined by:
\begin{eqnarray}
\Phi^n(c) &=& \E\Bigg[ \frac{1}{n} \cdot  \sum_{i=1}^n \Phi(U_i ,X_i ,Y_i ,V_i )\Bigg].\label{eq:UtilityFunction}
\end{eqnarray}
The expectation is taken over the sequences of random variables $(U^n,X^n,Y^n,V^n)$, induced by the code $c\in\mc{C}(n)$ and by the source $\PP_{\sf{u}}(u)$ and the channel $ \mc{T}(y | x )$. A utility value $\phi\in\R$ is achievable if for all $\varepsilon >0$, there exists a $\bar{n} \in \N$, such that for all $n \geq \bar{n}$, there exists a code $c \in \mc{C}(n)$ such that:
\begin{eqnarray}
\Bigg| \phi - \E\bigg[ \frac{1}{n} \cdot  \sum_{i=1}^n \Phi(U_i ,X_i ,Y_i ,V_i )\bigg] \Bigg| \leq \varepsilon.\label{eq:AchievableUtilityFunction}
\end{eqnarray}
We denote by $\textsf{U}$ the set of achievable utility values $\phi \in \textsf{U}$.
\end{definition}

\begin{theorem}[Set of achievable utilities]\label{theo:AchievableUtility}
The set of achievable utilities $\textsf{U}$ is:
\begin{align}
 \textsf{U} =  \bigg\{\phi\in\R,\text{ s.t. } \exists \;\QQ(x,v|u) , \;\; \E_{ \QQ} \big[\Phi(U,X,Y,V)\big]  = \phi , \nonumber  \\
  \text{ and }  \max_{{\QQ}(w|u,v,x), \atop  |\mc{W}| \leq  |\mc{U} \times   \mc{X} \times \mc{V}| +1 } \bigg( I( W;Y  |V )  -   I( U ; V  ,W  )  \bigg) \geq 0 \bigg\}. \label{eq:AchievableUtilityFunction0} 
\end{align}
\end{theorem}

The proof of Theorem \ref{theo:AchievableUtility} is stated in App. \ref{sec:ProofCoroUtility}. The result of Theorem \ref{theo:AchievableUtility} extends to multiple utility functions $(\Phi_1,\Phi_2,\ldots,\Phi_K)$, as for the trade-off between source distortion $\Phi_1(u,x,y,v) = d(u,v)$ and channel cost $\Phi_2(x) = c(x)$, under investigation in Sec. \ref{sec:SourceDistortionCost}.

\begin{example}
We consider the utility function $\Phi (u,x,y,v) = \UN(v = u)$ that achieves its maximum when the symbols of source and decoder's output $U=V$ are equal. From Shannon's separation result (see \cite{shannon-bell-1948} and Corollary \ref{coro:Shannon}) the maximal expected utility $\E\big[ \Phi (u,x,y,v) \big] = \E\big[ \UN(v = u)  \big] = 1$ is achievable if and only if $\max_{\PP(x)}I( X;Y )  - H(U )\geq 0$. This information constraint corresponds to the target distribution is $\QQ(u,x,y,v)  = \PP_{\sf{u}}(u)   \times \UN(v= u) \times \PP^{\star}(x) \times\mc{T}(y | x )$ where $\PP^{\star}(x) $ achieves the maximum in $\max_{\PP(x)}I( X;Y ) $. The problem of empirical coordination generalizes the problem of information transmission of Shannon.
\end{example}

We define the set $\mc{A}$ of achievable target distributions $\QQ(x,v | u)$, characterized by  Theorem \ref{theo:1CwithChannel}:
\begin{align}
\mc{A} = \bigg\{  \QQ(x,v | u) ,\text{ s.t. }, \qquad\qquad\qquad\qquad\qquad \nonumber  \\ 
\max_{\QQ(w|u,v,x),\atop  |\mc{W}| \leq  |\mc{U} \times   \mc{X} \times \mc{V}| +1} \bigg( I( W;Y  |V )  -   I(  U ; V  ,W  )  \bigg) \geq 0\bigg\}.\label{eq:AchievableSet}
\end{align}
The set $\mc{A}$ is closed since the information constraint \eqref{eq:1CwithChannel1} is not strict. The set of symbols $|\mc{U}\times \mc{X}\times \mc{V} |< + \infty$ are discrete, $\mc{A}$ is a closed and bounded subset of $[0,1]^{|\mc{U}\times \mc{X}\times \mc{V} |}$, hence $\mc{A}$ is a compact set. The set of achievable utilities $ \textsf{U} $ is the image, by the expectation operator, of the set of achievable distributions $\mc{A} $:
\begin{eqnarray}
\mc{A}&\longrightarrow&  \textsf{U},\nonumber \\
\QQ(x,v|u)  &\longrightarrow& \E_{\QQ(u,x,y,v) }\bigg[ \Phi(U,X,Y,V) \bigg].\label{eq:ImageFunction}
\end{eqnarray}
The set $\textsf{U}$ is a closed and bounded subset of $\R$, hence it is also a compact set. Since the probability distributions of the source $\PP_{\sf{u}}(u)$ and of the channel $ \mc{T}(y | x )$ are fixed,  the conditional probability distribution $\QQ(x,v | u)\in\mc{A}$ is the unique degree of freedom for the optimization of the expected utility $\E_{\QQ} \big[ \Phi(U ,X ,Y ,V )\big]$. 
\begin{theorem}\label{theo:ConvexProblem}
The set ${\mc{A}}$ is convex and the optimization problem stated in equation \eqref{eq:optimizationProblem} is a convex optimization problem:
\begin{align}
\max_{\QQ(x,v | u) \in{\mc{A}}}\E_{\QQ} \bigg[ \Phi(U ,X ,Y ,V )\bigg].\label{eq:optimizationProblem}
\end{align}
The information constraint \eqref{eq:1CwithChannel1} is concave with respect to the conditional probability distribution $\QQ(x,v | u)$.
\end{theorem}
The proof of Theorem \ref{theo:ConvexProblem} is stated in App. \ref{sec:ProofTheoConvex}. Since the expectation $\E_{\QQ} \big[ \Phi(U ,X ,Y ,V )\big]$ is linear in $\QQ(x,v | u)$ and the set $\mc{A}$ is convex, the optimal solution of the optimization problem \eqref{eq:optimizationProblem} lies on the boundary of the set $\mc{A}$. Denote by $\texttt{bd}(\mc{A}) $  the subset of the boundary of $\mc{A}$ where the information constraint is zero:
\begin{align}
\texttt{bd}(\mc{A}) = \bigg\{  \QQ(x,v | u) ,\text{ s.t. }\qquad\qquad\qquad\qquad\qquad \nonumber  \\ 
\max_{\QQ(w|u,v,x),\atop  |\mc{W}| \leq  |\mc{U} \times   \mc{X} \times \mc{V}| +1} \bigg( I( W;Y  |V )  -   I(  U ; V  ,W  )  \bigg) = 0\bigg\}.\label{eq:Boundary}
\end{align}
In the following sections, we provide numerical results for the coordination game of \cite{GossnerHernandezNeyman06} and for the trade-off between source distortion and channel cost.



\subsection{Coordination game}\label{sec:ExampleMatchingPennies}

We consider a binary information source and a binary symmetric channel with the set of two symbols $\mc{U} = \mc{X} = \mc{Y} = \mc{V} = \{0,1\}$, as represented by Fig. \ref{fig:JointSourceChannelProblem2x2b}. The information source depends on the parameter $p\in [0,1]$ and the channel depends on the parameter $\varepsilon \in [0,1]$.  
  \begin{figure}[h!]
\begin{center}
\psset{xunit=0.6cm,yunit=0.6cm}
\begin{pspicture}(-1,-1)(15,3.5)
\rput(0,1.5){$p$}
\rput(0,0.5){$1-p$}
\rput(1,2.5){$U$}
\rput(1,1.5){$0$}
\rput(1,0.5){$1$}
\psframe(2,0)(3,2)
\rput(2.5,1){$\C$}
\rput(4,2.5){$X$}
\rput(4,1.5){$0$}
\rput(4,0.5){$1$}
\psline{->}(5,1.5)(9,1.5)
\psline{->}(5,0.5)(9,0.5)
\psline[linewidth=0.5pt, linestyle = dashed]{->}(5,1.5)(9,0.5)
\psline[linewidth=0.5pt, linestyle = dashed]{->}(5,0.5)(9,1.5)
\rput(7,1.8){$1 - \varepsilon$}
\rput(7,0.1){$1 - \varepsilon$}
\rput(8.5,1){$\varepsilon$}
\rput(10,2.5){$Y$}
\rput(10,1.5){$0$}
\rput(10,0.5){$1$}
\psframe(11,0)(12,2)
\rput(11.5,1){$\D$}
\rput(13,2.5){$V$}
\rput(13,1.5){$0$}
\rput(13,0.5){$1$}
\end{pspicture}
\caption{Binary information source and binary symmetric channel with parameters $p\in [0,1]$ and $\varepsilon \in [0,1]$}\label{fig:JointSourceChannelProblem2x2b}
\end{center}
\end{figure}
The goal of both encoder and decoder is to coordinate their actions $X$ and $V$ with the information source $U$, in order to maximize the utility function defined by Fig. \ref{fig:MatchingPennies}. In fact, the maximal utility can be achieved when the encoder and the decoder implement the same symbol as the source symbol $X =V = U$ \textit{i.e.}, the sequences of symbols $U^n$, $X^n$ and $V^n$ are jointly typical for the probability distribution $\PP_{\sf{u}}(u) \times \UN(\sf{x} =\sf{v} = \sf{u})$. In \cite{GossnerHernandezNeyman06}, the authors proved that this distribution is not achievable.
 \begin{figure}[h!]
 \begin{small}
\begin{center}
\psset{xunit=0.6cm,yunit=0.6cm}
\begin{pspicture}(-19,-1)(0,3.5)
\psframe(-10,0)(-6,3)
\psline(-10,1.5)(-6,1.5)
\psline(-8,0)(-8,3)
\rput(-9,2.25){$0$}
\rput(-7,2.25){$0$}
\rput(-9,0.75){$0$}
\rput(-7,0.75){$1$}
\rput(-10.95,2.25){$X=0$}
\rput(-10.95,0.75){$X=1$}
\rput(-17,2.25){$X=0$}
\rput(-17,0.75){$X=1$}
\rput(-9,3.3){$V=0$}
\rput(-7,3.3){$V=1$}
\rput(-15,3.3){$V=0$}
\rput(-13,3.3){$V=1$}
\rput(-14,-0.8){$U=0$}
\rput(-8,-0.8){$U=1$}
\psframe(-16,0)(-12,3)
\psline(-16,1.5)(-12,1.5)
\psline(-14,0)(-14,3)
\rput(-15,2.25){$1$}
\rput(-13,2.25){$0$}
\rput(-15,0.75){$0$}
\rput(-13,0.75){$0$}
 \end{pspicture}
 \caption{Utility function $\Phi:\mc{U} \times \mc{X} \times \mc{V} \mapsto \R $ corresponding to the coordination game of \cite{GossnerHernandezNeyman06}. } \label{fig:MatchingPennies}
\end{center}
\end{small}
\end{figure}
The objective is to determine the probability distribution $\QQ(x,v|u)$ that is achievable and that maximizes the expected utility function represented by Fig. \ref{fig:MatchingPennies}. 
 \begin{figure}[ht]
\begin{center}
\begin{small}
\psset{xunit=0.6cm,yunit=0.6cm}
\begin{pspicture}(-19,-1)(5,3.5)
\psframe(-10,0)(-6,3)
\psline(-10,1.5)(-6,1.5)
\psline(-8,0)(-8,3)
\rput(-9,2.25){$\frac{1 - \gamma}{6}$}
\rput(-7,2.25){$\frac{1 - \gamma}{6}$}
\rput(-9,0.75){$\frac{1 - \gamma}{6}$}
\rput(-7,0.75){$\frac{\gamma}{2}$}
\rput(-10.9,2.25){$X=0$}
\rput(-10.9,0.75){$X=1$}
\rput(-17,2.25){$X=0$}
\rput(-17,0.75){$X=1$}
\rput(-9,3.3){$V=0$}
\rput(-7,3.3){$V=1$}
\rput(-15,3.3){$V=0$}
\rput(-13,3.3){$V=1$}
\rput(-14,-0.8){$U=0$}
\rput(-8,-0.8){$U=1$}
\psframe(-16,0)(-12,3)
\psline(-16,1.5)(-12,1.5)
\psline(-14,0)(-14,3)
\rput(-15,2.25){$\frac{\gamma}{2}$}
\rput(-13,2.25){$\frac{1 - \gamma}{6}$}
\rput(-15,0.75){$\frac{1 - \gamma}{6}$}
\rput(-13,0.75){$\frac{1 - \gamma}{6}$}
 \end{pspicture}
 \end{small}
 \caption{The optimal probability distribution for the  utility function stated in Fig. \ref{fig:MatchingPennies}, depends on the parameter $\gamma\in [0,1]$. The value of the expected utility is equal to this parameter $\gamma\in [0,1]$.}\label{fig:OptimalProbaMatchingPennies}
\end{center}
\end{figure}
We suppose that the source parameter is $p=\frac{1}{2}$. As mentioned in \cite{GossnerHernandezNeyman06}, the utility function presented in Fig. \ref{fig:MatchingPennies} is symmetric, hence the empirical distribution that maximizes the utility function is given by Fig. \ref{fig:OptimalProbaMatchingPennies}, with parameter $\gamma\in [0,1]$.We consider lower and upper bounds on the information constraint  \eqref{eq:1CwithChannel1} that do not involve an auxiliary random variable $W$. 

$\bullet$ The lower bound is obtained by letting $W =X$ in the information constraint \eqref{eq:1CwithChannel1} of Theorem \ref{theo:1CwithChannel}:
\begin{align}
& \max_{{\QQ}\in \Q} \bigg( I( W;Y  |V )  -   I(  U ; V  ,W  )  \bigg) \nonumber \\
&\geq I( X;Y  |V )  -   I(  U ; V  ,X  )  \geq0. \label{eq:ICMatchingPenniesLow}
\end{align}

$\bullet$ The upper bound comes from the result with source feedforward stated in  \cite{LarrousseLasaulceBloch(IT)14}, in which the decoder observes the pair $(Y,U)$:
\begin{align}
& I( X;Y  | U ,V )  -   I(  U ; V   ) \nonumber \\
&\geq  \max_{{\QQ}\in \Q} \bigg( I( W;Y  |V )  -   I(  U ; V  ,W  )  \bigg)  \geq0.\label{eq:ICMatchingPenniesUp}
\end{align}

The difference between the upper bound \eqref{eq:ICMatchingPenniesUp}  and the lower bound \eqref{eq:ICMatchingPenniesLow} is equal to $I(U;X|V,Y)$. 

\begin{proposition}\label{prop:InformationConstraintMP}
The lower \eqref{eq:ICMatchingPenniesLow} and upper \eqref{eq:ICMatchingPenniesUp} bounds on the information constraint \eqref{eq:1CwithChannel1} of Theorem \ref{theo:1CwithChannel}, depending on parameter $\gamma \in [0,1]$, are given by:
\begin{align}
&I( X;Y  |V )  -   I(  U ; V  ,X  )  \nonumber\\
&=H_b(\gamma) + (1-\gamma) \cdot \log_2(3) - 1 - H_b\bigg(\frac{2}{3} -  \frac{2\gamma}{3} \bigg)  \nonumber\\
&- H_b(\varepsilon) + H_b\bigg(\frac{2}{3} - \frac{2\gamma}{3}  +  \varepsilon \cdot \frac{4\gamma - 1}{3}  \bigg),\label{eq:LowerBound}
\end{align}
\begin{align}
&I( X;Y  |  U,V )  -   I(  U ; V   ) \nonumber \\
&= H_b(\gamma) + (1-\gamma) \cdot \log_2(3) - 1 -  H_b(\varepsilon) +  \frac{2\gamma + 1}{3} \nonumber \\
&\times \Bigg( H_b\bigg( (1 - \varepsilon ) \cdot \frac{3\gamma}{2\gamma + 1}  +  \varepsilon \cdot  \frac{1 - \gamma}{2\gamma + 1}  \bigg) -  H_b\bigg( \frac{3\gamma}{2\gamma + 1} \bigg) \Bigg) . \label{eq:UpperBound}
\end{align}
\end{proposition}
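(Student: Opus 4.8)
The plan is to evaluate directly, on the symmetric joint distribution $\QQ_{uxv}$ of Fig.~\ref{fig:OptimalProbaMatchingPennies} completed by the binary symmetric channel into $\QQ_{uxyv}=\PP_{u}\otimes\QQ_{xv|u}\otimes\mc{T}_{y|x}$ with source parameter $p=\frac{1}{2}$, the two mutual-information expressions that occur as the lower and upper bounds in \eqref{eq:ICMatchingPenniesLow} and \eqref{eq:ICMatchingPenniesUp}. The first step is to read off from the table, using the $U\mapsto 1-U$, $X\mapsto 1-X$, $V\mapsto 1-V$ symmetry, the marginal and conditional laws that will be needed: $\QQ(u)$, $\QQ(x)$ and $\QQ(v)$ are uniform; $\QQ(x=0\mid v=0)=\QQ(x=1\mid v=1)=\frac{2\gamma+1}{3}$; the conditional law $\QQ(x,v\mid u=0)$ is $\big(\gamma,\frac{1-\gamma}{3},\frac{1-\gamma}{3},\frac{1-\gamma}{3}\big)$ on $(0,0),(0,1),(1,0),(1,1)$, and symmetrically for $u=1$; $\QQ(v=0\mid u=0)=\frac{2\gamma+1}{3}$; and $\QQ(x=0\mid u=0,v=0)=\frac{3\gamma}{2\gamma+1}$ whereas $\QQ(x=0\mid u=0,v=1)=\frac{1}{2}$. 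Because the channel is memoryless one also has $H(Y\mid X,V)=H(Y\mid X,U,V)=H(Y\mid X)=H_b(\varepsilon)$, and $p=\frac{1}{2}$ gives $H(U)=H(V)=1$.

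For the lower bound I would start from $I(X;Y\mid V)-I(U;V,X)=\big(H(Y\mid V)-H_b(\varepsilon)\big)-\big(H(X,V)-H(X,V\mid U)\big)$. Pushing $\QQ(x\mid v=0)$ through the channel, $\QQ(y=1\mid v=0)=\frac{2\gamma+1}{3}\varepsilon+\frac{2(1-\gamma)}{3}(1-\varepsilon)=\frac{2}{3}-\frac{2\gamma}{3}+\varepsilon\,\frac{4\gamma-1}{3}$, and the same value appears for $v=1$ by symmetry, so $H(Y\mid V)=H_b\!\big(\frac{2}{3}-\frac{2\gamma}{3}+\varepsilon\,\frac{4\gamma-1}{3}\big)$. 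Next $H(X,V)=H(V)+H(X\mid V)=1+H_b\!\big(\frac{2}{3}-\frac{2\gamma}{3}\big)$, while the conditional table gives $H(X,V\mid U)=H_b(\gamma)+(1-\gamma)\log_2 3$. Substituting these three quantities reproduces \eqref{eq:LowerBound} term by term.

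For the upper bound, supplied by \cite{SamsonBenjaminISIT13}, I would start from $I(X;Y\mid U,V)-I(U;V)=\big(H(Y\mid U,V)-H_b(\varepsilon)\big)-\big(1-H(V\mid U)\big)$. Conditioning $Y$ on $(U,V)$, the two pairs with $u=v$ each carry weight $\QQ(u=0,v=0)=\frac{2\gamma+1}{6}$ and contribute entropy $H_b\!\big((1-\varepsilon)\frac{3\gamma}{2\gamma+1}+\varepsilon\,\frac{1-\gamma}{2\gamma+1}\big)$, while on the two pairs with $u\neq v$ the law $\QQ(x\mid u,v)$ is uniform and contributes entropy $1$; hence $H(Y\mid U,V)=\frac{2\gamma+1}{3}H_b\!\big((1-\varepsilon)\frac{3\gamma}{2\gamma+1}+\varepsilon\,\frac{1-\gamma}{2\gamma+1}\big)+\frac{2(1-\gamma)}{3}$. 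Since $H(V\mid U)=H_b\!\big(\frac{2}{3}-\frac{2\gamma}{3}\big)=H_b\!\big(\frac{2\gamma+1}{3}\big)$, the whole expression equals $\frac{2\gamma+1}{3}H_b(\cdot)+\frac{2(1-\gamma)}{3}-H_b(\varepsilon)-1+H_b\!\big(\frac{2\gamma+1}{3}\big)$, and it remains to identify $\frac{2(1-\gamma)}{3}+H_b\!\big(\frac{2\gamma+1}{3}\big)$ with $H_b(\gamma)+(1-\gamma)\log_2 3-\frac{2\gamma+1}{3}H_b\!\big(\frac{3\gamma}{2\gamma+1}\big)$, which yields \eqref{eq:UpperBound}.

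I expect the only genuinely fiddly step to be this last identity: one expands each binary entropy through $H_b(q)=-q\log_2 q-(1-q)\log_2(1-q)$ together with $\log_2\frac{a}{b}=\log_2 a-\log_2 b$, and checks that the coefficients of $\log_2\gamma$, $\log_2(1-\gamma)$, $\log_2(2\gamma+1)$ and the bare constant (collecting the $\log_2 3$ and the $\frac{2(1-\gamma)}{3}$ terms) agree on both sides. Everything else is a direct read-off from the symmetry of Fig.~\ref{fig:OptimalProbaMatchingPennies}, so once this bookkeeping is carried out Proposition~\ref{prop:InformationConstraintMP} follows.
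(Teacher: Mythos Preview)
Your proposal is correct and follows exactly the approach the paper indicates: the paper's own proof consists of the single sentence ``The proof of Proposition~\ref{prop:InformationConstraintMP} follows entropy calculation,'' and what you have written is precisely that entropy calculation carried out in full, including the final algebraic identity needed to match the form of \eqref{eq:UpperBound}.
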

The proof of Proposition \ref{prop:InformationConstraintMP} comes from the definition of the entropy.

\begin{remark}
Note that when $\varepsilon=0$, the information constraints \eqref{eq:LowerBound} and \eqref{eq:UpperBound} reduce to the one stated in \cite{GossnerHernandezNeyman06}  and  \cite{Cuff(ImplicitCoordination)11}:
\begin{eqnarray}
H( X   |V )  -   I(  U ; V  ,X  ) = H_b(\gamma) + (1-\gamma) \cdot \log_2(3) - 1 .
\end{eqnarray}
\end{remark}

 \begin{figure}[ht!]
\centering
\includegraphics[width=0.5\textwidth]{InformationConstraintPlot2014_03_21.eps}
\caption{Information constraints depending on the probability parameter $\gamma\in [0,1]$, for different values of the channel noise parameter $\varepsilon \in \{0,0.25,0.5\}$.}\label{fig:InformationConstraintPlot2014_03_04}
 \end{figure}

  \begin{figure}[ht!]
\centering
\includegraphics[width=0.5\textwidth]{InformationConstraintUtility2014_03_21.eps}
\caption{Optimal utility/probability parameter $\gamma\in [0,1]$ depending on the channel parameter $\varepsilon \in [0,0.5]$.}
\label{fig:InformationConstraintUtility2014_03_04}
 \end{figure} 

Fig. \ref{fig:InformationConstraintPlot2014_03_04} and \ref{fig:InformationConstraintUtility2014_03_04} represent  the lower  and the upper bounds of equations \eqref{eq:LowerBound} and \eqref{eq:UpperBound}, depending on $\gamma\in [0,1]$, for different values of $\varepsilon$. As established by Theorem \ref{theo:ConvexProblem}, these information constraints are concave with respect to the  probability parameter $\gamma\in [0,1]$. The maximum of the information constraint is achieved by parameter $\gamma = 0.25$, that corresponds to the uniform probability distribution over the symbols $\mc{U}\times \mc{X}\times\mc{V}$. The maximum of the utility is achieved by parameter $\gamma^{\star}\in [0,1]$, that corresponds to a zero of the information constraint.

\begin{itemize}
\item[$\bullet$] If the channel is perfect \textit{i.e.}, $\varepsilon =0$, the optimal solution corresponds to the one stated in  \cite{GossnerHernandezNeyman06}  and \cite{LarrousseLasaulceBloch(IT)14}. The optimal utility and the optimal probability distribution $\QQ^{\star}$ are given by the parameter $\gamma^{\star} \simeq 0.81$, that is solution of the equation $H_b(\gamma) + (1-\gamma) \cdot \log_2(3)  =1$.
\item[$\bullet$] If the channel parameter is $\varepsilon =0.5$, then the channel outputs are statistically independent of the channel inputs and the optimal utility $0.25$ corresponds to the situation where the random variables $U$,  $X$ and $V$ are uniform and mutually independent. In that case, no information is transmitted.
\item[$\bullet$] If the channel parameter is $\varepsilon = 0.25$, the optimal utility belongs to the interval $ \max_{\QQ(x,v|u)\in \mc{A} }\E_{\QQ} \big[ \Phi(U ,X ,Y ,V )\big] \in [0.54,0.575]$. Even if the channel is noisy, the symbols of the encoder, the decoder and the source are perfectly coordinated more than half of the time. 
\end{itemize}


\subsection{Source distortion and channel cost}\label{sec:SourceDistortionCost}


We investigate the relationship between the result stated in Theorem \ref{theo:1CwithChannel} for empirical coordination and the classical results of rate-distortion and channel capacity, stated in \cite[pp. 47, 57 and 66]{ElGammalKim(book)11}. We characterize the achievable pairs of distortion-cost $(\textsf{D}^{\star},\textsf{C}^{\star})$ by using a distribution $\QQ(x,v | u)$ that satisfies $ \E_{\QQ}\big[d(U,V)\big] =\textsf{D}^{\star} $ and  $\E_{\QQ}\big[c(X)\big] = \textsf{C}^{\star}$ and that is achievable. We show that the joint source-channel coding result of Shannon \cite{shannon-bell-1948} is a particular case of Theorem \ref{theo:1CwithChannel}.

\begin{definition}
A distortion function evaluates the distortion between the source symbol $u \in \mc{U}$ and the output of the decoder $v \in \mc{V}$:
\begin{eqnarray}
d : \mc{U} \times  \mc{V} \mapsto \R.
\end{eqnarray}
A channel cost function evaluates the cost of the input symbol $x \in \mc{X}$ of the channel:
\begin{eqnarray}
c : \mc{X} \mapsto \R.
\end{eqnarray}
\end{definition}

\begin{definition}\label{def:DC}
The pair of distortion-cost $(\textsf{D}^{\star},\textsf{C}^{\star})$ is achievable if for all $ \varepsilon >0$,  there exists $\bar{n}$ such that for all $n\geq\bar{n}$, there exists a code $c\in \mc{C}(n)$ such that:
\begin{eqnarray}
\Bigg|  \E\bigg[ \frac{1}{n} \sum_{i=1}^n d(U_i ,V_i ) \bigg]  -  \textsf{D}^{\star} \Bigg| &\leq& \varepsilon ,\\
\Bigg|  \E\bigg[ \frac{1}{n} \sum_{i=1}^n c(X_i ) \bigg] -  \textsf{C}^{\star} \Bigg| &\leq&  \varepsilon.
\end{eqnarray}
\end{definition}

In this section, we consider exact distortion and cost rather than upper bounds on the distortion  $ \E\big[ \frac{1}{n} \sum_{i=1}^n d(U_i ,V_i ) \big]  \leq   \textsf{D}^{\star}  + \varepsilon$, and cost $ \E\big[ \frac{1}{n} \sum_{i=1}^n c(X_i ) \big]  \leq   \textsf{C}^{\star}  + \varepsilon$, as in  \cite[pp. 47 and 57]{ElGammalKim(book)11}. Although the conditions of exact distortion and cost, are more restrictive than upper bounds, the solution is a direct consequence of Theorem \ref{theo:1CwithChannel}.

\begin{corollary}[Trade-off distortion cost]\label{theo:DistortionCost}
The two following assertions are equivalent:
\begin{itemize}
\item[$\bullet$] The pair of distortion-cost $(\textsf{D}^{\star},\textsf{C}^{\star})$ satisfies:
\begin{eqnarray}
\max_{\PP(x), \atop  \E_{\QQ}[c(X)]  = \textsf{C}^{\star}} I( X;Y )  -  \min_{\QQ(v|u), \atop  \E_{\QQ}[d(U,V)]  = \textsf{D}^{\star}} I(  U ; V   )\geq 0.\label{eq:CostDistortion}
\end{eqnarray}
\item[$\bullet$] There exists an achievable probability distribution $\QQ(x,v|u) $ \textit{i.e.}, that satisfy the information constraint \eqref{eq:1CwithChannel1}, such that:
\begin{eqnarray}
\E_{\QQ}\Big[d(U,V)\Big] &=& \textsf{D}^{\star},\\
\E_{\QQ}\Big[c(X)\Big] &=& \textsf{C}^{\star}.
\end{eqnarray}
\end{itemize}
\end{corollary}

\begin{remark}
Unlike \cite[pp. 43, Remark 3.5]{ElGammalKim(book)11}, Corollary \ref{theo:DistortionCost} establishes that the pair of distortion-cost $(\textsf{D}^{\star},\textsf{C}^{\star})$ is achievable when equation \eqref{eq:CostDistortion} is also equal to zero. More details are provided in App. \ref{sec:ProofEqualityIC}.
\end{remark}

\begin{proof}[Corollary \ref{theo:DistortionCost}]
The equivalence stated in Corollary \ref{theo:DistortionCost} is reformulated using the equations  \eqref{eq:ProofTheoDC6b} and \eqref{eq:ProofTheoDC7b}.
\begin{align}
&\exists \; \QQ(x)\times \QQ(v|u) \; \text{ s.t. } \E_{\QQ}\big[c(X)\big]= \textsf{C}^{\star},\nonumber \\
& \E_{\QQ}\big[d(U,V)\big]= \textsf{D}^{\star},  \text{ and } I(X;Y) - I(U;V) \geq 0, \label{eq:ProofTheoDC6b}  \\
\Longleftrightarrow& \exists \;\QQ(x,v|u) \; \text{ s.t. } \E_{\QQ}\big[c(X)\big]= \textsf{C}^{\star},  \E_{\QQ}\big[d(U,V)\big]= \textsf{D}^{\star}, \nonumber\\
 &\text{ and }  \max_{\QQ(w|u,v,x),\atop  |\mc{W}| \leq  |\mc{U} \times   \mc{X} \times \mc{V}| +1 } \bigg( I( W;Y  |V )  -   I(  U ; V  ,W  )  \bigg) \geq 0 .\label{eq:ProofTheoDC7b}
\end{align}
If distribution $\QQ(x,v|u)$ is achievable, then the product of marginal distributions $ \QQ(x)\times \QQ(v|u)$ is also achievable, see Remark \ref{remark:CoordRestrictive}. Corollary \ref{coro:Shannon} proves that equations \eqref{eq:ProofTheoDC6b} and \eqref{eq:ProofTheoDC7b} are equivalent and this concludes the proof of Corollary \ref{theo:DistortionCost}.
\end{proof}

\begin{example}[Trade-off distortion-cost]\label{ex:DistortionCost}
We consider the communication problem with parameters $p\in [0,1]$ and $\varepsilon \in [0,1]$ represented by Fig. \ref{fig:JointSourceChannelProblem2x2b}. The distortion function is $d(u,v) = \UN(u \neq v)$ and cost function is $c(x) = \UN(x=0)$. The distribution of $X$ is binary $\QQ(X =0 ) = \alpha$, $\QQ(X =1 ) = 1 - \alpha$, with $\alpha \in [0,1]$ and the conditional probability distribution $\QQ(v|u)$ is binary symmetric $\QQ(V=0|U=0)= 1- \beta$, $\QQ(V=0|U=1)= \beta$, with $\beta \in [0,1]$. The expected distortion and the expected cost are given by:
\begin{eqnarray}
\E_{\QQ}\Big[d(U,V)\Big]  &=& p \cdot \beta  + (1 - p) \cdot \beta = \beta ,\\
\E_{\QQ}\Big[c(X)\Big] &=& \sum_{x} \QQ(x)\cdot \UN(x=0) = \alpha.
\end{eqnarray}
The maximum and the minimum of information constraint \eqref{eq:CostDistortion} disappear since the distributions $\QQ(v|u)$ and $\QQ(x)$ that achieve the target distortion-cost $(\textsf{D}^{\star},\textsf{C}^{\star}) = (\beta,\alpha)$ are unique. The information constraint \eqref{eq:CostDistortion} of Corollary \ref{theo:DistortionCost} is equal to:
\begin{eqnarray}
&&I( X;Y  )  -  I(  U;V  ) \\
&=& H_b\Big( \alpha\cdot \varepsilon + (1- \alpha) \cdot (1- \varepsilon) \Big)+ H_b\Big(\beta\Big)  \\
&-& H_b\Big(\varepsilon\Big) - H_b\Big( \beta\cdot p + (1- \beta) \cdot (1- p) \Big) .
\end{eqnarray}
\begin{figure}[ht!]
\centering
\includegraphics[width=0.4\textwidth]{TradeOffDistortionCost2017_06_02.eps}
\caption{The set of achievable Distortion-Cost $(\textsf{D}^{\star},\textsf{C}^{\star})$ depending on source and channel parameters $( \varepsilon,p) \in \{(0.05,0.5) ;  (0.25,0.25); (0.25,0.5)\}$.}
\label{fig:TradeOffDistortionCost2015_05_11}
 \end{figure} 
Fig. \ref{fig:TradeOffDistortionCost2015_05_11} represents three regions of achievable pairs of exact distortion-cost, for parameters $( \varepsilon,p) \in \{(0.05,0.5) ;  (0.25,0.25); (0.25,0.5)\}$. This illustrates the trade-off between minimal source distortion and minimal channel cost. The boundary of the three dark regions corresponds to the  pairs of distortion and cost $(\textsf{D}^{\star},\textsf{C}^{\star})$, that satisfy the equality $I( X;Y  ) = I(  U;V  )$, in equation \eqref{eq:CostDistortion}. 
\end{example}


\section{Causal decoding}\label{sec:CwithChannel}


In this section, we consider the problem of causal decoding instead of strictly-causal decoding.  At instant $i\in\{1,\ldots,n\}$, the decoder $\D$ observes the sequence of past and \textit{current} channel outputs $y^{i} = (y_1,\ldots,y_{i}) \in \mc{Y}^{i}$  and returns a symbol $v_i \in \mc{V}_i$. The main difference between causal and strictly causal decoding is that at each instant $i\in\{1,\ldots,n\}$, the symbol $V_i$ may also be correlated with the channel output $Y_i$.

\begin{definition}\label{def:CodeCD}
A  code $c\in\mc{C}(n)$ with causal decoder is a tuple of functions $c=(f,\{g_i\}_{i=1}^n)$ defined by:
\begin{eqnarray}
f &:& \mc{U}^n  \longrightarrow \mc{X}^n ,\label{eq:CausalCodeSource1}\\
g_i &:& \mc{Y}^{i}  \longrightarrow \mc{V},\qquad i \in \{ 1,\ldots,n\}.\label{eq:CausalCodeSource2}
\end{eqnarray}
\end{definition}
Similarly as in Definition \ref{def:AchievableDistribution}, a joint probability distribution $\QQ(u,x,y,v)$ is achievable with non-causal encoding and causal decoding, if there exists a sequence of causal code with  small error probability. 

\begin{figure}[!ht]
\begin{center}
\psset{xunit=0.9cm,yunit=0.9cm}
\begin{pspicture}(0,-0.35)(8.5,1.7)
\pscircle(0,0.5){0.45}
\psframe(2,0)(3,1)
\pscircle(5,0.5){0.45}
\psframe(7,0)(8,1)
\psline[linewidth=1pt]{->}(0.5,0.5)(2,0.5)
\psline[linewidth=1pt]{->}(3,0.5)(4.5,0.5)
\psline[linewidth=1pt]{->}(5.5,0.5)(7,0.5)
\psline[linewidth=1pt]{->}(8,0.5)(9,0.5)
\rput[u](1,0.8){$U^n$}
\rput[u](3.75,0.8){$X^n$}
\rput[u](6.25,0.8){$Y^{i}$}
\rput[u](8.5,0.8){$V_i$}
\rput(0,0.5){$\PP_{\sf{u}}$}
\rput(2.5,0.5){$\C$}
\rput(5,0.5){$\mc{T}_{\sf{y|x}}$}
\rput(7.5,0.5){$\D$}
\end{pspicture}
\caption{The decoder is causal $V_i = g_i(Y^{i})$, for all $i\in\{1, \ldots ,n\}$ and the encoder is non-causal $X^n = f(U^n)$. Theorem \ref{theo:CwithChannel} characterizes the set of probability distributions $\QQ(u,x,y,v)$ that are achievable, using two auxiliary random variables $W_1$  and $W_2$.}\label{fig:CwithChannel}
\end{center}
\end{figure}

\begin{theorem}[Causal decoding] \label{theo:CwithChannel}
$\qquad$\\
A target joint probability distribution $\QQ(u,x,y,v) \in  \Delta(\mc{U} \times \mc{X} \times \mc{Y} \times \mc{V}  )$ is achievable if and only if the two following conditions are satisfied:\\
1) It decomposes as follows:
\begin{eqnarray}
\QQ(u,x,y,v) = \PP_{\sf{u}}(u)   \times \QQ(x | u) \times  \mc{T}(y | x )\times \QQ(v | u,x,y),\label{eq:CwithChannel0}
\end{eqnarray}
2) There exists two auxiliary random variables $W_1 \in \mc{W}_1$ and  $W_2 \in \mc{W}_2$ such that: 
\begin{eqnarray}
\max_{{\QQ}\in \Q_{\sf{c}}} \bigg( I( W_1;Y  |W_2 )  -  I( W_1, W_2 ; U   ) \bigg) \geq 0,
\label{eq:CwithChannel1}
\end{eqnarray}
where $ \Q_{\sf{c}}$ is the set of  joint probability distributions ${\QQ}(u,x,w_1,w_2,y,v)\in  \Delta(\mc{U} \times \mc{X}  \times \mc{W}_1 \times \mc{W}_2 \times \mc{Y} \times \mc{V}  )$ that decompose as follows:
\begin{eqnarray}
{\QQ}(u,x,w_1,w_2,y,v) = \PP_{\sf{u}}(u)   \times \QQ(x,w_1,w_2 | u) \nonumber \\
 \times  \mc{T}(y | x ) \times \QQ(v| y,w_2) ,\label{eq:CwithChannel5}
\end{eqnarray}
and for which the supports of $W_1$ and $W_2$ are bounded by $\max\big( |\mc{W}_1|,  |\mc{W}_2| \big)\leq  |\mc{U} \times \mc{X}  \times \mc{Y}\times \mc{V}  | +2 $. We denote by $\mc{A}_{\textsf{d}}$ the set of  probability distributions $\QQ(u,x,y,v)$ that are achievable with causal decoding.
\end{theorem}

The proof of Theorem \ref{theo:CwithChannel} is stated in App. \ref{sec:ProofDecompositionCD}-\ref{sec:CardinalityBoundCD}. App. \ref{sec:ProofDecompositionCD} presents the decomposition of the probability distribution. App. \ref{sec:ProofAchievabilityC} presents the achievability result for strictly positive information constraint. The case of null information constraint is stated in App.  \ref{sec:ProofCEqualityIC}. App. \ref{sec:ProofConverseC} presents converse result and App. \ref{sec:CardinalityBoundCD} presents the upper bound on the cardinality of the supports of the auxiliary random variables $W_1$ and $W_2$.

\textit{Proof ideas:} 

$\bullet$ The achievability proof follows by replacing the random variable $V$ with the auxiliary random variable $W_2$, in the block-Markov coding scheme of Theorem \ref{theo:1CwithChannel}. In the block $b+1 $, decoder returns the sequence $V^n_{b+1}$ drawn from the conditional probability distribution $\QQ_{\sf{v|yw_2}}^{\times n}$  depending on the current sequence of channel outputs $Y^n_{b+1}$ and the decoded sequence $W^n_{2,b+1}(m)$, corresponding to index $m\in \mc{M} $. Both Markov chains $Y  -\!\!\!\!\minuso\!\!\!\!-X    -\!\!\!\!\minuso\!\!\!\!-  ( U,W_1 ,W_2)$ and $V  -\!\!\!\!\minuso\!\!\!\!- (Y,W_2)    -\!\!\!\!\minuso\!\!\!\!-  ( U ,X,W_1)$ are satisfied. The causal decoding requires that the symbol $V$ depends on $(Y,W_2)$ but not on $W_1$. 

$\bullet$ The converse proof is obtained by identifying the auxiliary random variables $W_{1,i} =  U^n_{i+1}$ and $W_{2,i} = Y^{i-1}$, in the converse proof of Theorem  \ref{theo:1CwithChannel}. The Markov chain $Y_i  -\!\!\!\!\minuso\!\!\!\!-X_i   -\!\!\!\!\minuso\!\!\!\!-  ( U_i,W_{1,i} , W_{2,i})$ is satisfied since  $Y_i$ is not included in $(W_{1,i} , W_{2,i})$. The Markov chain $V_i  -\!\!\!\!\minuso\!\!\!\!- (Y_i ,W_{2,i})   -\!\!\!\!\minuso\!\!\!\!-  ( U_i,X_i,W_{1,i} )$ is satisfied since  $Y^{i-1}$ is  included in $W_{2,i}$ and the causal decoding function writes $V_i = g_i(Y_i ,Y^{i-1})$, for all $i\in \{1,\ldots,n\}$.

$\bullet$ For the causal decoding case, the probability distribution decomposes as $\PP_{\sf{u}}(u)   \times \QQ(x | u) \times  \mc{T}(y | x )\times \QQ(v | u,x,y)$, whereas for strictly causal decoding, the probability distribution decomposes as $\PP_{\sf{u}}(u)   \times \QQ(x | u)\times \QQ(v | u,x) \times  \mc{T}(y | x )$. The main difference is that for causal decoding, the symbol $V$ is affected by the randomness of the channel $Y$. In that case, the output  $Y$ plays also the role of an information source.

\begin{remark}[The role of the auxiliary random variables]
Theorem \ref{theo:CwithChannel} involves an additional auxiliary random variable $W_2$ that replace $V$ in Theorem \ref{theo:1CwithChannel} and that characterizes the tension between the correlation of $V$ with $Y$ and the correlation of $V$ with $(U,X)$. Intuitively, $W_1$ is related to the channel input $X$, as for Gel'fand Pinsker's coding \cite{gelfand-it-1980}, whereas $W_2$ is related to decoder's output $V$, as for Wyner Ziv's coding \cite{wyner-it-1976}.
\end{remark}

\begin{theorem}\label{theo:ConvexProblemCD}
The set $\mc{A}_{\textsf{d}}$ is convex and the information constraint \eqref{eq:CwithChannel1} is concave with respect to the distribution $\PP_{\sf{u}}(u)   \times \QQ(x | u) \times  \mc{T}(y | x )\times \QQ(v | u,x,y)$. 
\end{theorem}
The proof of Theorem \ref{theo:ConvexProblemCD} is stated in App. \ref{sec:ProofTheoConvexCD}. As for Theorem \ref{theo:ConvexProblem}, the set of achievable distribution using causal decoding is convex.

\begin{remark}[Non-causal encoding and decoding]\label{remark:NonCausalEncDec}
The case of non-causal encoder and decoder is open in general. In the coordination framework, the random variables $(U,Y)$ behave as correlated sources, for which the lossy source coding problem is open. Nevertheless, the optimal solutions have been characterized for three particular cases involving two-sided state information:\\
1) Perfect channel is solved by Theorem IV.1 in \cite{LeTreust(ISIT-TwoSided)15}, and it extends the result of Wyner-Ziv \cite{wyner-it-1976},\\
2) Lossless decoding is solved by Corollary III.3 in \cite{LeTreust(CorrelationITW)14}, and it  extends the result of Gel'fand-Pinsker \cite{gelfand-it-1980},\\
3) Independent source-channel is solved by Theorem IV.2 in \cite{LeTreust(ISIT-TwoSided)15}, and it  extends the result of Merhav-Shamai \cite{MerhavShamai03}.

The duality between channel capacity and rate distortion \cite{CoverChiang02}, can be seen directly on the information constraints of Theorem IV.1 in \cite{LeTreust(ISIT-TwoSided)15} for perfect channel, and of Corollary III.3 in \cite{LeTreust(CorrelationITW)14} for lossless decoding. The problem of empirical coordination has also strong relationship with the problem of ``state communication'' \cite{ChoudhuriKimMitra13} which is solved for Gaussian channels \cite{SutivongChiangCoverKim05}, but remains open for non-causal encoding and decoding. 
\end{remark}

%
%

\section{Conclusion}\label{sec:Conclusion}

In this paper, we investigate the problem of empirical coordination for two point-to-point source-channel scenarios, in which the encoder is non-causal and the decoder is strictly causal or causal. 
Empirical coordination characterizes the possibilities of coordinated behavior in a network of autonomous devices. Coordination is measured in terms of  empirical distribution of symbols of source and channel. We characterize the set of achievable target  probability distributions over the symbols of source and channel, for strictly causal and causal decoding. Compression and transmission of information are special cases of empirical coordination and the corresponding information constraint is stronger. We compare our characterization with the previous results stated in the literature and we investigate the maximization problem of a utility function that is common to both encoder and decoder. These results will be extended to the case of distinct utility functions, by using the tools from the repeated game theory.


\appendices.

\section{Decomposition of the probability distribution for Theorem \ref{theo:1CwithChannel}}\label{sec:ProofDecomposition}

In order to prove the assertion $1)$ of Theorem \ref{theo:1CwithChannel}, we assume that the joint distribution $\QQ(u,x,y,v)$ is achievable and we introduce the mean probability distribution $ \overline{\PP}_n(u,x,y,v)$ defined by:
\begin{align}
 \overline{\PP}_n(u,x,y,v) &= \frac{1}{n} \cdot \sum_{j=1}^n  \PP(u_j, x_j, y_j ,v_j).\label{eq:ProofDecomposition13}
\end{align}
Lemma \ref{lemma:MarginalDistribution} states that for all $j\in \{1,\ldots,n\}$, the marginal distribution $\PP(u_j, x_j, y_j ,v_j)$ decomposes as: $\PP_{\sf{u}} (u_j) \times \PP(x_j,v_j |u_j) \times \mc{T}(y_j | x_j)$. Hence the mean  distribution $ \overline{\PP}_n(u,x,y,v)$ also decomposes as:
\begin{align}
 \overline{\PP}_n(u,x,y,v) &= \PP_{\sf{u}} (u) \times    \overline{\PP}_n(x,v|u) \times \mc{T}(y | x),
\end{align}
where for each symbol $u \in \mc{U}$ we have: $\overline{\PP}_n(x,v |u) =  \frac{1}{n} \cdot \sum_{j=1}^n  \PP(x_j, v_j | u_j=u )$. Since the joint distribution $\QQ(u,x,y,v)$ is achievable, there exists a code $c\in\mc{C}(n)$ such that the empirical distribution $Q^n(u,x,y,v)$ converges to $\QQ(u,x,y,v)$, with high probability. Convergence in probability implies the convergence in distribution, hence  $ \overline{\PP}_n(u,x,y,v) $ also converges to $\QQ(u,x,y,v)$. This proves that the probability distribution $\QQ(u,x,y,v) = \PP_{\sf{u}} (u) \times \QQ(x,v |u)  \times \mc{T}(y | x) $ satisfies the assertion $1)$ of Theorem \ref{theo:1CwithChannel}.

\begin{remark}[Implementable probability distribution]
The mean probability distribution $ \overline{\PP}_n(U,X,Y,V)$, defined in equation \eqref{eq:ProofDecomposition13}, corresponds to the definition of \textit{implementable probability distribution}, stated in \cite{GossnerHernandezNeyman06}, that is weaker than the definition of empirical coordination, see \cite[Proposition 5]{LarrousseLasaulceBloch(IT)14}. \end{remark}

\begin{lemma}[Marginal distribution]\label{lemma:MarginalDistribution}
Let $\PP\big( u^n, x^n, y^n, v^n\big)$, the joint distribution induced by the coding scheme $c\in \mc{C}(n)$. For all $j\in \{1,\ldots,n\}$, the marginal distribution satisfies:
\begin{eqnarray}
\PP(u_j,x_j,y_j,v_j) = \PP_{\sf{u}} (u_j) \times \PP(x_j,v_j |u_j) \times \mc{T}(y_j | x_j) .
\end{eqnarray}
\end{lemma}
\begin{proof}[Lemma \ref{lemma:MarginalDistribution}]
The notation $u^{-j}$ stands for the sequence $u^n$ where the symbol $u_j$ has been removed: $u^{-j}=\{u_1,\ldots,u_{j-1},u_{j+1},\ldots,u_n\} \in \mc{U}^{n-1}$.
\begin{align}
&\PP\big( u^n, x^n, y^n, v^n\big) \nonumber \\
&= \prod_{i=1}^n \PP_{\sf{u}} (u_i) \times \PP(x^n|u^n)  \times \prod_{i=1}^n \mc{T}(y_i | x_i) \times \prod_{i=1}^n \PP(v_i |y^{i-1})\label{eq:ProofDecomposition1C0} \\
&= \PP_{\sf{u}} (u_j)\times \PP(u^{-j},x^{n}|u_j)  \times \prod_{i=1}^n \mc{T}(y_i | x_i)  \times \prod_{i=1}^n \PP(v_i |y^{i-1})\label{eq:ProofDecomposition1C1} \\
&= \PP_{\sf{u}} (u_j)\times \PP(u^{-j},x^{n}|u_j) \times \mc{T}(y_j | x_j)\nonumber \\
& \times   \PP(y^{-j}|u^n,x^n)  \times \prod_{i=1}^n \PP(v_i |y^{i-1}) \label{eq:ProofDecomposition1C3} \\
 &= \PP_{\sf{u}} (u_j) \times \PP(u^{-j}, x^{n}, y^{-j} , v_j |u_j)  \nonumber \\
& \times   \mc{T}(y_j | x_j)  \times\prod_{i \neq j } \PP(v_i |y^{i-1}) \label{eq:ProofDecomposition1C5} \\
    &= \PP_{\sf{u}} (u_j) \times \PP(x_j,v_j |u_j) \times \mc{T}(y_j | x_j)   \nonumber \\
& \times 
    \PP(v^{-j} ,y^{-j} , x^{-j},u^{-j} |u_j,x_j,y_j,v_j) . \label{eq:ProofDecomposition1C7} 
\end{align}
Equation \eqref{eq:ProofDecomposition1C0} comes from the properties of the i.i.d. information source, the non-causal encoder, the memoryless channel and the strictly causal decoder.\\
Equation \eqref{eq:ProofDecomposition1C1} comes from the i.i.d. property of the information source.\\
Equation \eqref{eq:ProofDecomposition1C3}  comes from the memoryless property of the channel.\\
Equation  \eqref{eq:ProofDecomposition1C5} comes from the strictly causal decoding.\\
Equation  \eqref{eq:ProofDecomposition1C7} concludes Lemma \ref{lemma:MarginalDistribution} by taking the sum over $(v^{-j} ,y^{-j} , x^{-j},u^{-j})$.
\end{proof}




\section{Achievability proof of Theorem \ref{theo:1CwithChannel}}\label{sec:ProofAchievability}

In order to prove assertion $2)$ of Theorem \ref{theo:1CwithChannel}, we consider a joint probability distribution $\QQ(u,x,w,y,v) \in \Q$ that achieves the maximum in  \eqref{eq:1CwithChannel1}. In this section, we assume that the information constraint \eqref{eq:1CwithChannel1} is satisfied with strict inequality \eqref{eq:SCDstrict}. The case of equality in \eqref{eq:1CwithChannel1} will be discussed in App. \ref{sec:ProofEqualityIC}.
\begin{eqnarray}
&& I( W;Y |V) - I(U;V,W) \\
 &=&  I( W;Y ,V) - I(U,V;W) - I(U;V) >0.\label{eq:SCDstrict}
\end{eqnarray}
There exists a small parameter $\delta>0$, a rate parameter $\textsf{R}\geq 0 $, corresponding to the source coding and a rate parameter $ \textsf{R}_{\sf{L}}\geq 0 $, corresponding to the binning parameter, such that:
\begin{eqnarray}
\textsf{R}  & =&  I( V ; U )  + \delta , \label{eq:AchievabilitySCD1} \\
\textsf{R}_{\sf{L}}  & =&   I( W ; U ,V )  + \delta ,  \label{eq:AchievabilitySCD2}  \\
\textsf{R} +   \textsf{R}_{\sf{L}}   & \leq&   I( W;Y ,V)  -  \delta  .  \label{eq:AchievabilitySCD3}  
\end{eqnarray}

We consider a block-Markov random code $c\in \mc{C}(n\cdot B)$, defined over $B\in \N$ blocks of length $n\in \N$. The total length of the code is denoted by $\tilde{n} = n\cdot B\in \N$.  We denote by $Q^{\widetilde{n} }$, the empirical distribution of the sequences of symbols $(U^{\widetilde{n} },X^{\widetilde{n} },Y^{\widetilde{n} },V^{\widetilde{n} })$ of length $\tilde{n}  \in \N$ and $\QQ_{\sf{uxyv}}$, the target joint probability distribution. The notations $(U_{b_1}^n,U_{b_2}^n,U_{b_3}^n, \ldots, U_{B-1}^n, U_{B}^n)$ stands for the sequences of symbols corresponding to the blocks $b \in \{1,2,3,\ldots,B-1,B\}$, with length $n\in \N$. The parameter $\varepsilon>0$ is involved in both the definition of the typical sequences and the upper bound for the error probability. We prove that for all $\varepsilon>0$, there exists an $\bar{n}\in \N$, such that for all $\tilde{n} = n\cdot B \geq \bar{n}$, there exists a strictly causal code $c \in \mc{C}(n\cdot B)$ that satisfies:
\begin{eqnarray}
\PP_c\bigg(\Big|\Big|Q^{\widetilde{n} } - \QQ_{\sf{uxyv}} \Big|\Big|_{\sf{tv}} \geq \varepsilon\bigg)\leq \varepsilon.
\end{eqnarray}
\begin{figure}[!ht]
\begin{center}
\psset{xunit=0.5cm,yunit=0.5cm}
\begin{pspicture}(-0.5,-3)(14,10)
\psline[linewidth=2pt](0,-2)(14,-2)
\psline[linewidth=2pt](0,0)(14,0)
\psline[linewidth=2pt](0,2)(14,2)
\psline[linewidth=2pt](0,4)(14,4)
\psline[linewidth=2pt](0,6)(14,6)
\psline[linewidth=2pt](0,8)(14,8)
\psline[linewidth=0.5pt,linestyle=dashed ](2,-3)(2,9)
\psline[linewidth=0.5pt,linestyle=dashed ](4,-3)(4,9)
\psline[linewidth=0.5pt,linestyle=dashed ](6,-3)(6,9)
\psline[linewidth=0.5pt,linestyle=dashed ](8,-3)(8,9)
\psline[linewidth=0.5pt,linestyle=dashed ](10,-3)(10,9)
\psline[linewidth=0.5pt,linestyle=dashed ](13,-3)(13,9)
\rput[u](-0.5,8){$U^n$}
\rput[u](-0.5,6){$V^n$}
\rput[u](-0.5,4){$W^n$}
\rput[u](-0.5,2){$X^n$}
\rput[u](-0.5,0){$Y^n$}
\rput[u](-0.5,-2){$\hat{V}^n$}
\rput[u](3,8.8){$b-1$}
\rput[u](5,8.8){$b$}
\rput[u](7,8.8){$b+1$}
\rput[u](9,8.8){$b+2$}
\pspolygon[fillstyle=vlines*,linecolor=red,hatchcolor = red](6,7.8)(6,8.2)(8,8.2)(8,7.8)
\pspolygon[fillstyle=vlines*,linecolor=red,hatchcolor = red](6,5.8)(6,6.2)(8,6.2)(8,5.8)
\pspolygon[fillstyle=vlines*,linecolor=red,hatchcolor = red](4,3.8)(4,4.2)(6,4.2)(6,3.8)
\pspolygon[fillstyle=vlines*,linecolor=red,hatchcolor = red](4,1.8)(4,2.2)(6,2.2)(6,1.8)
\pspolygon[fillstyle=vlines*,linecolor=red,hatchcolor = red](4,-0.2)(4,0.2)(6,0.2)(6,-0.2)
\pspolygon[fillstyle=vlines*,linecolor=red,hatchcolor = red](6,-2.2)(6,-1.8)(8,-1.8)(8,-2.2)
\psline[linewidth=1.5pt,linecolor=red  ]{->}(7,7.7)(7,6.3)
\psline[linewidth=1.5pt,linecolor=red  ]{->}(7,5.7)(5,4.3)
\psline[linewidth=1.5pt,linecolor=red ]{->}(5,3.7)(5,2.2)
\psline[linewidth=1.5pt,linecolor=red ]{->}(5,1.7)(5,0.2)
\psline[linewidth=1.5pt,linecolor=red ]{->}(5,-0.3)(7,-1.8)
\pspolygon[fillstyle=vlines*,linecolor=blue,hatchcolor = blue](8,7.8)(8,8.2)(10,8.2)(10,7.8)
\pspolygon[fillstyle=vlines*,linecolor=blue,hatchcolor = blue](8,5.8)(8,6.2)(10,6.2)(10,5.8)
\pspolygon[fillstyle=vlines*,linecolor=blue,hatchcolor = blue](6,3.8)(6,4.2)(8,4.2)(8,3.8)
\pspolygon[fillstyle=vlines*,linecolor=blue,hatchcolor = blue](6,1.8)(6,2.2)(8,2.2)(8,1.8)
\pspolygon[fillstyle=vlines*,linecolor=blue,hatchcolor = blue](6,-0.2)(6,0.2)(8,0.2)(8,-0.2)
\pspolygon[fillstyle=vlines*,linecolor=blue,hatchcolor = blue](8,-2.2)(8,-1.8)(10,-1.8)(10,-2.2)
\psline[linewidth=1.5pt,linecolor=blue  ]{->}(9,7.7)(9,6.3)
\psline[linewidth=1.5pt,linecolor=blue  ]{->}(9,5.7)(7,4.3)
\psline[linewidth=1.5pt,linecolor=blue ]{->}(7,3.7)(7,2.2)
\psline[linewidth=1.5pt,linecolor=blue ]{->}(7,1.7)(7,0.2)
\psline[linewidth=1.5pt,linecolor=blue ]{->}(7,-0.3)(9,-1.8)
\pspolygon[fillstyle=vlines*,linecolor=vert,hatchcolor = vert](4,7.8)(4,8.2)(6,8.2)(6,7.8)
\pspolygon[fillstyle=vlines*,linecolor=vert,hatchcolor = vert](4,5.8)(4,6.2)(6,6.2)(6,5.8)
\pspolygon[fillstyle=vlines*,linecolor=vert,hatchcolor = vert](2,3.8)(2,4.2)(4,4.2)(4,3.8)
\pspolygon[fillstyle=vlines*,linecolor=vert,hatchcolor = vert](2,1.8)(2,2.2)(4,2.2)(4,1.8)
\pspolygon[fillstyle=vlines*,linecolor=vert,hatchcolor = vert](2,-0.2)(2,0.2)(4,0.2)(4,-0.2)
\pspolygon[fillstyle=vlines*,linecolor=vert,hatchcolor = vert](4,-2.2)(4,-1.8)(6,-1.8)(6,-2.2)
\psline[linewidth=1.5pt,linecolor=vert  ]{->}(5,7.7)(5,6.3)
\psline[linewidth=1.5pt,linecolor=vert  ]{->}(5,5.7)(3,4.3)
\psline[linewidth=1.5pt,linecolor=vert ]{->}(3,3.7)(3,2.2)
\psline[linewidth=1.5pt,linecolor=vert ]{->}(3,1.7)(3,0.2)
\psline[linewidth=1.5pt,linecolor=vert ]{->}(3,-0.3)(5,-1.8)
\end{pspicture}
\end{center}
\label{fig:Ach1CwithChannel}
\end{figure}
\begin{itemize}
\item[$\bullet$] \textit{Random codebook.} We generate $| \mc{M}   |= 2^{n   \sf{R}   } $ sequences $V^n(m)$,  drawn from the i.i.d. probability distribution $\QQ_{\sf{v}}^{\times n} $ with index  $m\in \mc{M} $. We generate $| \mc{M}    \times \mc{M}_{\sf{L}}  |= 2^{n (  \sf{R}  +  \sf{R}_{\sf{L}} ) } $ sequences $W^n(m,l)$,  drawn from the i.i.d. probability distribution $\QQ_{\sf{w}}^{\times n} $, independently of $V^n(m)$, with indices  $(m,l) \in \mc{M}  \times \mc{M}_{\sf{L}} $. \\
\item[$\bullet$] \textit{Initialization of the encoder.} The encoder finds the index $m\in \mc{M}$ such that the sequences $\big(U^n_{b_2},V^n(m)\big) \in A_{\varepsilon}^{{\star}{n}}(\QQ)$ of the second block $b_2$, are jointly typical. We denote by $V^n_{b_2} = V^n(m)$, the sequence corresponding to the block $b_2$. During the first block $b_1$, the encoder sends the index $m\in \mc{M}$ using Shannon's channel coding theorem with the codeword $X^n(m)= X_{b_1}^n  \in \mc{X}^n$. More details are provided in Remark \ref{remark:ChannelCode}. The sequences $\big(U_{b_1}^n , X_{b_1}^n \big) \notin A_{\varepsilon}^{{\star}{n}}(\QQ)$ are not jointly typical, in general. The encoder finds the index $m'\in \mc{M}$ such that the sequences  $\big(U^n_{b_3},V^n(m')\big) \in A_{\varepsilon}^{{\star}{n}}(\QQ)$ of the third block $b_3$, are jointly typical. It finds the index $l' \in \mc{M}_{\sf{L}}$ such that the sequences $\big(U^n_{b_2},V^n_{b_2},W^n(m',l') \big) \in A_{\varepsilon}^{{\star}{n}}(\QQ)$ of the second block $b_2$, are jointly typical. We denote by $V^n_{b_3} = V^n(m')$ and $W^n_{b_2} =W^n(m',l') $, the sequences corresponding to the blocks $b_3$ and $b_2$. During the second block, the encoder sends the sequence $X^n_{b_2}$,  drawn from the conditional probability distribution $\QQ_{\sf{x|uvw}}^{\times n}$  depending on the sequences $\big(U^n_{b_2},V^n_{b_2},W^n_{b_2}\big)$.\\
\begin{remark}\label{remark:ChannelCode}
In the first block $b_1$, the index $m\in \mc{M}$ is transmitted using Shannon's channel coding theorem, stated in \cite[pp. 39]{ElGammalKim(book)11}. Equation \eqref{eq:RemarkAchiev3} guarantees that this transmission is reliable. 
\begin{align}
0 &\leq I( W;Y,V) -   I( W ; U ,V ) -  I(U ;V  )  - 3\delta \label{eq:RemarkAchiev1}\\
&= I( W;Y|V) -   I( W ; U |V ) -  I(U ;V  )  - 3\delta \label{eq:RemarkAchiev1b}\\
&\leq I( W;Y|V) -  I(U ;V )  - 3\delta \label{eq:RemarkAchiev1bc}\\
&\leq I(X;Y) - I( U ;V )- 3  \delta \label{eq:RemarkAchiev2}\\
&=I(X;Y) - \sf{R} - 2  \delta  \label{eq:RemarkAchiev3} . 
\end{align}
Equation \eqref{eq:RemarkAchiev1} comes from the equations \eqref{eq:AchievabilitySCD1}, \eqref{eq:AchievabilitySCD2},  \eqref{eq:AchievabilitySCD3}, defined with the parameter $\delta>0$.\\
Equations \eqref{eq:RemarkAchiev1b} and \eqref{eq:RemarkAchiev1bc} come from the properties of the mutual information.\\
Equation \eqref{eq:RemarkAchiev2} comes from the Markov chain property of the channel $ Y -\!\!\!\!\minuso\!\!\!\!-   X  -\!\!\!\!\minuso\!\!\!\!-  (U,V,W)$. \\
Equation \eqref{eq:RemarkAchiev3} comes from the rate parameter $\sf{R}$, defined by equation \eqref{eq:AchievabilitySCD1}.
\end{remark}
\item[$\bullet$] \textit{Initialization of the decoder.} During the first block $b_1$, the decoder returns an arbitrary sequence of symbols $V_{b_1}^n \in \mc{V}^n$. The sequences $\big(U_{b_1}^n , X_{b_1}^n , Y_{b_1}^n , V_{b_1}^n \big) \notin A_{\varepsilon}^{{\star}{n}}(\QQ)$ are not jointly typical, in general. At the end of the first block, the decoder observes the sequence of channel outputs $Y_{b_1}^n \in \mc{Y}^n$ and finds the index $m\in \mc{M} $ such that the sequences $\big(X^n(m) , Y_{b_1}^n \big) \in A_{\varepsilon}^{{\star}{n}}(\QQ)$ are  jointly typical. During the second block $b_2$, it returns the sequence $V^n_{b_2}=V^n(m)$  that corresponds to the index $m\in \mc{M} $. At the end of the second block, the decoder observes $Y_{b_2}^n \in \mc{Y}^n$, recalls $V_{b_2}^n \in \mc{V}^n$ and finds the pair of indices $(m',l') \in \mc{M}   \times \mc{M}_{\sf{L}} $ such that  $\big( Y^n_{b_2} ,V^n_{b_2} ,W^n(m',l') \big) \in A_{\varepsilon}^{{\star}{n}}(\QQ)$ are jointly typical. In the third block $b_3$, the decoder returns the sequence $V^n_{b_3}= V^n(m')$,  that corresponds to the index $m'\in \mc{M} $.\\
\item[$\bullet$] \textit{Encoding function.} At the beginning of the block $b\in \{2,\ldots B-1\}$, the encoder observes the sequence of symbols of source $U^n_{b+1} \in  \mc{U}^n$ of the next block $b+1$. It finds an index $m\in \mc{M}$ such that the sequences  $\big(U^n_{b+1},V^n(m)\big) \in A_{\varepsilon}^{{\star}{n}} (\QQ)$ are jointly typical. The encoder observes the jointly typical sequences of symbols $(U^n_{b},V^n_{b}) \in  \mc{U}^n \times  \mc{V}^n$, of the current block $b$. It finds the index $l\in \mc{M}_{\sf{L}}$ such that the sequences  $\big(U^n_{b},V^n_{b},W^n(m,l) \big) \in A_{\varepsilon}^{{\star}{n}}(\QQ)$ are jointly typical. We denote by $V^n_{b+1} = V^n(m)$ and $W^n_b = W^n(m,l)$, the sequences corresponding to the blocks $b+1$ and $b$. The encoder sends the sequence $X^n_b$,  drawn from the conditional probability distribution $\QQ_{\sf{x|uvw}}^{\times n}$  depending on the sequences $\big(U^n_{b},V^n_{b},W^n_b\big)$ of the block $b$.\\
\item[$\bullet$] \textit{Decoding function.} At the end of the block $b\in \{2,\ldots B-1\}$, the decoder observes the sequence $Y^n_b$ and recalls the sequence $V^n_b$, it returned during the block $b $. It finds the indices $(m,l) \in \mc{M}   \times \mc{M}_{\sf{L}} $ such that  $\big( Y^n_b ,V^n_b ,W^n(m,l) \big) \in A_{\varepsilon}^{{\star}{n}}(\QQ)$ are jointly typical. In the next block $b+1 $, the decoder returns the sequence $V^n_{b+1} = V^n(m) $,  that corresponds to the index $m\in \mc{M} $.\\
\item[$\bullet$] \textit{Last blocks for the encoder.} The encoder finds the index $m\in \mc{M}$ such that the sequences  $\big(U^n_{B},V^n(m)\big) \in A_{\varepsilon}^{{\star}{n}}(\QQ)$ of the last block $B$, are jointly typical. It finds the index $l\in \mc{M}_{\sf{L}}$ such that the sequences $\big(U^n_{B-1},V^n_{B-1},W^n(m,l) \big) \in A_{\varepsilon}^{{\star}{n}}(\QQ)$  of the block $B-1$, are jointly typical. We denote by $V^n_{B} = V^n(m)$ and $W^n_{B-1}  = W^n(m,l) $, the sequences corresponding to the blocks $B$ and $B-1$. During the block $B-1$, the encoder sends the sequence $X^n_{B-1}$,  drawn from the conditional probability distribution $\QQ_{\sf{x|uvw}}^{\times n}$  depending on the sequences $\big(U^n_{B-1},V^n_{B-1},W^n_{B-1}\big)$. During the last block $B$, the encoder sends a sequence $X^n_{B}$  drawn from the conditional probability distribution $\QQ_{\sf{x|uv}}^{\times n}$  depending on the sequences $\big(U^n_{B},V^n_{B}\big)$.  \\
\item[$\bullet$] \textit{Last blocks for the decoder.} At the end of the block $B-1$, the decoder observes the sequence of channel outputs $Y_{B-1}^n \in \mc{Y}^n$, recalls $V_{B-1}^n \in \mc{V}^n$ and finds the pair of indices $(m,l) \in \mc{M}   \times \mc{M}_{\sf{L}} $ such that  $\big( Y^n_{B-1} ,V^n_{B-1} ,W^n(m,l) \big) \in A_{\varepsilon}^{{\star}{n}}(\QQ)$ are jointly typical. In the last block $B$, the decoder returns the sequence $V^n_{B} =V^n(m)$,  that corresponds to the index $m\in \mc{M} $. \\
\item[$\bullet$] \textit{Typical sequences.} If no error occurs in the coding process, the sequences of symbols $\big( U^n_{b} ,W^n_{b} ,X^n_{b} ,Y^n_{b} ,V^n_{b}  \big) \in A_{\varepsilon}^{{\star}{n}}(\QQ)$ are jointly typical, for each block $b \in \{2, \ldots, B-1\}$. The sequences $\big( U^n_{B} ,X^n_{B} ,Y^n_{B} ,V^n_{B}  \big) \in A_{\varepsilon}^{{\star}{n}}(\QQ)$ of the last block $B$ are also jointly typical but the sequences  $\big( U^n_{b_1} ,X^n_{b_1} ,Y^n_{b_1} ,V^n_{b_1} \big) \notin A_{\varepsilon}^{{\star}{n}}(\QQ)$ of the first block $b_1$, are not jointly typical in general.\\ 
\end{itemize}

\begin{figure}[!ht]
\begin{center}
\psset{xunit=0.7cm,yunit=0.7cm}
\begin{pspicture}(0.8,-1)(12,4.5)
\pscircle[linecolor = blue](0.5,0.5){0.35}
\psframe[linecolor = blue](2,0)(3,1)
\psframe(4,0)(5,1)
\pscircle(6.5,0.5){0.35}
\psframe(8,0)(9,1)
\pscircle(4.5,2.5){0.35}
\psframe[linecolor = blue](10,0)(11,1)
\psline[linewidth=1pt,linecolor = blue]{->}(1,0.5)(2,0.5)
\psline[linewidth=1pt]{->}(3,0.5)(4,0.5)
\psline[linewidth=1pt]{->}(5,0.5)(6,0.5)
\psline[linewidth=1pt]{->}(7,0.5)(8,0.5)
\psline[linewidth=1pt]{->}(9,0.5)(10,0.5)
\psline[linewidth=1pt]{->}(4.5,2)(4.5,1)
\psline[linewidth=1pt]{->}(5,2.5)(8.5,2.5)(8.5,1)
\psline[linewidth=1pt,linecolor = blue]{->}(11,0.5)(12,0.5)
\rput(0.5,0.5){\textcolor[rgb]{0.00,0.00,1.0}{$\PP_{\sf{u}}$}}
\rput(2.5,0.5){\textcolor[rgb]{0.00,0.00,1.0}{$\C$}}
\rput(4.5,0.5){$\C$}
\rput(6.5,0.5){$\mc{T}$}
\rput(8.5,0.5){$\D$}
\rput(4.5,2.5){\textcolor[rgb]{0.00,0.00,0.0}{$\PP_{\sf{uv}}$}}
\rput(10.5,0.5){\textcolor[rgb]{0.00,0.00,1.0}{$\D$}}
\rput[u](1.5,0.8){\textcolor[rgb]{0.00,0.00,1.0}{$U^n_{b+1}$}}
\rput[u](12,0.8){\textcolor[rgb]{0.00,0.00,1.0}{$V^n_{b+1}(m)$}}
\psline[linewidth=0.5pt, linestyle = dashed](3.5,-1)(3.5,4.5)
\psline[linewidth=0.5pt, linestyle = dashed](9.5,-1)(9.5,4.5)
\rput[u](5.5,0.8){$X^n_{b}$}
\rput[u](7.5,0.8){$Y^n_{b}$}
\rput[u](5.4,1.65){$(U^n_{b},V^n_{b})$}
\rput[u](8.1,1.65){$V^n_{b}$}
\rput[u](3.5,0.8){\textcolor[rgb]{0.70,0.00,0.0}{${m}$}}
\rput[u](9.5,0.8){\textcolor[rgb]{0.70,0.00,0.0}{${{m}}$}}
\rput[l](0,4){Lossy source code}
\rput[l](0,3.5){block $b+1$}
\rput[l](5.2,4){Channel code}
\rput[l](5.2,3.5){block $b$}
\rput[l](9,4){Lossy source code}
\rput[l](9.5,3.5){block $b+1$}
\rput[l](3.8,-0.7){Two-sided state information}
\end{pspicture}
\caption{This coding scheme is inspired by the block-Markov code represented by Fig. 6, in \cite{LetreustZaidiLasaulce(Allerton)11}. It is the concatenation of a lossy source coding, for the block $b+1$ and a channel coding with two-sided state information, for the block $b$. The encoder determines  the index $m \in \mc{M}$, such that the sequences $\big(U^n_{b+1} , V^n(m) \big) \in A_{\varepsilon}^{{\star}{n}}(\QQ)$ of the block $b+1$, are jointly typical.  During the previous block $b$, the index $m \in \mc{M}$ is sent using the channel coding of Cover and Chiang \cite{CoverChiang02}, where the encoder observes $(U^n_{b} , V^n_{b}) $ and the decoder observes  $V^n_{b} $.  The pair of sequences $(U^n_{b} , V^n_{b}) $ does not influence the statistics of the  channel $\mc{T}(y|x)$.  
}\label{fig:BlockMarkovSCD}
\end{center}
\end{figure}

\vspace{-0.5cm}

\begin{remark}[State-dependent channel]\label{remark:StateDependent}
As mentioned by Theorem 14 in \cite{LarrousseLasaulceBloch(IT)14}, this coding scheme also applies to state-dependent memoryless channel $\mc{T}_{\sf{y|uvx}}$ where the symbols $(U,V)$ are the channel states.
\end{remark}

\begin{remark}[Strictly causal decoding with delay $k>1$]\label{remark:FixedDelayofDecoding}
This achievability proof still holds when considering a strictly causal decoding function $V_i = g_i(Y^{i-k})$ with a larger delay $k>1$. In that case, the index $m\in \mc{M}$ corresponding to the future block $b+k$, will be encoded on the current block $b$. The sequences are not jointly typical over the $k>1$ first blocks. 
\end{remark}

\textit{Expected error probability for each block $b \in \{2,\ldots ,B\}$.} We introduce the parameter $\varepsilon_1>0$, in order to provide an upper bound on the expected error probability by block. For all $\varepsilon_1>0$ there exists an $\bar{n} \in \N$ such that for all $n\geq\bar{n}$, the expected probability of the following error events are bounded by $\varepsilon_1$:
\begin{align}
&\E_c\bigg[ \PP\bigg(U^n_{b}     \notin A_{\varepsilon}^{{\star}{n}} (\QQ) \bigg)\bigg]  \leq \varepsilon_1, \label{eq:AchievProba1} \\
&\E_c\bigg[ \PP\bigg( \forall  m \in \mc{M}  , 
\big(U^n_{b}, V^n(m) \big) \notin A_{\varepsilon}^{{\star}{n}}(\QQ) \bigg)\bigg]  \leq \varepsilon_1, \label{eq:AchievProba2} 
\end{align}
\begin{align}
&\E_c\bigg[ \PP\bigg( \forall  l \in \mc{M}_{\sf{L}}   ,\qquad\qquad\qquad\qquad\qquad\qquad\qquad \nonumber \\
&\quad \quad\big(U^n_{b-1}, V^n_{b-1} ,W^n(m,l) \big) \notin A_{\varepsilon}^{{\star}{n}}(\QQ) \bigg)\bigg]  \leq \varepsilon_1, \label{eq:AchievProba3} \\
&\E_c\bigg[ \PP\bigg(  \exists (m', l')\neq  (m ,l)  ,\text{ s.t. } \qquad \nonumber \\
&\quad \quad\big(Y^n_{b-1} , V^n_{b-1}  ,W^n(m',l') \big) \in A_{\varepsilon}^{{\star}{n}}( \QQ)\bigg)\bigg]   \leq \varepsilon_1.\label{eq:AchievProba4}
\end{align}
Equation \eqref{eq:AchievProba1} comes from the properties of typical sequences, stated in \cite[pp. 27]{ElGammalKim(book)11}.\\
Equation \eqref{eq:AchievProba2} comes from equation \eqref{eq:AchievabilitySCD1} and the covering lemma, stated in \cite[pp. 208]{ElGammalKim(book)11}.\\
Equation \eqref{eq:AchievProba3} comes from equation \eqref{eq:AchievabilitySCD2} and the covering lemma, stated in \cite[pp. 208]{ElGammalKim(book)11}.\\
Equation \eqref{eq:AchievProba4} comes from equation \eqref{eq:AchievabilitySCD3} and the packing lemma, stated in \cite[pp. 46]{ElGammalKim(book)11}.\\

We denote by $\textsf{E}_{b} = \big\{(U_b^n, W_b^n , X_b^n, Y_b^n, V_b^n) \in A_{\varepsilon}^{{\star}{n}}(\QQ) \big\} $, the random event corresponding to the jointly typical  sequences for the block $b\in \{2,\ldots ,B-1\}$ and $\textsf{E}_{b}^c$ the complementary event. Equations  \eqref{eq:AchievProba1},  \eqref{eq:AchievProba2},  \eqref{eq:AchievProba3},  \eqref{eq:AchievProba4} imply that for all block $b \in \{2,\ldots ,B-1\}$, for all $n\geq \bar{n}$, there exists a code $c^{\star} \in \mc{C}(n)$ such that:
\begin{align}
\E_c\bigg[ \PP\bigg(\textsf{E}_{b}^c \bigg| \bigcap_{b' \in \{2,\ldots ,b-1\}} \Big\{ \textsf{E}_{b'}  \Big\} \bigg) \bigg] \leq  4\cdot \varepsilon_1. \label{eq:ProbaIntersectionBlock0}
\end{align}  
If the sequences are jointly typical for all the previous block $b' \in \{2,\ldots ,b-1\}$, then the sequences of the current block $b$ are not jointly typical $\textsf{E}_{b}^c$, with probability less than $4 \cdot \varepsilon_1$. Hence, the sequences are jointly typical for all block $b\in \{2,\ldots ,B-1\}$, with probability more than $(1 - 4 \cdot \varepsilon_1)^{B-2} $.
\begin{align}
 &\E_c\bigg[ \PP\bigg(\forall b \in \{2,\ldots ,B-1\}, \;\; \textsf{E}_{b} \bigg) \bigg] \nonumber \\
&= \E_c\bigg[ \PP\bigg(\bigcap_{b \in \{2,\ldots ,B-1\}} \textsf{E}_{b} \bigg) \bigg] \label{eq:ProbaIntersectionBlock1} \\
&= \prod_{b \in \{2,\ldots ,B-1\}} \E_c\bigg[ \PP\bigg(\textsf{E}_{b} \bigg| \bigcap_{b' \in \{2,\ldots ,b-1\}} \Big\{\textsf{E}_{b'}\Big\}  \bigg) \bigg] \label{eq:ProbaIntersectionBlock2} \\
&\geq \prod_{b \in \{2,\ldots ,B-1\}} (1 - 4 \cdot\varepsilon_1)  =  \bigg(1 - 4 \cdot \varepsilon_1\bigg)^{B-2} \label{eq:ProbaIntersectionBlock3}.
\end{align}
Equations \eqref{eq:ProbaIntersectionBlock1} and  \eqref{eq:ProbaIntersectionBlock2} come from the definition of the intersection of probability events.\\
Equation \eqref{eq:ProbaIntersectionBlock3} comes from equation \eqref{eq:ProbaIntersectionBlock0}, that is valid for all block $b \in \{2,\ldots ,B-1\}$.\\


\textit{Expected error probability of the block-Markov code.} 

We denote by $Q_{b_1}$, the empirical distribution of the sequences $(U^n_{b_1}, X^n_{b_1}, Y^n_{b_1}, V^n_{b_1})$ of the first block $b_1$ and by $Q^{\widetilde{n} }$, the empirical distribution of the sequences $(U^{\widetilde{n} }, X^{\widetilde{n} }, Y^{\widetilde{n} }, V^{\widetilde{n} })$ of the block-Markov code of length $\tilde{n} = n \cdot B \in \N$. We denote by $(U^{\widehat{n}}, X^{\widehat{n}}, Y^{\widehat{n}}, V^{\widehat{n}})$ the truncated sequences of length $\hat{n} = n \cdot (B-2)$, corresponding to the blocks $\{2, \ldots,B-1\}$, where the first and last blocks  have been removed. We denote by $Q^{\widehat{n} }  \in \Delta(\mc{U}\times \mc{X}\times \mc{Y}\times \mc{V} )$, the empirical distribution of these truncated sequences $(U^{\widehat{n}}, X^{\widehat{n}}, Y^{\widehat{n}}, V^{\widehat{n}})$. We introduce the parameter $\varepsilon_2>0$,  for the definition of the typical sequences and we assume it satisfies $2 \cdot \varepsilon_2 \leq \varepsilon$. If the number of blocks is sufficiently large \textit{i.e.}, $ \frac{4}{\varepsilon_2} \leq B $, we show that $Q^{\widehat{n} }$ is close to $Q^{\widetilde{n} }$. 
\begin{align}
&\Big|\Big|Q^{\widetilde{n} }  - Q^{\widehat{n} }\Big|\Big|_{\sf{tv}}\nonumber \\
&= \Big|\Big|\frac{1}{B} \cdot \Big( (B-2) \cdot Q^{\widehat{n} } +  Q_{b_1}+  Q_{B}\Big)  - Q^{\widehat{n} }\Big|\Big|_{\sf{tv}} \label{eq:ProbaBlock1}\\
&= \frac{1}{B} \cdot \Bigg( \Big|\Big|   Q_{b_1} - Q^{\widehat{n} } \Big|\Big|_{\sf{tv}} +  \Big|\Big|   Q_{B} - Q^{\widehat{n} } \Big|\Big|_{\sf{tv}} \Bigg)\leq \frac{4}{B} \leq\varepsilon_2. \label{eq:ProbaBlock3}
\end{align}
Equation \eqref{eq:ProbaBlock1} comes from the definition of the empirical distribution $Q^{\widetilde{n} } = \frac{1}{B} \cdot \big( (B-2) \cdot Q^{\widehat{n} } +  Q_{b_1} +  Q_B \big)$ that is a convex combination of distributions $Q_{b_1}$, $Q_B$ and $Q^{\widehat{n} }$.\\
Equation \eqref{eq:ProbaBlock3} comes from the upper bound on the total variation distance \cite[eq. (7)]{Verdu(ITA)14} and the large number of blocks: $ \frac{4}{\varepsilon_2} \leq B$. \\

The parameter $\varepsilon_2>0$ for the typical sequences satisfies $2 \cdot \varepsilon_2 \leq \varepsilon$. We provide an upper bound on the expected error probability of the block-Markov code.
\begin{align}
&\E_c\bigg[ \PP_{\sf{e}}({c}) \bigg] \nonumber \\
&\leq  \E_c\bigg[ \PP\bigg(\Big|\Big|Q^{\widetilde{n} } - \QQ  \Big|\Big|_{\sf{tv}}\geq 2 \cdot \varepsilon_2 \bigg) \bigg] \label{eq:ProbaConcatenation1} \\
&=  \E_c\bigg[ \PP\bigg(\Big|\Big|Q^{\widetilde{n} }  - Q^{\widehat{n} } + Q^{\widehat{n} } - \QQ \Big|\Big|_{\sf{tv}}\geq 2 \cdot \varepsilon_2 \bigg) \bigg]\label{eq:ProbaConcatenation2} \\
&\leq  \E_c\bigg[ \PP\bigg(\Big|\Big|Q^{\widetilde{n} }  - Q^{\widehat{n} }\Big|\Big| + \Big|\Big|Q^{\widehat{n} } - \QQ \Big|\Big|_{\sf{tv}}\geq 2 \cdot \varepsilon_2 \bigg) \bigg] \label{eq:ProbaConcatenation3} \\
&\leq  \E_c\bigg[ \PP\bigg(  \Big|\Big|Q^{\widehat{n} } - \QQ \Big|\Big|_{\sf{tv}}\geq 2 \cdot \varepsilon_2 -  \frac{4}{B}  \bigg) \bigg] \label{eq:ProbaConcatenation4} \\
&\leq  \E_c\bigg[ \PP\bigg(  \Big|\Big|Q^{\widehat{n} } - \QQ \Big|\Big|_{\sf{tv}}\geq \varepsilon_2 \bigg) \bigg] \label{eq:ProbaConcatenation5} \\
&= 1 -   \E_c\bigg[ \PP\bigg(  \Big|\Big|Q^{\widehat{n} } - \QQ \Big|\Big|_{\sf{tv}}< \varepsilon_2 \bigg) \bigg] \label{eq:ProbaConcatenation6} \\
&\leq 1 -   \E_c\bigg[ \PP\bigg( \forall b \in \{2,\ldots ,B-1\}, \nonumber \\
& \qquad(U_b^n,  X_b^n, Y_b^n, V_b^n) \in A_{\varepsilon_2}^{{\star}{n}}(\QQ) \bigg) \bigg] \label{eq:ProbaConcatenation7} \\
&\leq  1 - \bigg(1 - 4 \cdot {\varepsilon}_1   \bigg)^{B-2} .\label{eq:ProbaConcatenation8} 
\end{align}
Equation \eqref{eq:ProbaConcatenation1} comes from the choice of the parameters satisfying $2 \cdot \varepsilon_2  \leq  \varepsilon$. By definition of the typical sequences stated in \cite[pp. 25]{ElGammalKim(book)11}, the inequality $\varepsilon'  \leq  \varepsilon$, implies  that $\PP\big(U^n \notin A_{\varepsilon}^{{\star}{n}}(\QQ)\big) \leq \PP\big(U^n \notin A_{ \varepsilon'  }^{{\star}{n}}(\QQ)\big)$.\\
Equation \eqref{eq:ProbaConcatenation3} comes from the triangle inequality.\\
Equations \eqref{eq:ProbaConcatenation4} and \eqref{eq:ProbaConcatenation5} come from the equation \eqref{eq:ProbaBlock3} that requires a large number of blocks $B$ \textit{i.e.}, $ \frac{4}{\varepsilon_2} \leq B$.\\
Equation \eqref{eq:ProbaConcatenation7} comes from Lemma \ref{lemma:concatenation} and the fact that if $A \Rightarrow B$, then we have $\PP(A) \leq \PP(B)$. If the sequences $(U_b^n, X_b^n, Y_b^n, V_b^n)  \in A_{\varepsilon_2}^{{\star}{n}}(\QQ)$ are jointly typical for all block $b \in \{2,\ldots,B-1\}$, then the concatenated sequences are also jointly typical for the same parameter $\varepsilon_2>0$.\\
Equation \eqref{eq:ProbaConcatenation8} comes from equation \eqref{eq:ProbaIntersectionBlock3}.\\

Equation \eqref{eq:ProbaConcatenation8} implies that for all parameter ${\varepsilon}>0$, there exists a parameter $\varepsilon_2>0$ for the typical sequences, there exists a large number of blocks $B$,  there exists a small error probability by block $\varepsilon_1>0$, there exists a large $\bar{n}$ such that for all $\tilde{n}\geq \bar{n} \cdot B$, there exists a  code $c^{\star} \in \mc{C}(\tilde{n})$ with an error probability below $\varepsilon>0$, with:
\begin{eqnarray}
\varepsilon \geq \max\Big(1 - (1 - 4 \cdot {\varepsilon}_1   )^{B-2} , 2 \cdot \varepsilon_2 \Big) . 
\end{eqnarray}
This concludes the achievability proof of Theorem \ref{theo:1CwithChannel}.

\begin{remark}\label{remark:parameters}
The parameter $\varepsilon>0$ determines the appropriate parameters $\varepsilon_2$, $B$, $\varepsilon_1$, $n$, as explained below:\\

$\bullet$ The parameter ${\varepsilon_2}>0$ for the typical sequences depends on the parameter $\varepsilon>0$ and should satisfy equation \eqref{eq:ProbaBlock0}:
\begin{eqnarray}
2 \cdot \varepsilon_2  &\leq  \varepsilon.  \label{eq:ProbaBlock0}
\end{eqnarray}

$\bullet$ The number of blocks $B$ depends on the parameter ${\varepsilon_2}>0$ for the typical sequences and should satisfy equation \eqref{eq:ProbaBlock3}:
\begin{eqnarray}
  \frac{4}{B}
&\leq \varepsilon_2.  \label{eq:ProbaBlock3b}
\end{eqnarray}

$\bullet$  The parameter $\varepsilon_1$ of the error probability by block, depends on the number of blocks $B$, of the parameter $\varepsilon$ and should satisfy  equation \eqref{eq:ProbaConcatenation8}:
 \begin{eqnarray}
1 - \bigg(1 - 4 \cdot {\varepsilon}_1   \bigg)^{B-2} & \leq& \varepsilon.\label{eq:ProbaConcatenation8b} 
\end{eqnarray}

$\bullet$  The length $n\in\N$ of each block depends on the parameter $\varepsilon_1$ of the error probability by block and the parameter $\varepsilon_2$ of the typical sequences. For each $b \in \{2,\ldots,B-1\}$, the parameter $n$ should satisfy the equation \eqref{eq:ProbaIntersectionBlock0}:
\begin{eqnarray}
\E_c\bigg[ \PP\bigg(\textsf{E}_{b}^c \bigg| \bigcap_{b' \in \{2,\ldots ,b-1\}} \Big\{ \textsf{E}_{b'}  \Big\} \bigg) \bigg] \leq  4\cdot \varepsilon_1. \label{eq:ProbaIntersectionBlock0b}
\end{eqnarray}  \\
\end{remark}

\begin{lemma}\label{lemma:concatenation}
If the sequences $U_b^n \in A_{\varepsilon}^{{\star}{n}}(\QQ)$ are jointly typical for all block $b \in \{1,\ldots,B\}$, then the concatenated sequence $U^{\widetilde{n} } = (U^n_{b_1} , \ldots, U^n_{B} )\in A_{\varepsilon}^{{\star}{n}}(\QQ)$, is also jointly typical for the same parameter $\varepsilon>0$.
\end{lemma}

\begin{proof}
The proof of Lemma \ref{lemma:concatenation} is based on the triangle inequality. We consider a target probability distribution $\QQ\in\Delta(\mc{U})$. We denote by $Q^{\widetilde{n} }$ the empirical distribution of the concatenated sequence $U^{\widetilde{n} }= (U^n_{b_1} , \ldots, U^n_{B} )$ and $(Q_{b_1},Q_{b_2}, \ldots,Q_{B})$ the empirical distributions of the sequences $(U^n_{b_1} ,U^n_{b_2} , \ldots, U^n_{B} )$ over the blocks $b \in \{1,\ldots,B\}$. 
\begin{align}
&\Big|\Big|Q^{\widetilde{n} }  - \QQ\Big|\Big|_{\sf{tv}} \nonumber \\
&=  \Big|\Big|\frac{1}{B} \cdot (Q_{b_1} + Q_{b_2} + \ldots + Q_{B})  - \QQ\Big|\Big|_{\sf{tv}}\label{eq:lemmaTriangle1} \\
&=  \frac{1}{B} \cdot \Big|\Big|Q_{b_1} + Q_{b_2} + \ldots + Q_{B}  - B \cdot  \QQ\Big|\Big|_{\sf{tv}} \label{eq:lemmaTriangle2}\\
&\leq  \frac{1}{B} \cdot \bigg( \Big|\Big|Q_{b_1} - \QQ\Big|\Big|_{\sf{tv}} + \Big|\Big|Q_{b_2} - \QQ\Big|\Big|_{\sf{tv}} + \ldots + \Big|\Big|Q_{B} - \QQ\Big|\Big|_{\sf{tv}} \bigg) \label{eq:lemmaTriangle3} \\
&\leq  \frac{1}{B} \cdot \bigg( B \cdot \varepsilon \bigg) = \varepsilon. \label{eq:lemmaTriangle4}
\end{align}
Equation \eqref{eq:lemmaTriangle1} comes from the definition of the empirical distribution: $Q^{\widetilde{n} } = \frac{1}{B} \cdot \sum_{b \in \{1,\ldots,B\}} Q_b $ as a convex combination of the empirical distributions $Q_b$, with $b \in \{1,\ldots,B\}$.\\
Equations \eqref{eq:lemmaTriangle2} and \eqref{eq:lemmaTriangle3} come from the properties of the total variation distance and the  triangle inequality.\\
Equation \eqref{eq:lemmaTriangle4} comes from the hypothesis of typical sequences $U_b^n$, for all block $b \in \{1,\ldots,B\}$.\\
\end{proof}


\section{Equality in the information constraint \eqref{eq:1CwithChannel1}}\label{sec:ProofEqualityIC}

We consider a target distribution $ \QQ(u,x,y,v)= \PP_{\sf{u}}(u)   \times \QQ(x,v | u) \times  \mc{T}(y | x )$ such that the maximum in the information constraint \eqref{eq:1CwithChannel1} is equal to zero:
\begin{align}
\max_{{\QQ}\in \Q} \bigg( I( W;Y  |V )  -   I(  U ; V  ,W  )  \bigg)=0. \label{eq:EqualityIC}
\end{align}

\subsection{First case: channel capacity is strictly positive}
The channel is not trivial and it is possible to send some reliable information: 
\begin{align}
\max_{\PP(x)} I(X;Y)  >0. \label{eq:1Capacity}
\end{align}
We denote by $\QQ^{\star}(u,x,y,v)=\PP_{\sf{u}}(u) \times \PP^{\star}(x) \times \mc{T}(y|x) \times \QQ(v)$ the distribution where $U$, $V$ and $(X,Y)$ are mutually independent and where $\PP^{\star}(x)$ achieves the maximum in \eqref{eq:1Capacity}. We denote by $\max_{{\QQ^{\star}}\in \Q^{\star}} \bigg( I_{\QQ^{\star}}( W;Y  |V )  -   I_{\QQ^{\star}}( U ; V  ,W  )  \bigg) $ the information constraint corresponding to $\QQ^{\star}(u,x,y,v)$ and we show that it is strictly positive:
\begin{align}
&\max_{{\QQ^{\star}}\in \Q^{\star}} \bigg( I_{\QQ^{\star}}( W;Y  |V )  -   I_{\QQ^{\star}}( U ; V  ,W  )  \bigg) \nonumber \\
&= I(X;Y) - I(U;V) \label{eq:MutualIndep1}\\
& = I(X;Y)>0.   \label{eq:MutualIndep2}
\end{align}
Equation \eqref{eq:MutualIndep1} comes from Corollary \ref{coro:Shannon}, since $(U,V)$ are independent of $(X,Y)$.\\
Equation \eqref{eq:MutualIndep2} comes from the independence between $U$ and $V$ and the last inequality comes from the hypothesis of strictly positive channel capacity \eqref{eq:1Capacity}.

We construct the sequence $\{\QQ^k(u,x,y,v)\}_{k\in\N}$ of convex combinations between the target distribution $\QQ(u,x,y,v)$ and the distribution $\QQ^{\star}(u,x,y,v)$ where for all $(u,x,y,v)$:
\begin{align}
\QQ^k(u,x,y,v)=\frac{1}{k} \cdot\bigg( (k-1) \cdot \QQ(u,x,y,v) + \QQ^{\star}(u,x,y,v)\bigg).\nonumber
\end{align}
We denote by $\max_{{\QQ^k}\in \Q^k} \bigg( I_{\QQ^k}( W;Y  |V )  -   I_{\QQ^k}( U ; V  ,W  )  \bigg) $ the information constraint corresponding to the distribution $\QQ^k(u,x,y,v)$.\\
$\bullet$ The information constraint   \eqref{eq:EqualityIC} corresponding to $\QQ(u,x,y,v)$ is equal to zero.\\
$\bullet$ The information constraint  \eqref{eq:MutualIndep2} corresponding to $\QQ^{\star}(u,x,y,v)$ is strictly positive.\\
By Theorem \ref{theo:ConvexProblem}, the information constraint is concave with respect to the distribution. Hence, the information constraint corresponding to the distribution $\QQ^k(u,x,y,v)$ is strictly positive for all $k>1$:
\begin{align}
\max_{{\QQ^k}\in \Q^k} \bigg(  I_{\QQ^k}( W;Y  |V )  -    I_{\QQ^k}(  U ; V  ,W  )  \bigg)>0. \label{eq:ConvexCom}
\end{align}
 App. \ref{sec:ProofAchievability} concludes that the distribution $\QQ^k(u,x,y,v)$ is achievable, for all $k>1$. Moreover, the distribution $\QQ^k(u,x,y,v)$ converges to the target distribution $\QQ(u,x,y,v)$, as $k$ goes to $+\infty$. This proves that the target distribution $\QQ(u,x,y,v) $ is achievable.

\subsection{Second case: channel capacity is equal to zero}

We assume that the channel capacity is equal to zero: $\max_{\PP(x)} I(X;Y) =0$, and we show that the set of achievable distributions $\mc{A}$ defined in \eqref{eq:AchievableSet}, boils down to the set of distributions $\big\{  \QQ(v) \times  \QQ(x | u,v)\big\}$.
\begin{align}
\mc{A} &= \bigg\{  \QQ(x,v | u), \text{ s.t. } \nonumber \\
&\max_{\QQ(w|u,v,x),\atop |\mc{W}| \leq  |\mc{U} \times \mc{X} \times \mc{V}| +1} \bigg( I( W;Y  |V )  -   I(  U ; V  ,W  )  \bigg) \geq 0\bigg\} \label{eq:Boundary00} \\
&= \bigg\{   \QQ(x,v | u), \qquad \qquad\text{ s.t. }\qquad I(U;V) =0\bigg\} \label{eq:Boundary01} \\
&= \bigg\{   \QQ(v) \times  \QQ(x | u,v) \bigg\} \label{eq:Boundary02}.
\end{align}
Equation \eqref{eq:Boundary00} comes from the definition of $\mc{A}$, stated in \eqref{eq:AchievableSet}.\\
Equation \eqref{eq:Boundary01} comes from the hypothesis of channel capacity equal to zero $\max_{\PP(x)} I(X;Y) =0$ and Lemma \ref{lemma:ZeroCapacityIC}.\\
Equation \eqref{eq:Boundary02} comes from Lemma \ref{lemma:decomposition2}.

\textit{Coding Scheme:} We consider a target distribution $ \PP_{\sf{u}}(u)   \times \QQ(v) \times \QQ(x | u,v)\times  \mc{T}(y)$ that belongs to the set defined by \eqref{eq:Boundary02}.\\
$\bullet$ The sequence $V^n$ is drawn with the i.i.d. probability $\QQ(v)$ and known in advance by both encoder and decoder,\\
$\bullet$ The encoder observes the sequence of source $U^n$ and generates the sequence $X^n$ with the i.i.d. probability distribution   $\QQ(x|u,v)$, depending on  the pair $(U^n,V^n)$. \\
$\bullet$ This coding scheme proves that any probability distribution that belongs to the set defined by \eqref{eq:Boundary02} is achievable:
\begin{align} 
 \PP_{\sf{u}}(u)   \times \QQ(v) \times \QQ(x | u,v) \times  \mc{T}(y ).
\end{align}

This proves that if $\max_{\PP(x)} I(X;Y) =0$, then any distribution $  \PP_{\sf{u}}(u)   \times\QQ(x,v|u) \times  \mc{T}(y|x ) \in \mc{A}$  that belongs to the set defined by \eqref{eq:Boundary00} is achievable.

\begin{remark}[Correlation between $U$ and $V$]
The distribution $\QQ^1(u,x,y,v)$ stated in \eqref{eq:trivialcode1} is achievable with a trivial coding scheme,  whereas the distribution $\QQ^2(u,x,y,v)$ stated in \eqref{eq:trivialcode2} is achievable using the coding scheme presented in  App. \ref{sec:ProofAchievability}.
\begin{align}
\QQ^1(u,x,y,v)=\PP_{\sf{u}}(u)   \times \QQ(v) \times \QQ(x| u,v) \times  \mc{T}(y | x ),\label{eq:trivialcode1}\\
\QQ^2(u,x,y,v)= \PP_{\sf{u}}(u)   \times \QQ(v|u) \times \QQ(x| u,v) \times  \mc{T}(y | x ).\label{eq:trivialcode2}
\end{align}
The price of the correlation between the symbols $V$ and $U$ is captured by the information constraint \eqref{eq:1CwithChannel1}.
\end{remark}


\begin{lemma}\label{lemma:ZeroCapacityIC}
We consider both sets of probability distributions:
\begin{align}
\mc{A} &= \bigg\{  \QQ(x,v | u), \text{ s.t. }\nonumber \\
&\max_{\QQ(w|u,v,x),\atop |\mc{W}| \leq  |\mc{U} \times \mc{X} \times \mc{V}| +1} \bigg( I( W;Y  |V )  -   I(  U ; V  ,W  )  \bigg) \geq 0\bigg\} \\
\mc{B}&= \bigg\{   \QQ(x,v | u), \qquad \qquad\text{ s.t. }\qquad I(U;V) =0\bigg\}.
\end{align}
If the channel capacity is equal to zero: $\max_{\PP(x)} I(X;Y) =0$, then both sets of probability distributions are equal $\mc{A} = \mc{B}$.
\end{lemma}

\begin{proof}[Lemma \ref{lemma:ZeroCapacityIC}]
\textit{First inclusion $\mc{A} \subset \mc{B}$.} We consider a distribution $\QQ(x,v | u)$ that belongs to $\mc{A}$ and we denote by $W$ the auxiliary random variable that achieves the maximum in the information constraint. Since the channel capacity is equal to zero, we have:
\begin{align}
 0 &= \max_{\PP(x)} I(X;Y)   \geq I(W;Y,V) \nonumber \\
 &\geq I(W;U,V) + I(U;V) \geq I(U;V)\geq0. \label{eq:Capacity05}
\end{align}
This implies that $I(U;V)=0$, hence the distribution $\QQ(x,v | u)$ belongs to the set  $\big\{   \QQ(x,v | u),  \;\; \text{ s.t. }  \;\;  I(U;V) =0\big\}$. This proves the first inclusion $\mc{A} \subset \mc{B}$.\\
\textit{Second inclusion $\mc{A} \supset \mc{B}$.} We consider a distribution $\QQ(x,v | u)$ that belongs to $\mc{B}$, hence for which $I(U;V)=0$. We introduce a deterministic auxiliary variable $\widetilde{W}$, for which $I(\widetilde{W};Y,V) = I(\widetilde{W};U,V)=0$. Hence the information constraint of the distribution $\QQ(x,v | u)$ satisfies:
\begin{align}
&\max_{\QQ(w|u,v,x),\atop |\mc{W}| \leq  |\mc{U} \times \mc{X} \times \mc{V}| +1} \bigg( I( W;Y  |V )  -   I(  U ; V  ,W  )  \bigg)\nonumber \\
 &\geq I(\widetilde{W};Y,V) - I(\widetilde{W};U,V) - I(U;V) = 0.
\end{align}
The information constraint corresponding to $\QQ(x,v | u)$ is positive, hence $\QQ(x,v | u)$ belongs to $\mc{A}$. This shows the second inclusion $\mc{A} \supset \mc{B}$.
\end{proof}

\begin{lemma}\label{lemma:decomposition2}
Both sets of probability distributions are equal:
\begin{align} 
&\bigg\{  \PP_{\sf{u}}(u)   \times \QQ(x,v | u),  \;\; \text{ s.t. } \;\; I(U;V) =0\bigg\} \nonumber \\
&= \bigg\{ \PP_{\sf{u}}(u)   \times  \QQ(v) \times  \QQ(x | u,v) \bigg\}.
\end{align}
\end{lemma}

\begin{proof}[Lemma \ref{lemma:decomposition2}]
We consider that the set $\mc{U}$, $\mc{V}$, $\mc{X}$ are fixed.
\begin{align}
&\bigg\{ \QQ(u,v,x), \quad \text{ s.t. } \quad I(U;V) = 0\bigg\}\nonumber \\
&=\bigg\{ \QQ(u,v) \times \QQ(x|u,v), \; \text{ s.t. } \; I(U;V) = 0\bigg\}\label{eq:LemmaDecomposition1}\\
&=\bigg\{ \QQ(u)   \times \QQ(v) \times \QQ(x|u,v)\bigg\}.\label{eq:LemmaDecomposition2}
\end{align}
Equation \eqref{eq:LemmaDecomposition1} comes from the decomposition of the probability distribution $ \QQ(u,v,x)= \QQ(u,v) \times \QQ(x|u,v)$.\\
Equation \eqref{eq:LemmaDecomposition2} comes from the equivalence $I(U;V) = 0 \Longleftrightarrow \QQ(u,v) = \QQ(u)   \times \QQ(v)$. \\
When the marginal distribution $\PP_{\sf{u}}(u)  $ is fixed, we have:
\begin{align} 
& \bigg\{  \PP_{\sf{u}}(u)   \times \QQ(x,v | u),  \;\; \text{ s.t. } \;\; I(U;V) =0\bigg\} \nonumber \\
&= \bigg\{ \PP_{\sf{u}}(u)   \times  \QQ(v) \times  \QQ(x | u,v) \bigg\}.
\end{align}
\end{proof}


\section{Converse proof of Theorem \ref{theo:1CwithChannel}.}\label{sec:ProofConverse}

We suppose that the target joint probability distribution $\PP_{\sf{u}}(u)   \times \QQ(x, v | u) \times  \mc{T}(y | x )$ is achievable with a strictly causal code. For all $\varepsilon>0$, there exists a minimal length $\bar{n}\in \N$, such that for all $n \geq \bar{n}$, there exists a code $c\in\mc{C}(n)$, such that the probability of error satisfies $\PP_{\sf{e}}(c) = \PP\big(||{Q}^n - \QQ  ||_{\sf{tv}} > \varepsilon \big) \leq \varepsilon$. The parameter $\varepsilon>0$ is involved in both the definition of the typical sequences and the upper bound of the error probability. We introduce the random  event of error $E \in \{0,1\}$ defined as follows:
\begin{eqnarray}
E = \Bigg\{
\begin{array}{lll}
0 &\text{ if }& \big|\big|{Q}^n - \QQ  \big|\big|_{\sf{tv}} \leq \varepsilon\quad\\
1 &\text{ if }& \big|\big|{Q}^n - \QQ  \big|\big|_{\sf{tv}} > \varepsilon \quad \\
\end{array}
\Bigg.\label{eq:DefErrorEvent}
\end{eqnarray}
We have the following equations:
\begin{align}
0&= \sum_{i=1}^n I(U^n_{i+1} ; Y_i  | Y^{i-1})
 - \sum_{i=1}^n I(   Y^{i-1}  ; U_i  | U^n_{i+1} ) \label{eq:Conv1SCwithChannel1} \\
&=  \sum_{i=1}^n I( U^n_{i+1} ; Y_i  | Y^{i-1}) 
- \sum_{i=1}^n I(    Y^{i-1} , U^n_{i+1} ; U_i  ) \label{eq:Conv1SCwithChannel2} \\
&=  \sum_{i=1}^n I( U^n_{i+1} ; Y_i   | Y^{i-1}, V_i) 
- \sum_{i=1}^n I(    Y^{i-1} , U^n_{i+1},  V_i; U_i  ) \label{eq:Conv1SCwithChannel2b} \\
&\leq  \sum_{i=1}^n I(  Y^{i-1},  U^n_{i+1} ; Y_i  | V_i) 
- \sum_{i=1}^n I(    Y^{i-1} , U^n_{i+1} , V_i; U_i )  \label{eq:Conv1SCwithChannel2c} \\
&=  \sum_{i=1}^n I(   W_{i} ; Y_i     | V_i) - \sum_{i=1}^n I(  W_{i} , V_i   ; U_i  ) \label{eq:Conv1SCwithChannel3} .
\end{align}
Equation \eqref{eq:Conv1SCwithChannel1} comes from Csisz\'{a}r Sum Identity stated in \cite[pp. 25]{ElGammalKim(book)11}.\\
Equation \eqref{eq:Conv1SCwithChannel2} comes from the i.i.d. property of the information source $U$, that implies $ I(U_i   ; U^n_{i+1} ) = 0$ for all $i\in\{1 , \ldots,n\}$.\\
Equation \eqref{eq:Conv1SCwithChannel2b} comes from the strictly causal decoding property that implies the following Markov chain:
\begin{align}
V_i -\!\!\!\!\minuso\!\!\!\!- Y^{i-1} -\!\!\!\!\minuso\!\!\!\!-  (U_i , Y_i,U^n_{i+1} ).
\end{align}
Equation \eqref{eq:Conv1SCwithChannel2c} comes from the properties of the mutual information.\\
Equation \eqref{eq:Conv1SCwithChannel3} comes from the introduction of the auxiliary random variable $W_{i} = (Y^{i-1}  , U^n_{i+1})$ that satisfies the Markov Chain of the set of probability distributions $\Q$.
\begin{align}
&Y_i -\!\!\!\!\minuso\!\!\!\!- X_i  -\!\!\!\!\minuso\!\!\!\!-  (U_i , W_{i} ,V_i )  .
\end{align}
This Markov chain comes from the memoryless property of the channel and the fact that $Y_i$ does not belong to $W_{i}= (Y^{i-1}  , U^n_{i+1}) $.\\
\begin{align}
0&\leq  \sum_{i=1}^n I(   W_{i} ; Y_i      | V_i) - \sum_{i=1}^n I(  W_{i} , V_i   ; U_i )  \\
&=  n \cdot \bigg( I(  W_{T}  ; Y_T   | V_T, T) -  I(  W_{T} , V_T  ; U_T  | T) \bigg)\label{eq:Conv1SCwithChannel4} \\
&\leq  n \cdot \bigg( I(W_{T}  ,T ; Y_T  | V_T ) -  I(  W_{T} ,T , V_T    ; U_T ) \bigg)\label{eq:Conv1SCwithChannel5} \\
&\leq n \cdot \max_{\QQ \in \Q} \bigg( I(W  ; Y_T  | V_T ) -  I(  W , V_T   ; U_T) \bigg) \label{eq:Conv1SCwithChannel6} \\
&\leq  n \cdot \max_{\QQ \in \Q} \bigg(   I(W  ; Y_T | V_T,E =0 )\nonumber \\
& -  I(  W , V_T   ; U_T | E=0 ) + \varepsilon  \bigg) 
\label{eq:Conv1SCwithChannel7} \\
&\leq  n \cdot \max_{\QQ \in \Q} \bigg( I(W  ; Y | V  ) -  I(  W , V   ; U ) + 2\varepsilon \bigg). \label{eq:Conv1SCwithChannel9} 
\end{align}
Equation \eqref{eq:Conv1SCwithChannel4} comes from the introduction of the uniform random variable $T$ over the indices $\{1,\ldots,n\}$ and the introduction of the corresponding mean random variables $U_T$,  $W_{T}$,   $X_T$, $Y_T$, $V_T$.\\
Equation \eqref{eq:Conv1SCwithChannel5} comes from the i.i.d. property of the information source that implies $I(T ; U_T) =0$.\\
Equation \eqref{eq:Conv1SCwithChannel6} comes from identifying $W$ with  $( W_{T},T)$ and taking the  maximum over the probability distributions that belong to $\Q$. This is possible since the random variable $W =( W_{T},T)$ satisfies the Markov chain of the set of probability distributions $\Q$.\\
Equation \eqref{eq:Conv1SCwithChannel7} comes from the empirical coordination framework, as stated in Lemma \ref{lemma:ErrorEventCoordinationSCD}. 
By hypothesis, the sequences are not jointly typical with small error probability $\PP_{\sf{e}}(c) = \PP(E=1)$. Lemma \ref{lemma:ErrorEventCoordinationSCD} adapts the proof of Fano's inequality to the empirical coordination requirement.\\
Equation \eqref{eq:Conv1SCwithChannel9} comes from Lemma \ref{lemma:3}. The probability distribution induced by the coding scheme $ \PP\big((U_T,X_T,Y_T ,V_T)= (u,x,y,v) \big| E=0\big)$ is closed to the target probability distribution $\PP_{\sf{u}}(u)   \times \QQ(x,v | u) \times  \mc{T}(y | x )$. The continuity of the entropy function stated in \cite[pp. 33]{CsiszarKorner(Book)11} implies equation \eqref{eq:Conv1SCwithChannel9}.

If the joint probability distribution $\PP_{\sf{u}}(u)   \times \QQ(x,v | u) \times  \mc{T}(y | x )$ is achievable with a strictly causal code, then the following equation is satisfied for all $\varepsilon>0$:
\begin{align}
0&\leq& \max_{\QQ \in \Q} \bigg( I(W  ; Y | V  ) -  I(  W , V   ; U ) + 2\varepsilon \bigg).
\end{align}
This concludes the converse proof of Theorem \ref{theo:1CwithChannel}.

\begin{remark}[Stochastic encoder and decoder]
This converse result still holds when considering stochastic encoder and decoder instead of deterministic encoder and decoder.
\end{remark}

\begin{remark}[Channel feedback observed by the encoder]
\label{remark:FeedbackConverse}
The converse proof of Theorem \ref{theo:1CwithChannel} is based on the following assumptions:\\
$\bullet$ The information source $U$ is i.i.d distributed with $\PP_{\sf{u}}(u)$.\\
$\bullet$ The decoding function is strictly causal $V_i = g_i(Y^{i-1})$, for all $i\in\{1,\ldots,n\}$.\\
$\bullet$ The auxiliary random variables $W_i = (Y^{i-1} , U^n_{i+1} )$  satisfies the Markov chain $Y_i  -\!\!\!\!\minuso\!\!\!\!-X_i    -\!\!\!\!\minuso\!\!\!\!-  (U_i,V_i,W_i)$ of the channel, for all $i\in \{1 , \ldots,n\}$. \\
$\bullet$ The sequences of random variables $(U^n,X^n,Y^n,V^n)$ are jointly typical for the target probability distribution $\PP_{\sf{u}}(u)   \times \QQ(x,v | u) \times  \mc{T}(y | x )$, with high probability.

As mentioned in \cite{LarrousseLasaulceWigger(ITW)15}, each step of the converse holds when the encoder $X_i = f_i(U^n,Y_1^{i-1})$ observes the channel feedback $Y_1^{i-1}$ with $i\in\{1,\ldots,n\}$, drawn from the memoryless channel $\mc{T}(y_1,y | x )$. The encoder ignores the channel feedback since it arrives too late to be exploited by the strictly causal decoder. 
\end{remark}

\begin{lemma}[Fano's inequality for coordination]\label{lemma:ErrorEventCoordinationSCD}
We fix a probability distribution $\QQ \in \Q$ and we suppose that the error probability $\PP(E=1)$ (see definition in \eqref{eq:DefErrorEvent}) is small enough such that $\PP(E=1) \cdot \log_2 |\mc{Y}| + 2 \cdot h_b\Big(\PP(E=1)\Big) \leq \varepsilon$. Then equation \eqref{eq:ErrorCoordinationSCD} is satisfied.
\begin{eqnarray}
&& I(W  ; Y_T  |  V ) -  I( W ,  V_T   ; U_T  ) \nonumber\\
&\leq&   I(W  ; Y_T |  V , E =0 ) -  I( W ,  V_T  ; U_T | E=0 ) + \varepsilon.\nonumber \\&&\label{eq:ErrorCoordinationSCD}
\end{eqnarray}
\end{lemma}

\begin{proof}
The proof of Lemma \ref{lemma:ErrorEventCoordinationSCD} comes from the properties of the mutual information.
\begin{align*}
& I(W  ; Y_T  | V_T ) -  I(  W , V_T   ; U_T  ) \\
&= I(W  ; Y_T  | V_T, E  ) -  I(  W,  V_T   ; U_T | E )\\
&+ I(E ; Y_T   | V_T ) - I(E ; Y_T  | W, V_T ) \nonumber \\
&-  I(  E   ; U_T ) +  I(  E ; U_T  | W, V_T)\\
&\leq  I(W  ; Y_T  | V_T, E  ) -  I(  W,  V_T   ; U_T | E )+ 2 \cdot H(E)\\
&= \PP(E=0) \cdot \Big( I(W  ; Y_T  |  V_T, E =0 )\nonumber \\
&-   I(  W,  V_T   ; U_T | E  = 0)  \Big)  + 2 \cdot H(E) + \PP(E=1)\nonumber \\
&\times \Big( I(W  ; Y_T  | V_T,  E =1 ) -  I(  W,  V_T   ; U_T | E  = 1) \Big)  \\
&\leq   I(W   ; Y_T  | V_T,  E =0 ) -  I(  W,  V_T   ; U_T | E  = 0)   \nonumber \\
&+ \PP(E=1) \cdot \log_2 |\mc{Y} | + 2 \cdot h_b\Big(\PP(E=1)\Big)\\
&\leq   I(W  ; Y_T  |  V_T,  E =0 ) -  I(  W,  V_T   ; U_T | E  = 0)  + \varepsilon.\nonumber 
\end{align*}
This concludes the proof of Lemma \ref{lemma:ErrorEventCoordinationSCD}.
\end{proof}

\begin{lemma}\label{lemma:3}
Probability distribution $ \PP\big((U_T,X_T,Y_T ,V_T)= (u,x,y,v) \big| E=0\big)$ is closed to the target probability distribution $\QQ(u,x,y,v)$:
\begin{eqnarray}
 \bigg|  \PP\Big((U_T,X_T,Y_T ,V_T)= (u,x,y,v) \Big| E=0\Big) \nonumber \\
 -  \QQ(u,x,y,v) \bigg|  \leq \varepsilon.
\end{eqnarray} 
\end{lemma}

\begin{proof}[Lemma \ref{lemma:3}]
We fix a symbol $u \in \mc{U}$ and we evaluate the probability $\PP(U_T = u | E=0)$. We show it is closed to the desired probability $\PP_{\sf{u}}(u)$:
\begin{align}
&\PP(U_T = u | E=0) \nonumber \\
&=\sum_{u^n \in A_{\varepsilon}^{{\star}{n}} } \sum_{i = 1}^n  \PP\big(U^n = u^n , T = i , U_T = u  \big| E=0\big)  \label{eq:Lemma3_Eq1} \\
&=\sum_{u^n \in A_{\varepsilon}^{{\star}{n}} } \sum_{i = 1}^n   \PP\big(U^n = u^n   \big| E=0\big)  \nonumber \\
&\times  \PP\big( T = i   \big| U^n = u^n , E=0\big) \nonumber \\
&\times   \PP\big(U_T = u  \big| U^n = u^n , T = i ,  E=0\big)  \label{eq:Lemma3_Eq2} \\
&=\sum_{u^n \in A_{\varepsilon}^{{\star}{n}} } \sum_{i = 1}^n   \PP\big(U^n = u^n   \big| E=0\big)\nonumber \\
&\times   \PP\big( T = i \big) \cdot \PP\big(U_T = u  \big|U^n = u^n , T = i ,  E=0\big)  \label{eq:Lemma3_Eq3} \\
&=\sum_{u^n \in A_{\varepsilon}^{{\star}{n}} } \sum_{i = 1}^n   \PP\big(U^n = u^n   \big| E=0\big)\cdot   \PP\big( T = i \big) \cdot \UN(u_i = u)\label{eq:Lemma3_Eq4} \\
&=\sum_{u^n \in A_{\varepsilon}^{{\star}{n}} }   \PP\big(U^n = u^n   \big| E=0\big)\cdot  \sum_{i = 1}^n  \frac{1}{n} \cdot \UN(u_i = u)\label{eq:Lemma3_Eq5} \\
&=\sum_{u^n \in A_{\varepsilon}^{{\star}{n}} }   \PP\big(U^n = u^n   \big| E=0\big)\cdot    \frac{\textsf{N}(u|u^n)}{n} .\label{eq:Lemma3_Eq6}
\end{align}
Equation \eqref{eq:Lemma3_Eq3} comes from the independence of event $\{T = i\} $ with events $\{U^n = u^n\}$ and $\{E=0\}$.\\
Equation \eqref{eq:Lemma3_Eq6} comes from the definition of the number of occurrences $\textsf{N}(u|u^n) = \sum_{i = 1}^n   \UN(u_T = u )$.\\

Since the sequences $u^n \in A_{\varepsilon}^{{\star}{n}}$ are typical, we have the following equation:
\begin{eqnarray}
 \PP_{\sf{u}}(u) - \varepsilon  \leq&  \frac{\textsf{N}(u|u^n)}{n} & \leq  \PP_{\sf{u}}(u) + \varepsilon .
\end{eqnarray}
This provides an upper bound and a lower bound on  $\PP(U_T = u | E=0)$:
\begin{align}
& \PP_{\sf{u}}(u) - \varepsilon \nonumber \\
 &= \sum_{u^n \in A_{\varepsilon}^{{\star}{n}} }   \PP\big(U^n = u^n   \big| E=0\big)\cdot  \Big( \PP_{\sf{u}}(u) - \varepsilon\Big)\\ 
 &\leq   \PP(U_T = u | E=0)\\
&\leq   \sum_{u^n \in A_{\varepsilon}^{{\star}{n}} }   \PP\big(U^n = u^n   \big| E=0\big)\cdot  \Big( \PP_{\sf{u}}(u) + \varepsilon\Big)\\ 
&=  \PP_{\sf{u}}(u) + \varepsilon, \\
&\Longleftrightarrow \bigg|  \PP(U_T = u | E=0) -  \PP_{\sf{u}}(u) \bigg|  \leq \varepsilon.
\end{align}
Using the same arguments, we prove that $ \PP\Big((U_T,X_T,Y_T ,V_T)= (u,x,y,v) \Big| E=0\Big)$ is closed to the target probability distribution $\QQ(u,x,y,v)$:
\begin{eqnarray}
 \bigg|  \PP\Big((U_T,X_T,Y_T ,V_T)= (u,x,y,v) \Big| E=0\Big)  \nonumber \\
 -  \QQ(u,x,y,v) \bigg|  \leq \varepsilon.
\end{eqnarray} 

This concludes the proof of Lemma \ref{lemma:3}.

\end{proof}


\section{Bound on the cardinality of $|\mc{W}|$ for Theorem \ref{theo:1CwithChannel}}\label{sec:CardinalityBound}

This section is similar to the App. C, in \cite[pp. 631]{ElGammalKim(book)11}. Lemma \ref{lemma:CardinalityBound1C} relies on the support Lemma and the Lemma of Fenchel-Eggleston-Carathéodory, stated in \cite[pp. 623]{ElGammalKim(book)11}.


\begin{lemma}[Cardinality bound for Theorem \ref{theo:1CwithChannel}]\label{lemma:CardinalityBound1C}
The cardinality of the support of the auxiliary random variable $W$ of the Theorem \ref{theo:1CwithChannel}, is bounded by $|\mc{W}| \leq   |\mc{U}  \times \mc{X} \times \mc{V}| +1$.
\end{lemma}

\begin{proof}[Lemma \ref{lemma:CardinalityBound1C}] 
We fix a symbol $w\in \mc{W}$ and we consider the following continuous functions from $\Delta(\mc{U} \times\mc{X} \times  \mc{V})$ into $\R$:
\begin{eqnarray*}
h_i\Big(\PP(u,x,v|w)\Big) = \qquad\qquad\qquad\qquad\qquad\qquad\qquad\\
\begin{cases}
\PP(u,x,v|w)  ,\; \text{ for } i=\big\{ 1,\ldots,  |\mc{U} \times \mc{X} \times\mc{V}| -1\big\},\\
H(Y|V,W=w) ,\quad \text{ for } i=  |\mc{U} \times \mc{X} \times\mc{V}|,\\
H(U|V,W=w)  ,\quad \text{ for } i=  |\mc{U} \times \mc{X} \times\mc{V}| +1.
\end{cases}
\end{eqnarray*}
The conditional entropies $H(Y|V,W=w)$, $H(U|V,W=w)$ are evaluated with respect to the probability distribution $\PP(u,x,v|w) \times \mc{T}(y|x)$, with a fixed $w\in \mc{W}$. The support Lemma, stated in \cite[pp. 631]{ElGammalKim(book)11}, implies that there exists an auxiliary random variable $W' \sim \PP(w')$  defined on a  set $\mc{W}'$ with bounded cardinality $|\mc{W}'|\leq  |\mc{U} \times \mc{X} \times\mc{V}| +1 $, such that:
\begin{eqnarray*}
H(Y|V,W)  &=& \int_{\mc{W}} H(Y|V,W = w) d \PP_{\sf{w}}(w)  \\
&=&   \sum_{w'\in\mc{W}'} H(Y|V,W' = w')  \PP_{\sf{w}'}(w')\\
&=& H(Y|V,W'),\\
H(U|V,W)  &=& \int_{\mc{W}} H(U|V,W = w) d \PP_{\sf{w}}(w)  \\
&=&   \sum_{w'\in\mc{W}'} H(U|V,W' = w')  \PP_{\sf{w}'}(w') \\
&=& H(U|V,W'),\\
\PP(u,x,y,v) &=& \mc{T}(y|x) \times \int_{\mc{W}} \PP(u,v,x | w) d \PP_{\sf{w}}(w) \\
&=& \sum_{w'\in\mc{W}'} \PP(u,v,x | w') \PP_{\sf{w}'}(w') , 
\end{eqnarray*}
for all $ (u,x,v)$ indexed by $i=\big\{ 1,\ldots,  |\mc{U} \times \mc{X}  \times\mc{V} |  +1\big\}$.
Hence, the probability distribution $\PP(u,x,y,v)$ and the conditional entropies $H(Y|V,W)$, $H(U|V,W) $ are  preserved. The information constraint writes:
\begin{eqnarray*}
&&I( W;Y  |V )  -   I(  U ; V  ,W  ) \\
 &= & H(Y  |V )  - H( Y  |V, W )  -   H(  U  )  + H(  U |V  ,W  ) \\
 &= & H(Y  |V )  - H( Y  |V, W' )  -   H(  U  )  + H(  U |V  ,W'  ) \\
&=& I( W';Y  |V )  -   I(  U ; V  ,W'  ) ,
\end{eqnarray*}
with $ |\mc{W}'| \leq  |\mc{U} \times \mc{X} \times \mc{V}| +1$. This concludes the proof of Lemma \ref{lemma:CardinalityBound1C}, for the cardinality bound on the support of the auxiliary random variable $W$ of Theorem \ref{theo:1CwithChannel}.
\end{proof}



\section{Sketch of proof of Corollary \ref{coro:CoordTrans}}\label{sec:ProofCorollaryTrans}

\subsection{Achievability proof of Corollary \ref{coro:CoordTrans}.} 

 We consider a target information rate $\textsf{R} \geq 0$ and a joint probability distribution $\QQ(u,x,w,y,v) \in \Q$ that achieves the maximum in equation \eqref{eq:CoordTrans}. We split the index $m$ into a pair of indices $(m_1,m_2)$. The information message is encoded using the first index $m_1$, with the rate parameter $ \textsf{R}_1 $. The second index $m_2$ of rate $ \textsf{R}_2 $, has the same role as the index $m$ in the proof of Theorem  \ref{theo:1CwithChannel}. We consider a block-Markov random code $c\in \mc{C}(n\cdot B)$ and we prove that the pair of rate and probability distribution $(\textsf{R} , \QQ_{\sf{uxyv}})$ is achievable. There exists a $\delta>0$  and rate parameters $ \textsf{R}_1  \geq 0$,  $ \textsf{R}_2  \geq 0$, $  \textsf{R}_{\sf{L}}  \geq 0$ such that:
\begin{eqnarray}
\textsf{R}_1  & \geq&  \textsf{R} -  \delta , \label{eq:AchievabilityTransSCD0} \\
\textsf{R}_2  & =&  I( V ; U )  + \delta , \label{eq:AchievabilityTransSCD1} \\
\textsf{R}_{\sf{L}}  & =&   I( W ; U ,V )  + \delta ,  \label{eq:AchievabilityTransSCD2}  \\
\textsf{R}_1 + \textsf{R}_2 +   \textsf{R}_{\sf{L}}   & \leq&   I( W;Y ,V)  -  \delta  .  \label{eq:AchievabilityTransSCD3}  
\end{eqnarray}

\begin{itemize}
\item[$\bullet$] \textit{Random codebook.} We generate $| \mc{M}_2   |= 2^{n   \sf{R}_2   } $ sequences $V^n(m_2)$,  drawn from the i.i.d. probability distribution $\QQ_{\sf{v}}^{\times n} $ with index  $m_2\in \mc{M}_2 $. We generate $| \mc{M}_1    \times\mc{M}_2    \times \mc{M}_{\sf{L}}  |= 2^{n ( \sf{R}_1 +  \sf{R}_2  +  \sf{R}_{\sf{L}} ) } $ sequences $W^n(m_1,m_2,l)$,  drawn from the i.i.d. probability distribution $\QQ_{\sf{w}}^{\times n} $ with indices  $(m_1,m_2,l) \in \mc{M}_1  \times \mc{M}_2  \times \mc{M}_{\sf{L}} $, independently of $V^n(m_2)$. \\
\item[$\bullet$] \textit{Encoding function.} At the beginning of the block $b\in \{2,\ldots B-1\}$, the encoder observes the sequence of symbols of source $U^n_{b+1} \in  \mc{U}^n$ of the next block $b+1$. It finds an index $m_2\in \mc{M}_2$ such that the sequences  $\big(U^n_{b+1},V^n(m_2)\big) \in A_{\varepsilon}^{{\star}{n}} (\QQ)$ are jointly typical. The encoder observes the information message $m_{1}\in \mc{M}_1 $ and the jointly typical sequences of symbols $(U^n_{b},V^n_{b}) \in  \mc{U}^n \times  \mc{V}^n$ of the current block $b$. It finds the index $l\in \mc{M}_{\sf{L}}$ such that the sequences  $\big(U^n_{b},V^n_{b},W^n_b(m_1,m_2,l) \big) \in A_{\varepsilon}^{{\star}{n}}(\QQ)$ are jointly typical. We denote by $V^n_{b+1}=V^n(m_2)$ and $W^n_b=W^n(m_1,m_2,l)$, the sequences corresponding to the blocks $b+1$ and $b$. The encoder sends the sequence $X^n_b$,  drawn from the conditional probability distribution $\QQ_{\sf{x|uvw}}^{\times n}$  depending on the sequences $\big(U^n_{b},V^n_{b},W^n_b \big)$ of the block $b$.\\
\item[$\bullet$] \textit{Decoding function.} At the end of the block $b\in \{2,\ldots B-1\}$, the decoder observes the sequence $Y^n_b$ and recalls the sequence $V^n_b$, it returned during the block $b $. It finds the indices $(m_1,m_2,l) \in \mc{M}_1   \times\mc{M}_2   \times \mc{M}_{\sf{L}} $ such that  $\big( Y^n_b ,V^n_b ,W^n(m_1,m_2,l) \big) \in A_{\varepsilon}^{{\star}{n}}(\QQ)$ are jointly typical. In the next block $b+1 $, the decoder returns the sequence $V^n_{b+1} = V^n(m_2)$  that corresponds to the index $m_2\in \mc{M}_2 $. The decoder returns the  message $m_1\in \mc{M}_1 $ corresponding to the transmission during the block $b$. \\
\item[$\bullet$] \textit{Rate of the transmitted information.} If no error occurs during the block-Markov coding process, the decoder returns $B$ messages, corresponding to $B \cdot n \cdot \textsf{R}_1$ information bits. Since the length of the code is $\tilde{n} = n \cdot B$, the corresponding information rate is $\textsf{R}_1$, that is close \eqref{eq:AchievabilityTransSCD0} to the target rate $\textsf{R}$.\\
\end{itemize}

For all $\varepsilon_1>0$ there exists an $\bar{n} \in \N$ such that for all $n\geq\bar{n}$, the expected probability of the following error events are bounded by $\varepsilon_1$:
\begin{align}
&\E_c\bigg[ \PP\bigg(U^n_{b}     \notin A_{\varepsilon}^{{\star}{n}} (\QQ) \bigg)\bigg]  \leq \varepsilon_1, \label{eq:AchievProbaTrans1} \\
&\E_c\bigg[ \PP\bigg( \forall  m_2 \in \mc{M}_2  ,\quad 
\big(U^n_{b}, V^n(m_2) \big) \notin A_{\varepsilon}^{{\star}{n}}(\QQ) \bigg)\bigg]  \leq \varepsilon_1,  \nonumber\\
&\label{eq:AchievProbaTrans2} \\
&\E_c\bigg[ \PP\bigg( \forall  l \in \mc{M}_{\sf{L}}   ,\quad \nonumber\\
&\big(U^n_{b-1}, V^n_{b-1} ,W^n(m_1,m_2,l) \big) \notin A_{\varepsilon}^{{\star}{n}}(\QQ) \bigg)\bigg]  \leq \varepsilon_1, \label{eq:AchievProbaTrans3} \\
&\E_c\bigg[ \PP\bigg(  \exists (m_1',m_2', l')\neq  (m_1,m_2 ,l)  ,\text{ s.t. }  \nonumber\\
&
\big(Y^n_{b-1} , V^n_{b-1}  ,W^n(m_1',m_2',l') \big) \in A_{\varepsilon}^{{\star}{n}}( \QQ)\bigg)\bigg]   \leq \varepsilon_1.\label{eq:AchievProbaTrans4}
\end{align}
Equation \eqref{eq:AchievProbaTrans1} comes from the properties of typical sequences, stated  in \cite[pp. 27]{ElGammalKim(book)11}.\\
Equation \eqref{eq:AchievProbaTrans2} comes from equation \eqref{eq:AchievabilityTransSCD1} and the covering lemma, stated  in \cite[pp. 208]{ElGammalKim(book)11}.\\
Equation \eqref{eq:AchievProbaTrans3} comes from equation \eqref{eq:AchievabilityTransSCD2} and the covering lemma, stated in \cite[pp. 208]{ElGammalKim(book)11}.\\
Equation \eqref{eq:AchievProbaTrans4} comes from equation \eqref{eq:AchievabilityTransSCD3} and the packing lemma, stated in \cite[pp. 46]{ElGammalKim(book)11}.\\

The expected error probability of this block-Markov code is upper bounded, using the same arguments as for the achievability proof of Theorem \ref{theo:1CwithChannel}, stated in App. \ref{sec:ProofAchievability}.

\subsection{Converse proof of Corollary \ref{coro:CoordTrans}.} 

We suppose that the target rate $\textsf{R}\geq0$ and the target joint probability distribution $\PP_{\sf{u}}(u)   \times \QQ(x, v | u) \times  \mc{T}(y | x )$ are achievable with a strictly causal code. 
\begin{align}
&\log_2|\mc{M}|  \nonumber\\
&= H(M) = I( M ;Y^n ) + H(M|Y^n) \label{eq:ConvTransSCD1}  \\
&\leq I( M ;Y^n ) + n \cdot \varepsilon \label{eq:ConvTransSCD2}  \\
&= \sum_{i=1}^n I( U^n_{i+1} ,M ; Y_i | Y^{i-1} ) -   \sum_{i=1}^n I( U^n_{i+1}  ; Y_i  | M, Y^{i-1} )  + n  \varepsilon \label{eq:ConvTransSCD4}  \\
&= \sum_{i=1}^n I( U^n_{i+1} ,M ; Y_i | Y^{i-1} )   -   \sum_{i=1}^n I( Y^{i-1}  ; U_i  | M, U^n_{i+1}  )  + n  \varepsilon \label{eq:ConvTransSCD5}  \\
&= \sum_{i=1}^n I( U^n_{i+1} ,M ; Y_i | Y^{i-1} )   -   \sum_{i=1}^n I(U^n_{i+1} ,M, Y^{i-1}  ; U_i   )  + n  \varepsilon \label{eq:ConvTransSCD6}  \\
&= \sum_{i=1}^n I( U^n_{i+1} ,M ; Y_i | Y^{i-1} ,V_i)  \nonumber\\
& -  \sum_{i=1}^n I(U^n_{i+1} ,M, Y^{i-1} ,V_i ; U_i   )  + n \cdot \varepsilon \label{eq:ConvTransSCD7}  \\
&\leq \sum_{i=1}^n I( U^n_{i+1} ,M, Y^{i-1}  ; Y_i | V_i)  \nonumber\\
& -   \sum_{i=1}^n I(U^n_{i+1} ,M, Y^{i-1} ,V_i ; U_i   )  + n \cdot \varepsilon \label{eq:ConvTransSCD8}  \\
&= \sum_{i=1}^n I( W_i  ; Y_i | V_i)  -   \sum_{i=1}^n I(W_i,V_i ; U_i   )  + n \cdot \varepsilon \label{eq:ConvTransSCD9} .
\end{align}
Equation \eqref{eq:ConvTransSCD1} comes from the uniform distribution of the information message $m \in \mc{M}$ that implies $\log_2|\mc{M}| = H(M) $.\\
Equation \eqref{eq:ConvTransSCD2} comes from Fano's inequality, stated in \cite[pp. 19]{ElGammalKim(book)11}.\\ 
Equation \eqref{eq:ConvTransSCD4} comes from the properties of the mutual information.\\ 
Equation \eqref{eq:ConvTransSCD5} comes from Csisz\'{a}r Sum Identity, stated in \cite[pp. 25]{ElGammalKim(book)11}.\\
Equation \eqref{eq:ConvTransSCD6} comes from the i.i.d. property of the information source  that implies $ I(U_i   ; U^n_{i+1},M ) = 0$ for all $i\in\{1 , \ldots,n\}$.\\
Equation \eqref{eq:ConvTransSCD7} comes from the strictly causal decoding property that implies the following Markov chain:
\begin{eqnarray}
V_i -\!\!\!\!\minuso\!\!\!\!- Y^{i-1} -\!\!\!\!\minuso\!\!\!\!-  (U_i , Y_i,U^n_{i+1},M ).
\end{eqnarray}
Equation \eqref{eq:ConvTransSCD8} comes from the properties of the mutual information.\\
Equation \eqref{eq:ConvTransSCD9} comes from the introduction of the auxiliary random variable $W_{i} = ( U^n_{i+1} , M, Y^{i-1} )$ that satisfies the Markov Chain of the set of probability distributions $\Q$, for all $i\in\{1,\ldots,n\}$.
\begin{eqnarray}
&&Y_i -\!\!\!\!\minuso\!\!\!\!- X_i  -\!\!\!\!\minuso\!\!\!\!-  (U_i , W_{i} ,V_i )  .
\end{eqnarray}

We follows the arguments of the converse proof of Theorem \ref{theo:1CwithChannel}, stated in App. \ref{sec:ProofConverse}. The same conclusion holds for the auxiliary random variable $W = (W_T,T)$. If the pair of target rate $\textsf{R}>0$ and target joint probability distribution $\PP_{\sf{u}}(u)   \times \QQ(x,v | u) \times  \mc{T}(y | x )$ are achievable with a strictly causal code, then the following equation is satisfied for all $\varepsilon>0$:
\begin{eqnarray}
\textsf{R} &\leq& \max_{\QQ \in \Q} \bigg( I(W  ; Y | V  ) -  I(  W , V   ; U ) + \varepsilon \bigg).
\end{eqnarray}
This concludes the converse proof of Corolarry \ref{theo:1CwithChannel}.


\section{Proof of Theorem \ref{theo:AchievableUtility}}\label{sec:ProofCoroUtility}

\subsection{Achievability proof of Theorem \ref{theo:AchievableUtility}}

We consider a utility $\phi^{\star} \in  \textsf{U} $ and the corresponding probability distribution $\QQ^{\star}(x,v|u) \in \mc{A}$, satisfying $\E_{ \QQ^{\star}} \big[\Phi(U,X,Y,V)\big]  = \phi^{\star}$. Theorem \ref{theo:1CwithChannel} guarantees that the conditional probability distribution $\QQ^{\star}(x,v|u) $ is achievable. Hence, there exists a sequence of code $c\in\mc{C}(n)$, whose empirical distributions $Q^n(u,x,y,v)$ converge in probability to the target joint probability distribution $\QQ^{\star}(u,x,y,v) =  \PP_{\sf{u}}(u)   \times \QQ^{\star}(x,v | u) \times  \mc{T}(y | x )$. We denote by $\Phi^n(c)$, the expected utility corresponding to the code $c\in\mc{C}(n)$:
\begin{align}
&\Phi^n(c) = \E \Bigg[ \frac{1}{n} \cdot  \sum_{i=1}^n \Phi(U_i ,X_i ,Y_i ,V_i )\Bigg] \nonumber\\
&=\E \Bigg[ \frac{1}{n} \cdot  \sum_{u,x,y,v} \textsf{N}(u,x,y,v |U^n,X^n,Y^n,V^n)\times   \Phi(u,x,y,v)\Bigg] \label{eq:ProofCoroAchievableUtilityFunction1}  \\
&=\E \Bigg[  \;\; \E_{Q^n} \bigg[  \Phi(U,X,Y,V)  \bigg]\Bigg] .\label{eq:ProofCoroAchievableUtilityFunction2}  
\end{align}
Equations \eqref{eq:ProofCoroAchievableUtilityFunction1} and \eqref{eq:ProofCoroAchievableUtilityFunction2} come from the definition of the empirical distribution $Q^n$ of the random sequences $(U^n,X^n,Y^n,V^n)$, stated in Definition \ref{def:Code}.

The convergence in probability of $Q^n(u,x,y,v)$ toward $\QQ^{\star}(u,x,y,v) =  \PP_{\sf{u}}(u)   \times \QQ^{\star}(x,v | u) \times  \mc{T}(y | x )$ implies that $\Phi^n(c) $ converges to $ \phi^{\star} \in  \textsf{U}$, that is achievable.

\subsection{Converse proof of Theorem \ref{theo:AchievableUtility}}

For the converse proof of Theorem \ref{theo:AchievableUtility}, we consider that the utility $\phi^{\star} $ is achievable. By definition, there exists a sequence of code $c \in \mc{C}(n)$ whose $n$-stage utility $\E\big[ \frac{1}{n} \cdot  \sum_{i=1}^n \Phi(U_i ,X_i ,Y_i ,V_i )\big]$ converges to $\phi^{\star} $. We define the expected empirical distribution of symbols $ \overline{Q}^n = \E\Big[Q^n\Big]$, induced by a code $c\in\C(n)$, and we show that it corresponds to the probability distribution of the mean random variables $(U_T,X_T,Y_T,V_T)$ introduced in the converse proof of Theorem \ref{theo:1CwithChannel}, stated in App. \ref{sec:ProofConverse}:
\begin{align}
 &\overline{Q}^n(u,x,y,v)  = \E\Big[Q^n (u,x,y,v) \Big]\\
&=\sum_{(u^n,x^n,y^n,v^n)}\PP(u^n,x^n,y^n,v^n) \nonumber\\
&\times \frac{1}{n} \cdot \textsf{N}(u,x,y,v|u^n,x^n,y^n,v^n)\\
&=\sum_{(u^n,x^n,y^n,v^n)}\PP\Big(u^n,x^n,y^n,v^n\Big) \nonumber\\
&\times\sum_{i=1}^n \PP\Big(T=i\Big)  \cdot \UN_{\big\{(u_i,u_i,y_i,v_i) = (u,x,y,v)  \big\}} \\
&=\PP\Big((U_T,X_T,Y_T,V_T)=(u,x,y,v)\Big). \label{eq:ExpectedEmpiricalDistribution}
\end{align}
Equations \eqref{eq:Conv1SCwithChannel1}-\eqref{eq:Conv1SCwithChannel5} of the converse proof of Theorem \ref{theo:1CwithChannel}, in App. \ref{sec:ProofConverse}, guarantee that for all $n\geq 1$, the expected empirical distribution $  \overline{Q}^n$ satisfies the information constraint \eqref{eq:Conv1SCwithChannel6bb}:
\begin{align}
0&\leq&   \max_{\QQ \in \Q} \bigg( I(W  ; Y_T  | V_T ) -  I(  W , V_T   ; U_T) \bigg) .\label{eq:Conv1SCwithChannel6bb} 
\end{align}
This proves that the $n$-stage utility  $\E\big[ \frac{1}{n} \cdot  \sum_{i=1}^n \Phi(U_i ,X_i ,Y_i ,V_i )\big]$ belongs to $\textsf{U}$, for all $n\in \N^*$. Moreover, the set $\textsf{U}$ is closed and bounded subset of $\R$, hence it is a compact set. Hence a sequence of utility $\E\big[ \frac{1}{n} \cdot  \sum_{i=1}^n \Phi(U_i ,X_i ,Y_i ,V_i )\big] \in \textsf{U}$ converges to a point of $\textsf{U}$. This proves that the achievable utility $\phi^{\star} $ belongs to the set $\textsf{U}$.


\section{Proof of Theorem \ref{theo:ConvexProblem}}\label{sec:ProofTheoConvex}

We prove that the set $\mc{A}$ is convex in order to show that the optimization problem stated in equation \eqref{eq:optimizationProblem} is a convex problem. To do so, we investigate the  mapping $\Delta$ defined by equation \eqref{eq:MappingDelta} and we prove that it is concave with respect to the conditional probability distribution $\QQ(x,v | u)$
\begin{align}
\Delta : \QQ(x,v | u) \mapsto  \max_{ \QQ(w | u, x,v) } \bigg( I( W;Y  |V )  -   I(  U ; V  ,W  )  \bigg).\label{eq:MappingDelta} 
\end{align}
For all $\lambda\in[0,1]$, for all conditional probability distributions $\QQ^1(x,v | u)$, $\QQ^2(x,v | u)$, we prove that the mapping $\Delta$ satisfies equation \eqref{eq:ConcaveMappingDelta}.
\begin{align}
&&\lambda \cdot \Delta\Big(\QQ^1(x,v | u)\Big) +  (1 - \lambda) \cdot \Delta\Big(\QQ^2(x,v | u)\Big)\nonumber  \\
&\leq& \Delta\Big(\lambda \cdot\QQ^1(x,v | u) + (1 - \lambda) \cdot\QQ^2(x,v | u) \Big).\label{eq:ConcaveMappingDelta}
\end{align}
We denote by $\QQ^{\star1}(w |u,x,v )$ and $\QQ^{\star2}(w |u,x,v )$ the conditional probability distributions that achieve the maximum in equations \eqref{eq:ConcaveProba1} and \eqref{eq:ConcaveProba2}, defined with respect to $\QQ^1(x,v | u)$ and $\QQ^2(x,v | u)$:
\begin{align}
&\max_{ \QQ^1(w | u, x,v) } \bigg( I_{\QQ^1}( W;Y  |V )  -   I_{\QQ^1}(  U ; V  ,W  )  \bigg),\label{eq:ConcaveProba1} \\
&\max_{ \QQ^2(w | u, x,v) } \bigg( I_{\QQ^2}( W;Y  |V )  -   I_{\QQ^2}(  U ; V  ,W  )  \bigg). \label{eq:ConcaveProba2}
\end{align}
We define an auxiliary random variable $Z\in\{1,2\}$, independent of $U$ such that $\PP(Z = 1) = \lambda$ and $\PP(Z = 2) = 1 - \lambda$ and:
\begin{align}
\QQ^{\star} (x,v,w | u,z=1) =  \QQ^1(x,v | u) \cdot \QQ^{\star1}(w |u,x,v ), \label{eq:ConcaveProba3}\\
\QQ^{\star} (x,v,w | u,z=2) =  \QQ^2(x,v | u) \cdot \QQ^{\star2}(w |u,x,v ). \label{eq:ConcaveProba4}
\end{align}
We denote by $\QQ^{\star} (x,v | u)$, the convex combination of $\QQ^1(x,v | u)$ and $\QQ^2(x,v | u)$, defined by:
\begin{align}
&\QQ^{\star} (x,v | u) = \sum_{w,z}\QQ^{\star} (x,v,w,z | u) \\
&=  \sum_{w,z}  \PP(z) \cdot   \QQ^{\star} (x,v,w | u,z)  \\ 
&=   \sum_{w}  \lambda \cdot   \QQ^{\star} (x,v,w | u,z=1) \\
&+  \sum_{w}  (1 - \lambda) \cdot   \QQ^{\star} (x,v,w | u,z=2)   \\ 
&=     \sum_{w}\lambda \cdot  \QQ^1(x,v | u) \cdot \QQ^{\star1}(w |u,x,v ) \\
&+   \sum_{w}(1 - \lambda) \cdot   \QQ^2(x,v | u) \cdot\QQ^{\star2}(w |u,x,v )   \\ 
&=     \lambda \cdot  \QQ^1(x,v | u) \cdot \sum_{w}\QQ^{\star1}(w |u,x,v ) \\
&+   (1 - \lambda) \cdot   \QQ^2(x,v | u) \cdot \sum_{w} \QQ^{\star2}(w |u,x,v )   \\ 
&=     \lambda \cdot  \QQ^1(x,v | u)  +   (1 - \lambda) \cdot   \QQ^2(x,v | u)   .
\end{align}
We have the following equations:
\begin{align}
&\lambda \cdot  \Delta\Big(\QQ^1(x,v | u)\Big) +  (1 - \lambda) \cdot \Delta\Big(\QQ^2(x,v | u)\Big) \nonumber\\
&= \lambda \cdot  \bigg( I_{\QQ^{\star1}}( W;Y  |V )  -   I_{\QQ^{\star1}}(  U ; V  ,W  )  \bigg)  \nonumber \\
&+ (1 - \lambda) \cdot  \bigg( I_{\QQ^{\star2}}( W;Y  |V )  -   I_{\QQ^{\star2}}(  U ; V  ,W  )  \bigg) \label{eq:Concavity3} \\
&= \lambda \cdot  \bigg( I_{\QQ^{\star}}( W;Y  |V,  Z = 1 )  -   I_{\QQ^{\star}}(  U ; V  ,W  |   Z = 1)  \bigg)    \nonumber\\
&+ (1 - \lambda) \cdot  \bigg( I_{\QQ^{\star}}( W;Y  |V,  Z = 2 )-   I_{\QQ^{\star}}(  U ; V  ,W|   Z = 2  )  \bigg) \label{eq:Concavity4} \\
&= I_{\QQ^{\star}}( W;Y  |V,  Z  )  -   I_{\QQ^{\star}}(  U ; V  ,W  |   Z )  \label{eq:Concavity5} \\
&\leq  I_{\QQ^{\star}}( W,  Z;Y  |V  )  -   I_{\QQ^{\star}}(  U ; V  ,W  |   Z )  \label{eq:Concavity6} \\
&=  I_{\QQ^{\star}}( W,  Z;Y  |V  )  -   I_{\QQ^{\star}}(  U ; V  ,W ,   Z )  \label{eq:Concavity7} \\
&=  I_{\QQ^{\star}}( W' ;Y  |V  )  -   I_{\QQ^{\star}}(  U ; V  ,W' )  \label{eq:Concavity8} \\
&=  \max_{ \QQ^{\star}(w'' | u, x,v) } \bigg( I_{\QQ^{\star}}( W'' ;Y  |V  )  -   I_{\QQ^{\star}}(  U ; V  ,W'' )   \bigg)  \label{eq:Concavity9} \\
&= \Delta\Big(\QQ^{\star}(x,v | u)\Big)  \label{eq:Concavity10} \\
&= \Delta\Big( \lambda \cdot  \QQ^1(x,v | u)  +   (1 - \lambda) \cdot   \QQ^2(x,v | u)  \Big). \label{eq:Concavity11} 
\end{align}
Equation  \eqref{eq:Concavity3} comes from the definition of the mapping $\Delta\big(\QQ(x,v | u)\big)$ and the conditional probability distributions  $\QQ^{\star1}(w |u,x,v )$ and $\QQ^{\star2}(w |u,x,v )$, stated in equations \eqref{eq:ConcaveProba1} and \eqref{eq:ConcaveProba2}.\\
Equation  \eqref{eq:Concavity4} comes from the introduction of the auxiliary random variable $Z$ and the definition of conditional probability distribution $\QQ^{\star} (x,v,w | u,z) $, stated in equations \eqref{eq:ConcaveProba3} and \eqref{eq:ConcaveProba4}.\\ 
Equation  \eqref{eq:Concavity5} comes from the definition of the mutual information.\\ 
Equation  \eqref{eq:Concavity6} comes from the property of the mutual information.\\ 
Equation  \eqref{eq:Concavity7} comes from the fact that random variables $U$ and $Z$ are independent, hence $I_{\QQ^{\star}}(  U ;  Z )=0$.\\ 
Equation  \eqref{eq:Concavity8} comes from the  introduction of the auxiliary random variable $W' = (W, Z) $. This auxiliary random variable $W' = (W, Z) $ satisfies the Markov chain $Y  -\!\!\!\!\minuso\!\!\!\!-X    -\!\!\!\!\minuso\!\!\!\!-  (W' ,U,V)$.\\
Equation  \eqref{eq:Concavity9} comes from taking the maximum over the set of conditional probability distributions $\QQ^{\star}(w'' | u, x,v)$, that satisfy the Markov chain $Y  -\!\!\!\!\minuso\!\!\!\!-X    -\!\!\!\!\minuso\!\!\!\!-  (W'' ,U,V)$.\\
Equation \eqref{eq:Concavity10} follows from the definition of the mapping $\Delta\big(\QQ(x,v | u)\big)$ in equation  \eqref{eq:MappingDelta}.\\
Equation \eqref{eq:Concavity11} follows from the definition of convex combination $\QQ^{\star} (x,v | u)  = \lambda \cdot  \QQ^1(x,v | u)  +   (1 - \lambda) \cdot   \QQ^2(x,v | u) $.

The same arguments are valid for any convex combination. This proves that the mapping $\Delta$ is concave with respect to the set of conditional probability distributions $\QQ (x,v | u)$. The concavity property implies $  \Delta\Big( \lambda \cdot  \QQ^1(x,v | u)  +   (1 - \lambda) \cdot   \QQ^2(x,v | u)  \Big) \geq \lambda \cdot  \Delta\Big(\QQ^1(x,v | u)\Big) +  (1 - \lambda) \cdot \Delta\Big(\QQ^2(x,v | u)\Big) \geq0$, hence any convex combination $\QQ^{\star} (x,v | u) = \lambda \cdot  \QQ^1(x,v | u)  +   (1 - \lambda) \cdot   \QQ^2(x,v | u)$ also belongs to the set $\mc{A}$. This proves that $\mc{A}$ is a convex set and the optimization problem stated in equation \eqref{eq:optimizationProblem} of Theorem \ref{theo:ConvexProblem}, is a convex optimization problem.


\section{Decomposition of the probability distribution for Theorem \ref{theo:CwithChannel}}\label{sec:ProofDecompositionCD}

In order to prove the assertion $1)$ of Theorem \ref{theo:CwithChannel}, we assume that the joint distribution $\QQ(u,x,y,v)$ is achievable and we introduce the mean probability distribution $ \overline{\PP}_n(u,x,y,v)$ defined by:
\begin{eqnarray}
 \overline{\PP}_n(u,x,y,v) &=& \frac{1}{n} \cdot \sum_{j=1}^n  \PP(u_j, x_j, y_j ,v_j).\label{eq:ProofDecomposition15}
\end{eqnarray}
Lemma \ref{lemma:MarginalDistributionC} states that for all $j\in \{1,\ldots,n\}$, the marginal distribution $\PP(u_j, x_j, y_j ,v_j)$ decomposes as: $\PP_{\sf{u}} (u_j) \times \PP(x_j|u_j) \times \mc{T}(y_j | x_j) \times \PP(v_j |u_j,x_j,y_j)$. We only consider the three random variables $(U,X,Y)$ and we prove that the mean  distribution $ \overline{\PP}_n(u,x,y)$ satisfies the Markov chain $Y  -\!\!\!\!\minuso\!\!\!\!-X    -\!\!\!\!\minuso\!\!\!\!-  U$:
\begin{eqnarray}
 \overline{\PP}_n(u,x,y,v) &=& \PP_{\sf{u}} (u)    \times \overline{\PP}_n(x|u) \times \mc{T}(y | x),
\end{eqnarray}
where for each symbols $(u,x,y,v)$ we have: $\overline{\PP}_n(x|u)=  \frac{1}{n} \cdot \sum_{j=1}^n  \PP(x_j=x | u_j=u )$. Since the joint distribution $\QQ(u,x,y,v)$ is achievable, there exists a code $c\in\mc{C}(n)$ with causal decoding such that the empirical distribution $Q^n(u,x,y,v)$ converges to $\QQ(u,x,y,v)$, with high probability. Convergence in probability implies the convergence in distribution, hence  $ \overline{\PP}_n(u,x,y,v) $ also converges to $\QQ(u,x,y,v)$. This proves that the probability distribution $\QQ(u,x,y,v) = \PP_{\sf{u}} (u) \times \QQ(x |u)  \times \mc{T}(y | x) \times \QQ(v |u,x,y) $ satisfies the assertion $1)$ of Theorem \ref{theo:CwithChannel}.

\begin{lemma}[Marginal distribution]\label{lemma:MarginalDistributionC}
Let $\PP\big( u^n, x^n, y^n, v^n\big)$, the joint distribution induced by the code $c\in \mc{C}(n)$ with causal decoding. For all $j\in \{1,\ldots,n\}$, the marginal distribution satisfies:
\begin{align}
\PP(u_j,x_j,y_j,v_j)& = \PP_{\sf{u}} (u_j) \times \PP(x_j |u_j)  \nonumber \\
&\times \mc{T}(y_j | x_j) \times \PP(v_j |u_j , x_j , y_j) .
\end{align}
\end{lemma}
\begin{proof}[Lemma \ref{lemma:MarginalDistributionC}]
The notation $u^{-j}$ stands for the sequence $u^n$ where the symbol $u_j$ has been removed: $u^{-j}=\{u_1,\ldots,u_{j-1},u_{j+1},\ldots,u_n\} \in \mc{U}^{n-1}$.
\begin{align}
&\PP\big( u^n, x^n, y^n, v^n\big) \nonumber \\
&= \prod_{i=1}^n \PP_{\sf{u}} (u_i) \times \PP(x^n|u^n) \times \prod_{i=1}^n \mc{T}(y_i | x_i) \times \prod_{i=1}^n \PP(v_i |y^{i}) \nonumber \\ &\label{eq:ProofDecomposition0} \\
&= \PP_{\sf{u}} (u_j)\times \PP(u^{-j},x^{n}|u_j)  \times \prod_{i=1}^n \mc{T}(y_i | x_i) \times \prod_{i=1}^n \PP(v_i |y^{i}) \nonumber \\ & \label{eq:ProofDecomposition1} \\
&= \PP_{\sf{u}} (u_j)\times \PP(x_j|u_j) \times \PP(u^{-j},x^{-j}|u_j,x_j)\nonumber \\
&\times \mc{T}(y_j | x_j) \times \PP(y^{-j}|u^n,x^n)  \times \prod_{i=1}^n \PP(v_i |y^{i}) \label{eq:ProofDecomposition3} \\
&= \PP_{\sf{u}} (u_j)\times \PP(x_j|u_j)  \times \mc{T}(y_j | x_j) \nonumber \\
&\times \PP(u^{-j},x^{-j},y^{-j}|u_j,x_j,y_j) \times \PP(v^n|u^n,x^n,y^n)\label{eq:ProofDecomposition5} \\
&= \PP_{\sf{u}} (u_j)\times \PP(x_j|u_j)  \times \mc{T}(y_j | x_j)\times \PP(v_j|u_j,x_j,y_j) \nonumber \\
&\times \PP(u^{-j},x^{-j},y^{-j},v^{-j}|u_j,x_j,y_j,v_j). \label{eq:ProofDecomposition7} 
\end{align}
Equation \eqref{eq:ProofDecomposition0} comes from the properties of the i.i.d. information source, the non-causal encoder, the memoryless channel and the strictly causal decoder.\\
Equation \eqref{eq:ProofDecomposition1} comes from the i.i.d. property of the information source.\\
Equation \eqref{eq:ProofDecomposition3}  comes from the memoryless property of the channel.\\
Equation  \eqref{eq:ProofDecomposition5} comes from the causal decoding.\\
Equation  \eqref{eq:ProofDecomposition7} concludes Lemma \ref{lemma:MarginalDistributionC} by taking the sum over the sequences $(v^{-j} ,y^{-j} , x^{-j},u^{-j})$.
\end{proof}


\section{Achievability  of Theorem \ref{theo:CwithChannel}}\label{sec:ProofAchievabilityC}

In order to prove assertion $2)$ of Theorem \ref{theo:CwithChannel}, we consider a joint probability distribution $\QQ(u,x,w_1,w_2,y,v) \in \Q_{\sf{c}}$ that achieves the maximum in equation \eqref{eq:CwithChannel1}. 
In this section, we assume that the information constraint \eqref{eq:CwithChannel1} is satisfied with strict inequality \eqref{eq:CDstrict}. The case of equality in \eqref{eq:CwithChannel1} will be discussed in App. \ref{sec:ProofCEqualityIC}.
\begin{align}
 &I( W_1;Y |W_2) - I(U;W_2,W_1) \nonumber\\
 &=  I( W_1;Y ,W_2) - I(U,W_2;W_1) - I(U;W_2) >0.\label{eq:CDstrict}
\end{align}
There exists a small parameter $\delta>0$, a rate parameter $\textsf{R}\geq 0 $, corresponding to the source coding and a rate parameter $ \textsf{R}_{\sf{L}}\geq 0 $, corresponding to the binning parameter, such that:
\begin{eqnarray}
\textsf{R}  & =&  I( W_2 ; U )  + \delta , \label{eq:AchievabilityCD1} \\
\textsf{R}_{\sf{L}}  & =&   I( W_1 ; U ,W_2 )  + \delta ,  \label{eq:AchievabilityCD2}  \\
\textsf{R} +   \textsf{R}_{\sf{L}}   & \leq&   I( W_1;Y ,W_2)  -  \delta  .  \label{eq:AchievabilityCD3}  
\end{eqnarray}

Similarly to the proof of Theorem \ref{theo:1CwithChannel}, in Sec. \ref{sec:ProofAchievability}, we define a block-Markov random code $c\in \mc{C}(n\cdot B)$, over $B\in \N$ blocks of length $n\in \N$ and we prove that the empirical distribution converges in probability to the target distribution $\QQ(u,x,w_1,w_2,y,v) \in \Q_{\sf{c}}$.

\begin{itemize}
\item[$\bullet$] \textit{Random codebook.} We generate $| \mc{M}   |= 2^{n   \sf{R}   } $ sequences $W_2^n(m)$,  drawn from the i.i.d. probability distribution $\QQ_{\sf{w}_2}^{\times n} $ with index  $m\in \mc{M} $. We generate $| \mc{M}   \times \mc{M}_{\sf{L}}   |= 2^{n (  \sf{R}  +  \sf{R}_{\sf{L}} ) } $ sequences $W_1^n(m,l)$,  drawn from the i.i.d. probability distribution $\QQ_{\sf{w}_1}^{\times n} $,  independently of $W_2^n(m)$, with indices  $(m,l) \in \mc{M}  \times \mc{M}_{\sf{L}} $. \\
\item[$\bullet$] \textit{Encoding function.} At the beginning of the block $b\in \{2,\ldots B-1\}$, the encoder observes the sequence of symbols of source $U^n_{b+1} \in  \mc{U}^n$ of the next block $b+1$. It finds an index $m\in \mc{M}$ such that the sequences  $\big(U^n_{b+1},W^n_{2}(m)\big) \in A_{\varepsilon}^{{\star}{n}} (\QQ)$ are jointly typical. We denote by $W^n_{2,b+1} = W^n_{2}(m)$, the sequence corresponding to the block $b+1$. The encoder observes the jointly typical sequences of symbols $(U^n_{b},W^n_{2,b}) \in  \mc{U}^n \times  \mc{W}_2^n$ of the current block $b$. It finds the index $l\in \mc{M}_{\sf{L}}$ such that the sequences  $\big(U^n_{b},W^n_{2,b},W^n_{1}(m,l) \big) \in A_{\varepsilon}^{{\star}{n}}(\QQ)$ are jointly typical. We denote by $W^n_{1,b}=W^n_{1}(m,l)$, the sequence corresponding to the block $b$. The encoder sends the sequence $X^n_b$,  drawn from the conditional probability distribution $\QQ_{\sf{x|uw_1w_2}}^{\times n}$  depending on the sequences $\big(U^n_{b},W^n_{2,b},W^n_{1,b}\big)$ of the block $b$.\\
\item[$\bullet$] \textit{Decoding function.} At the end of the block $b\in \{2,\ldots B-1\}$, the decoder recalls the sequence $Y^n_b$ and $W^n_{2,b}$. It finds the indices $(m,l) \in \mc{M}   \times \mc{M}_{\sf{L}} $ such that  $\big( Y^n_b ,W^n_{2,b} ,W^n_{1}(m,l) \big) \in A_{\varepsilon}^{{\star}{n}}(\QQ)$ are jointly typical. It deduces the sequence $W_{2,b+1}^n = W_{2}^n(m)$ for the next block $b+1$, that corresponds to the index $m \in \mc{M}$. In the block $b+1 $, the decoder returns the sequence $V^n_{b+1}$, drawn from the conditional probability distribution $\QQ_{\sf{v|yw_2}}^{\times n}$  depending on the sequences $\big(Y^n_{b+1} ,W^n_{2,b+1}\big)$. The sequence $W^n_{1,b+1}$ is not involved in the draw of the output $V^n_{b+1}$ of the decoder, because it is decoded with the delay of one block.\\
\item[$\bullet$] \textit{Typical sequences.} If no error occurs in the coding process, the sequences of symbols $\big( U^n_{b} ,X^n_{b} ,W^n_{1,b} ,W^n_{2,b} ,Y^n_{b} ,V^n_{b}  \big)\in A_{\varepsilon}^{{\star}{n}}(\QQ)$ are jointly typical for each block $b \in \{2, \ldots, B-1\}$. The sequences $\big( U^n_{B} ,W^n_{2,B} ,X^n_{B} ,Y^n_{B} ,V^n_{B}  \big) \in A_{\varepsilon}^{{\star}{n}}(\QQ)$ of the last block $B$ are also jointly typical but the sequences  $\big( U^n_{b_1} ,X^n_{b_1} ,Y^n_{b_1} ,V^n_{b_1} \big) \notin A_{\varepsilon}^{{\star}{n}}(\QQ)$ of the first block $b_1$, are not jointly typical in general.\\ 
\end{itemize}

\begin{figure}[!ht]
\begin{center}
\psset{xunit=0.7cm,yunit=0.7cm}
\begin{pspicture}(0.5,-1)(12,4)
\pscircle[linecolor = blue](0.5,0.5){0.35}
\psframe[linecolor = blue](2,0)(3,1)
\psframe(4,0)(5,1)
\pscircle(6.5,0.5){0.35}
\psframe(8,0)(9,1)
\pscircle(4.5,2.5){0.35}
\psframe[linecolor = blue](10,0)(11,1)
\psline[linewidth=1pt,linecolor = blue]{->}(1,0.5)(2,0.5)
\psline[linewidth=1pt]{->}(3,0.5)(4,0.5)
\psline[linewidth=1pt]{->}(5,0.5)(6,0.5)
\psline[linewidth=1pt]{->}(7,0.5)(8,0.5)
\psline[linewidth=1pt]{->}(9,0.5)(10,0.5)
\psline[linewidth=1pt]{->}(4.5,2)(4.5,1)
\psline[linewidth=1pt]{->}(5,2.5)(8.5,2.5)(8.5,1)
\psline[linewidth=1pt,linecolor = blue]{->}(11,0.5)(12,0.5)
\rput(0.5,0.5){\textcolor[rgb]{0.00,0.00,1.0}{$\PP_{\sf{u}}$}}
\rput(2.5,0.5){\textcolor[rgb]{0.00,0.00,1.0}{$\C$}}
\rput(4.5,0.5){$\C$}
\rput(6.5,0.5){$\mc{T}$}
\rput(8.5,0.5){$\D$}
\rput(4.5,2.5){\textcolor[rgb]{0.00,0.00,0.0}{$\PP_{\sf{uw_2}}$}}
\rput(10.5,0.5){\textcolor[rgb]{0.00,0.00,1.0}{$\D$}}
\rput[u](1.5,0.8){\textcolor[rgb]{0.00,0.00,1.0}{$U^n_{b+1}$}}
\rput[u](11.5,1.3){\textcolor[rgb]{0.00,0.00,1.0}{$W^n_{2,b+1}(m)$}}
\psline[linewidth=0.5pt, linestyle = dashed](3.5,-1)(3.5,4.5)
\psline[linewidth=0.5pt, linestyle = dashed](9.5,-1)(9.5,4.5)
\rput[u](5.5,0.8){$X^n_{b}$}
\rput[u](7.5,0.8){$Y^n_{b}$}
\rput[u](5.7,1.65){$(U^n_{b},W^n_{2,b})$}
\rput[u](8,1.65){$W^n_{2,b}$}
\rput[u](3.5,0.8){\textcolor[rgb]{0.70,0.00,0.0}{${m}$}}
\rput[u](9.5,0.8){\textcolor[rgb]{0.70,0.00,0.0}{${{m}}$}}
\rput[l](0,4){Lossy source code}
\rput[l](0,3.5){block $b+1$}
\rput[l](5.2,4){Channel code}
\rput[l](5.2,3.5){block $b$}
\rput[l](9,4){Lossy source code}
\rput[l](9,3.5){block $b+1$}
\rput[l](3.6,-0.7){Two-sided state information}
\end{pspicture}
\caption{$\mc{D}$ observes $Y$, decodes $W_2$ and generates $V$ with $\QQ(v|y,w_2)$.}\label{fig:BlockMarkovSCDCD}
\end{center}
\end{figure}

 For all $\varepsilon_1>0$ there exists an $\bar{n} \in \N$ such that for all $n\geq\bar{n}$, the expected probability of the following error events are bounded by $\varepsilon_1$:
\begin{eqnarray}
&&\E_c\bigg[ \PP\bigg(U^n_{b}     \notin A_{\varepsilon}^{{\star}{n}} (\QQ) \bigg)\bigg]  \leq \varepsilon_1, \label{eq:AchievProbaCD1} \\
&&\E_c\bigg[ \PP\bigg( \forall  m \in \mc{M}  ,\quad 
\big(U^n_{b}, W_{2}^n(m) \big) \notin A_{\varepsilon}^{{\star}{n}}(\QQ) \bigg)\bigg]  \leq \varepsilon_1, \nonumber \\
&&\label{eq:AchievProbaCD2} \\
&&\E_c\bigg[ \PP\bigg( \forall  l \in \mc{M}_{\sf{L}}   ,\quad \nonumber \\
&&\big(U^n_{b-1}, W_{2,b-1}^n ,W_1^n(m,l) \big) \notin A_{\varepsilon}^{{\star}{n}}(\QQ) \bigg)\bigg]  \leq \varepsilon_1, \label{eq:AchievProbaCD3} \\
&&\E_c\bigg[ \PP\bigg(  \exists (m', l')\neq  (m ,l)  ,\text{ s.t. } \nonumber \\
&&
\big(Y^n_{b-1} , W^n_{2,b-1}  ,W_1^n(m',l') \big) \in A_{\varepsilon}^{{\star}{n}}( \QQ)\bigg)\bigg]   \leq \varepsilon_1.\label{eq:AchievProbaCD4}
\end{eqnarray}
Equation \eqref{eq:AchievProbaCD1} comes from the properties of typical sequences, stated in \cite[pp. 27]{ElGammalKim(book)11}.\\
Equation \eqref{eq:AchievProbaCD2} comes from equation \eqref{eq:AchievabilityCD1} and the covering lemma, stated in \cite[pp. 208]{ElGammalKim(book)11}.\\
Equation \eqref{eq:AchievProbaCD3} comes from equation \eqref{eq:AchievabilityCD2} and the covering lemma, stated in \cite[pp. 208]{ElGammalKim(book)11}.\\
Equation \eqref{eq:AchievProbaCD4} comes from equation \eqref{eq:AchievabilityCD3} and the packing lemma, stated in \cite[pp. 46]{ElGammalKim(book)11}.\\

The expected error probability of this block-Markov code is upper bounded, using the same arguments as for the achievability proof of Theorem \ref{theo:1CwithChannel}, stated in App. \ref{sec:ProofAchievability}.



\section{Equality in the information constraint \eqref{eq:CwithChannel1}}\label{sec:ProofCEqualityIC}

We consider a target distribution $ \PP_{\sf{u}}(u)   \times \QQ(x | u) \times  \mc{T}(y | x )  \times \QQ(v | u,x,y)$ such that the maximum in \eqref{eq:CwithChannel1} is equal to zero:
\begin{eqnarray}
\max_{{\QQ}\in \Q_{\sf{c}}} \bigg( I( W_1;Y |W_2) - I(U;W_2,W_1) \bigg)=0. \label{eq:CEqualityIC}
\end{eqnarray}

\subsection{First case: channel capacity is strictly positive}
The channel is not trivial and it is possible to send some reliable information: 
\begin{eqnarray}
\max_{\PP(x)} I(X;Y)  >0. \label{eq:Capacity}
\end{eqnarray}
We denote by $\QQ^{\star}(u,x,y,v)=\PP_{\sf{u}}(u) \times \PP^{\star}(x) \times \mc{T}(y|x) \times \QQ(v)$ the distribution where $(X,Y)$, $U$ and $V$ are mutually independent and where $\PP^{\star}(x)$ achieves the maximum in \eqref{eq:Capacity}. We denote by $ \max_{{\QQ}^{\star}\in \Q^{\star}_{\sf{c}}} \bigg( I_{\QQ^{\star}}( W_1;Y |W_2) - I_{\QQ^{\star}}(U;W_2,W_1) \bigg)$ the information constraint corresponding to $\QQ^{\star}(u,x,y,v)$ and we show that it is strictly positive:
\begin{align}
&\max_{{\QQ}^{\star}\in \Q^{\star}_{\sf{c}}} \bigg( I_{\QQ^{\star}}( W_1;Y |W_2) - I_{\QQ^{\star}}(U;W_2,W_1) \bigg)\nonumber \\
&\geq  I_{\QQ^{\star}}( X;Y |W_2)\label{eq:StarC1}  \\
& = I(X;Y)>0.\label{eq:StarC2}  
\end{align}
Equation \eqref{eq:StarC1} comes from the choice of auxiliary random variables $W_1=X$ and $W_2$ such that 
$W_2$, $V$, $U$ and $(W_1,X,Y)$ are mutually independent: $\QQ^{\star}(u,x,w_1,w_2,y,v) = \PP_{\sf{u}}(u) \times \QQ(w_2) \times \PP^{\star}(x) \times \UN(x=w_1) \times  \mc{T}(y|x) \times  \QQ(v)$.\\ 
Equation \eqref{eq:StarC2} comes from the independence between $W_2$ and $(X,Y)$ and the hypothesis of strictly positive channel capacity.

We construct the sequence $\{\QQ^k(u,x,y,v)\}_{k\in\N}$ of convex combinations between the target distribution $\QQ(u,x,y,v)$ and the distribution $\QQ^{\star}(u,x,y,v)$, where for all $(u,x,y,v)$:
\begin{eqnarray}
\QQ^k(u,x,y,v) =\frac{1}{k} \cdot\bigg( (k-1) \cdot \QQ(u,x,y,v) +  \QQ^{\star}(u,x,y,v) \bigg).\nonumber
\end{eqnarray}
$\bullet$ The information constraint \eqref{eq:CEqualityIC} corresponding to $\QQ(u,x,y,v)$ is equal to zero.\\
$\bullet$ The information constraint \eqref{eq:StarC2} corresponding to $\QQ^{\star}(u,x,y,v)$ is strictly positive.\\
By Theorem \ref{theo:ConvexProblemCD}, the information constraint is concave with respect to the distribution. Hence, the information constraint corresponding to the distribution $\QQ^k(u,x,y,v)$ is strictly positive for all $k>1$:
\begin{eqnarray}
\max_{{\QQ}^{k}\in \Q^{k}_{\sf{c}}} \bigg( I_{\QQ^{k}}( W_1;Y |W_2) - I_{\QQ^{k}}(U;W_2,W_1) \bigg)>0.
\end{eqnarray}
App. \ref{sec:ProofAchievabilityC} concludes that the distribution $\QQ^k(u,x,y,v)$ is achievable, for all $k>1$. Since, the distribution $\QQ^k(u,x,y,v)$ converges to the target distribution $\QQ(u,x,y,v)$, as $k$ goes to $+\infty$. This proves that the distribution $\QQ(u,x,y,v) $ is achievable.


\subsection{Second case: channel capacity is zero}

We assume that the channel capacity is equal to zero: $\max_{\PP(x)} I(X;Y) =0$ and we consider the auxiliary random variables $(W_1,W_2)$ with bounded support: 
$\max\big( |\mc{W}_1|,  |\mc{W}_2| \big)\leq  |\mc{U} \times \mc{X}  \times \mc{Y}\times \mc{V}  | +2$. We define the set of distributions $\mc{A}_{\textsf{c}}$ of the random variables $(U,X,W_2,Y,V)$ that satisfy the information constraint of Theorem \ref{theo:CwithChannel} and we show that $\mc{A}_{\textsf{c}}$ boils down to:
\begin{align}
&\mc{A}_{\textsf{c}} =\bigg\{ \PP_{\sf{u}}(u) \times  \QQ(x,w_2 | u)   \times  \mc{T}(y | x )\times \QQ(v| y,w_2),  \text{ s.t. } \nonumber\\
&\max_{\QQ(w_1|u,x,w_2),\atop |\mc{W}_1| \leq  |\mc{U} \times \mc{X}  \times \mc{Y}\times \mc{V}  | +2 } \bigg(I( W_1;Y  |W_2 ) - I( W_1, W_2 ; U   )  \bigg) \geq 0 \bigg\} \label{eq:BoundaryC01} \\
&= \bigg\{\PP_{\sf{u}}(u) \times  \QQ(x,w_2 | u)   \times  \mc{T}(y | x )\times \QQ(v| y,w_2),   \text{ s.t. }  \nonumber\\
&  I(  W_2 ; U   )  = 0 \;\; \bigg\} \label{eq:BoundaryC02} \\
&= \bigg\{ \PP_{\sf{u}}(u) \times  \QQ(w_2 ) \times  \QQ(x| u,w_2 ) \times  \mc{T}(y | x ) \times \QQ(v| y,w_2) \bigg\} \label{eq:BoundaryC03} .
\end{align}
Equation \eqref{eq:BoundaryC01} defines the set of distributions $\mc{A}_{\textsf{c}} \subset \Delta(\mc{U} \times \mc{X} \times \mc{W}_2 \times \mc{Y} \times \mc{V}  )$.\\
Equation \eqref{eq:BoundaryC02} comes from the hypothesis of channel capacity equal to zero $\max_{\PP(x)} I(X;Y) =0$ and Lemma \ref{lemma:ZeroCapacityCIC}.\\
Equation \eqref{eq:BoundaryC03} comes from Lemma \ref{lemma:decomposition2}, by considering the distribution $  \QQ(x,v | u) $ instead of $  \QQ(x,w_2 | u) $.

\textit{Coding Scheme:} We consider a target distribution: $ \PP_{\sf{u}}(u) \times  \QQ(w_2 ) \times  \QQ(x,w_1| u,w_2 )   \times  \mc{T}(y | x ) \times \QQ(v| y,w_2)$ that belongs to the set of distributions \eqref{eq:BoundaryC03}.\\
$\bullet$ The sequence $W_2^n$ is drawn with the i.i.d. probability $\QQ(w_2)$ and known in advance by both encoder and decoder,\\
$\bullet$ The encoder observes the sequence of source $U^n$ and generates the sequence $X^n$ with the i.i.d. probability distribution   $\QQ(x|u,w_2)$, depending on  the pair $(U^n,W_2^n)$. \\
$\bullet$ The decoder observes the sequence of channel outputs $Y^n$ and generates the sequence $V^n$ with the i.i.d. probability distribution   $\QQ(v|y,w_2)$, depending on  the pair $(Y^n,W_2^n)$. \\
$\bullet$ This coding scheme proves that any probability distribution that belongs to the set of distribution \eqref{eq:BoundaryC03} is achievable:
\begin{eqnarray} 
\PP_{\sf{u}}(u) \times  \QQ(w_2 ) \times  \QQ(x| u,w_2 )   \times  \mc{T}(y | x ) \times \QQ(v| y,w_2).
\end{eqnarray}

This proves that if $\max_{\PP(x)} I(X;Y) =0$, then any distribution $\PP_{\sf{u}}(u) \times  \QQ(x,w_2 | u)   \times  \mc{T}(y | x ) \times \QQ(v| y,w_2)  \in \mc{A}_{\textsf{c}} $ is achievable.


\begin{lemma}\label{lemma:ZeroCapacityCIC}
We consider both sets of probability distributions:
\begin{align}
\mc{A}_{\textsf{c}} &= \bigg\{ \PP_{\sf{u}}(u) \times  \QQ(x,w_2 | u)   \times  \mc{T}(y | x )\times \QQ(v| y,w_2),  \text{ s.t. }\nonumber \\
& \max_{\QQ(w_1|u,x,w_2),\atop |\mc{W}_1| \leq  |\mc{U} \times \mc{X}  \times \mc{Y}\times \mc{V}  | +2 } \bigg(I( W_1;Y  |W_2 ) - I( W_1, W_2 ; U   )  \bigg)  \geq 0 \bigg\}, \\
\mc{B}_{\textsf{c}}&= \bigg\{\PP_{\sf{u}}(u) \times  \QQ(x,w_2 | u)   \times  \mc{T}(y | x )\times \QQ(v| y,w_2), \nonumber \\
&\qquad \text{ s.t. }  \qquad I(  W_2 ; U   )  = 0 \;\; \bigg\} .
\end{align}
If the channel capacity is equal to zero $\max_{\PP(x)} I(X;Y) =0$, then both sets of probability distributions are equal $\mc{A}_{\textsf{c}} = \mc{B}_{\textsf{c}}$.
\end{lemma}

\begin{proof}[Lemma \ref{lemma:ZeroCapacityCIC}]
\textit{First inclusion $\mc{A}_{\textsf{c}} \subset \mc{B}_{\textsf{c}}$.} We consider a distribution $\PP_{\sf{u}}(u) \times  \QQ(x,w_2 | u)   \times  \mc{T}(y | x ) \times \QQ(v| y,w_2)$ that belongs to $\mc{A}_{\textsf{c}}$ and we denote by $W_1$ the auxiliary random variable that achieves the maximum in the information constraint. Since the channel capacity is equal to zero, we have:
\begin{align}
 0 &= \max_{\PP(x)} I(X;Y)   \geq I(W_1;Y,W_2) \nonumber \\
 &\geq I(W_1;U,W_2) + I(U;W_2) \geq I(U;W_2)\geq0. \label{eq:CapacityC05}
\end{align}
This implies that $I(U;W_2)=0$, hence the distribution $\PP_{\sf{u}}(u) \times  \QQ(x,w_2 | u)   \times  \mc{T}(y | x ) \times \QQ(v| y,w_2)$ belongs to the set  $\mc{B}_{\textsf{c}}$. This proves the first inclusion $\mc{A}_{\textsf{c}} \subset \mc{B}_{\textsf{c}}$.\\
\textit{Second inclusion $\mc{A}_{\textsf{c}} \supset \mc{B}_{\textsf{c}}$.} We consider a distribution $ \PP_{\sf{u}}(u) \times  \QQ(x,w_2 | u)   \times  \mc{T}(y | x ) \times \QQ(v| y,w_2)$ that belongs to $\mc{B}_{\textsf{c}}$, hence for which $I(U;W_2)=0$. We introduce a deterministic auxiliary variable $\widetilde{W}_1$, for which $I(\widetilde{W}_1;Y,W_2) = I(\widetilde{W}_1;U,W_2)=0$. Hence the information constraint  satisfies:
\begin{align}
&\max_{\QQ(w_1|u,x,w_2),\atop |\mc{W}_1| \leq  |\mc{U} \times \mc{X}  \times \mc{Y}\times \mc{V}  | +2 } \bigg(I( W_1;Y  |W_2 )  - I( W_1, W_2 ; U   )  \bigg) \nonumber \\
&\geq I(\widetilde{W}_1;Y,W_2) - I(\widetilde{W}_1;U,W_2) - I(U;W_2) = 0.
\end{align}
Since the information constraint  is positive, the distribution $ \PP_{\sf{u}}(u) \times  \QQ(x,w_2 | u)   \times  \mc{T}(y | x ) \times \QQ(v| y,w_2)$ belongs to the set $\mc{A}_{\textsf{c}}$. This shows the second inclusion $\mc{A}_{\textsf{c}} \supset \mc{B}_{\textsf{c}}$.
\end{proof}

 

\section{Converse proof of Theorem \ref{theo:CwithChannel}.}\label{sec:ProofConverseC}

We suppose that the joint probability distribution $\PP_{\sf{u}}(u)   \times \QQ(x | u) \times  \mc{T}(y | x )\times \QQ(v | u,x,y)$ is achievable with a causal code. For all $\varepsilon>0$, there exists a minimal length $\bar{n}\in \N$, such that for all $n \geq \bar{n}$, there exists a code $c\in\mc{C}(n)$, such that the probability of error satisfies $\PP_{\sf{e}}(c) = \PP\big(||{Q}^n - \QQ  ||_{\sf{tv}} > \varepsilon \big) \leq \varepsilon$. The parameter $\varepsilon>0$ is involved in both the definition of the typical sequences and the upper bound of the error probability. We introduce the random  event of error $E \in \{0,1\}$ defined as follows:
\begin{eqnarray}
E = \Bigg\{
\begin{array}{lll}
0 &\text{ if }& \big|\big|{Q}^n - \QQ  \big|\big|_{\sf{tv}} \leq \varepsilon ,\\
1 &\text{ if }& \big|\big|{Q}^n - \QQ  \big|\big|_{\sf{tv}} > \varepsilon.
\end{array}
\Bigg.
\end{eqnarray}
We have the following equations:
\begin{eqnarray}
0=\sum_{i=1}^n I(U^n_{i+1} ; Y_i  | Y^{i-1}) - \sum_{i=1}^n  I(   Y^{i-1}  ; U_i | U^n_{i+1}) \label{eq:ConvCD1} \\
=  \sum_{i=1}^n I( U^n_{i+1} ; Y_i  | Y^{i-1} )- \sum_{i=1}^n I(    Y^{i-1} , U^n_{i+1} ; U_i) \label{eq:ConvCD2} \\
=  \sum_{i=1}^n I(   W_{1,i} ; Y_i      | W_{2,i}) - \sum_{i=1}^n I(  W_{1,i} , W_{2,i}   ; U_i ) \label{eq:ConvCD3} .
\end{eqnarray}
Equation \eqref{eq:ConvCD1} comes from Csisz\'{a}r Sum Identity stated  in \cite[pp. 25]{ElGammalKim(book)11}.\\
Equation \eqref{eq:ConvCD2} comes from the i.i.d. property of the information source $U$, that implies $ I(U_i  ; U^n_{i+1}) = 0$ for all $i\in\{1 , \ldots,n\}$.\\
Equation \eqref{eq:ConvCD3} comes from the introduction of the auxiliary random variables $W_{1,i} = U^n_{i+1}$ and $W_{2,i} = Y^{i-1} $. The two random variables  $(W_{1,i}, W_{2,i})$ satisfy the Markov Chains corresponding to the set of probability distributions  $\Q_{\textsf{c}}$.
\begin{eqnarray}
&&Y_i -\!\!\!\!\minuso\!\!\!\!- X_i -\!\!\!\!\minuso\!\!\!\!-  (U_i , W_{1,i} ,W_{2,i} ) ,\\
&&V_i -\!\!\!\!\minuso\!\!\!\!- (Y_i,W_{2,i} ) -\!\!\!\!\minuso\!\!\!\!-  (U_i, X_i,W_{1,i} ) .
\end{eqnarray}
$\bullet$ The first Markov chain comes from memoryless property of the channel and the fact that $Y_i$ does not belong to $(W_{1,i} ,W_{2,i})$.\\
$\bullet$ The second Markov chain $V_i -\!\!\!\!\minuso\!\!\!\!- ( Y_i , Y^{i-1} )-\!\!\!\!\minuso\!\!\!\!-  (U_i , X_i, U^n_{i+1} )$ comes from the causal decoding: the output of the decoder $V_i$ depends on the current symbols $(U_i , X_i)$ and the future symbols $U^n_{i+1}$ only through the past and current channel outputs $( Y_i , Y^{i-1} )$. \\

Hence, for all $i\in\{1 , \ldots,n\}$ we have:
\begin{align}
0&\leq  \sum_{i=1}^n I(   W_{1,i} ; Y_i  | W_{2,i}) - \sum_{i=1}^n I(  W_{1,i} , W_{2,i}   ; U_i )  \\
&=  n \cdot \bigg( I(  W_{1,T}  ; Y_T   | W_{2,T}, T) -  I(  W_{1,T} , W_{2,T}  ; U_T  | T) \bigg)\label{eq:ConvCD4} \\
&=  n \cdot \bigg( I(W_{1,T}  ; Y_T  | W_{2,T}, T)-  I(  W_{1,T} , W_{2,T} , T   ; U_T ) \bigg) \label{eq:ConvCD5} \\
&\leq n \cdot \max_{\QQ \in \Q_{\textsf{c}}} \bigg( I(W_1  ; Y_T | W_2 ) - I(  W_1 , W_2   ; U_T ) \bigg) \label{eq:ConvCD6} \\
&\leq  n \cdot \max_{\QQ \in \Q_{\textsf{c}}} \bigg(   I(W_1  ; Y_T  | W_2,E =0 ) \nonumber\\
&-  I(  W_1 , W_2   ; U_T | E=0 ) + \varepsilon  \bigg) 
\label{eq:ConvCD7} \\
&\leq  n \cdot \max_{\QQ \in \Q_{\textsf{c}}} \bigg( I(W_1  ; Y  | W_2  ) - I(  W_1 , W_2   ; U ) + 2\varepsilon \bigg). \label{eq:ConvCD8} 
\end{align}
Equation \eqref{eq:ConvCD4} comes from the introduction of the uniform random variable $T$ over the indices $\{1,\ldots,n\}$ and the introduction of the corresponding mean random variables $U_T$, $W_{1,T}$,  $W_{2,T}$, $X_T$, $Y_T$, $V_T$.\\
Equation \eqref{eq:ConvCD5} comes from the i.i.d. property of the information source that implies $I(T ; U_T) =0$.\\
Equation \eqref{eq:ConvCD6} comes from identifying $W_1$ with $W_{1,T}$ and $W_2$ with $( W_{2,T},T)$ and taking the  maximum over the probability distributions that belong to $\Q_{\textsf{c}}$. This is possible since the random variables $W_{1,T}$ and $( W_{2,T},T)$ satisfy the two Markov chains of the set of probability distributions $\Q_{\textsf{c}}$.\\
Equation \eqref{eq:ConvCD7} comes from the empirical coordination requirement, as stated in Lemma \ref{lemma:ErrorEventCoordinationSCD}. By hypothesis, the sequences are not jointly typical with small error probability $\PP_{\sf{e}}(c) = \PP(E=1)$. Lemma \ref{lemma:ErrorEventCoordinationSCD} adapts the proof of Fano's inequality to the empirical coordination requirement.\\
Equation \eqref{eq:ConvCD8} comes from Lemma \ref{lemma:3}. The probability distribution induced by the coding scheme $ \PP\big((U_T,X_T,Y_T ,V_T)= (u,x,y,v) \big| E=0\big)$ is closed to the target probability distribution $\QQ(u,x,y,v)$. The continuity of the entropy function stated in \cite[pp. 33]{CsiszarKorner(Book)11}  implies equation \eqref{eq:ConvCD8}.

If the joint probability distribution $\PP_{\sf{u}}(u)   \times \QQ(x | u) \times  \mc{T}(y | x )\times \QQ(v | u,x,y)$ is achievable with a causal code, then the following equation is satisfied for all $\varepsilon>0$:
\begin{eqnarray}
0\leq \max_{\QQ \in \Q_{\textsf{c}} } \bigg(  I(W_1  ; Y  | W_2  ) -  I(  W_1 , W_2   ; U ) + 2\varepsilon \bigg).
\end{eqnarray}
This concludes the converse proof of Theorem \ref{theo:CwithChannel}.

\begin{remark}[Stochastic encoder and decoder]
This converse result still holds when considering stochastic encoder and decoder instead of deterministic ones.
\end{remark}

\begin{remark}[Channel feedback observed by the encoder]
\label{remark:FeedbackConverseCD}
The converse proof of Theorem \ref{theo:CwithChannel} is based on the following assumptions:\\
$\bullet$ The information source $U$ is i.i.d distributed with $\PP_{\sf{u}}(u)$.\\
$\bullet$ The decoding function is causal $V_i = g_i(Y^{i})$, for all $i\in\{1,\ldots,n\}$.\\
$\bullet$ The auxiliary random variables $W_{1,i} = U^n_{i+1}$ and $W_{2,i} = Y^{i-1} $   satisfy the Markov chains $Y_i -\!\!\!\!\minuso\!\!\!\!- X_i -\!\!\!\!\minuso\!\!\!\!-  (U_i , W_{1,i} ,W_{2,i} )$ and $V_i -\!\!\!\!\minuso\!\!\!\!- (Y_i,W_{2,i} ) -\!\!\!\!\minuso\!\!\!\!-  (U_i, X_i,W_{1,i} ) $, for all $i\in \{1 , \ldots,n\}$. \\
$\bullet$ The sequences of random variables $(U^n,X^n,Y^n,V^n)$ are jointly typical for the target probability distribution $\PP_{\sf{u}}(u)   \times \QQ(x | u) \times  \mc{T}(y | x ) \times \QQ(v | u,x,y)$, with high probability.

As mentioned in Remark \ref{remark:FeedbackConverse}, each step of the converse holds when the encoder $X_i = f_i(U^n,Y_1^{i-1})$ observes the channel feedback $Y_1^{i-1}$ with $i\in\{1,\ldots,n\}$, drawn from the memoryless channel $\mc{T}(y_1,y | x )$. In fact, the encoder ignores the channel feedback since it arrives too late to be  exploited by the causal decoder. 
\end{remark}

\section{Bound on the cardinalities of $|\mc{W}_1|$ and $|\mc{W}_2|$ for Theorem \ref{theo:CwithChannel}}\label{sec:CardinalityBoundCD}

This section is similar to the App. C, in \cite[pp. 631]{ElGammalKim(book)11}. Lemma \ref{lemma:CardinalityBoundC} relies on the support Lemma and the Lemma of Fenchel-Eggleston-Carathéodory, stated in \cite[pp. 623]{ElGammalKim(book)11}.

\begin{lemma}[Cardinality bound for Theorem \ref{theo:CwithChannel}]\label{lemma:CardinalityBoundC}
The cardinality of the supports of the auxiliary random variables $W_1$ and $W_2$ of the Theorem \ref{theo:CwithChannel}, are bounded by $\max\big( |\mc{W}_1|,  |\mc{W}_2| \big)\leq  |\mc{U} \times \mc{X}  \times \mc{Y}\times \mc{V}  | +2 $.
\end{lemma}

\begin{proof}[Lemma \ref{lemma:CardinalityBoundC}] 
We consider the probability distribution $ \QQ(u,x,w_1,w_2,y,v) = \PP_{\sf{u}}(u)   \times \QQ(x,w_1,w_2 | u) \times  \mc{T}(y | x ) \times \QQ(v | y,w_2)$ that achieves the maximum in equation \eqref{eq:CwithChannel1} of Theorem \ref{theo:CwithChannel}. We fix a pair of symbols $(w_1,w_2)\in \mc{W}_1 \times  \mc{W}_2$ and we consider the conditional probability distribution $ \QQ(u,x,y,v |w_1,w_2) =  \QQ(u, x| w_1,w_2 ) \times  \mc{T}(y | x ) \times \QQ(v | y,w_2)$ that induces the  following continuous functions $h_i\Big(\QQ(u,x,y,v|w_1,w_2)\Big)$, from the set of joint probability distributions $\Delta(\mc{U} \times\mc{X} \times \mc{Y}\times \mc{V} )$ to $\R$:
\begin{eqnarray*}
h_i\Big(\QQ(u,x,y,v|w_1,w_2)\Big) = \qquad\qquad\qquad\qquad\qquad\qquad \nonumber \\
\begin{cases}
\QQ(u,x,y,v|w_1,w_2),\text{ for } i=\big\{ 1,\ldots,  |\mc{U} \times \mc{X}\times \mc{Y}\times \mc{V} | -1\big\},\\
H(Y|W_1=w_1,W_2=w_2) ,\quad \text{ for } i=  |\mc{U} \times \mc{X}\times \mc{Y}\times \mc{V} |,\\
H(Y|W_2=w_2) ,\quad \text{ for } i=  |\mc{U} \times \mc{X}\times \mc{Y}\times \mc{V} | +1,\\
H(U|W_1=w_1,W_2=w_2)  , \text{ for } i=  |\mc{U} \times \mc{X} \times \mc{Y}\times \mc{V}| +2.
\end{cases}
\end{eqnarray*}
The conditional entropies $H(Y|W_1=w_1,W_2=w_2)$, $H(Y|W_2=w_2)$, $H(U|W_1=w_1,W_2=w_2)$ are evaluated with respect to $ \QQ(u,x,y,v |w_1,w_2)$. The support Lemma, stated in \cite[pp. 631]{ElGammalKim(book)11}, implies that there exists a pair of auxiliary random variables $(W'_1 , W'_2) \sim \QQ(w_1',w_2')$  defined on the sets $\mc{W}'_1 \times \mc{W}'_2$ with bounded cardinality $\max\big( |\mc{W}'_1|,  |\mc{W}'_2| \big)\leq  |\mc{U} \times \mc{X}  \times \mc{Y}\times \mc{V}  | +2 $, such that:
\begin{align*}
&H(Y|W_1,W_2) \nonumber \\
  &= \int_{\mc{W}_1\times \mc{W}_2} H(Y|W_2=w_2,W_1 = w_1) d \PP_{\sf{w}_1\sf{w}_2}(w_1,w_2)  \\
&=   \sum_{(w'_1,w'_2) \in \mc{W}'_1 \times \mc{W}'_2} H(Y|W_2'=w_2',W_1' = w_1')   \nonumber \\
&\times \QQ_{\sf{w}_1'\sf{w}_2'}(w_1',w_2') = H(Y|W_1',W_2'),\\
&H(Y|W_2)  \nonumber \\
&= \int_{ \mc{W}_2}H(Y|W_2=w_2) d \PP_{\sf{w}_2}(w_2)  \\
&=   \sum_{w'_2 \in \mc{W}'_2} H(Y|W_2'=w_2') \cdot  \QQ_{\sf{w}_2'}(w_2')  = H(Y|W_2'),\\
&H(U|W_1,W_2)  \nonumber \\
&= \int_{\mc{W}_1\times \mc{W}_2} H(U|W_2=w_2,W_1 = w_1) d \PP_{\sf{w}_1\sf{w}_2}(w_1,w_2)  \\
&=   \sum_{(w'_1,w'_2) \in \mc{W}'_1 \times \mc{W}'_2} H(U|W_2'=w_2',W_1' = w_1')  \nonumber \\
&\times\QQ_{\sf{w}_1'\sf{w}_2'}(w_1',w_2') = H(U|W_1',W_2'),\\
&\QQ(u,x,y,v)  \nonumber \\
&=  \int_{\mc{W}_1\times \mc{W}_2} \QQ(u,x | w_1,  w_2) \times \mc{T}(y|x) \nonumber \\
&\times \QQ(v| y ,  w_2)  \times  \PP_{\sf{w}_1\sf{w}_2}(w_1,w_2) \\
&=  \sum_{(w'_1,w'_2) \in \atop \mc{W}'_1 \times \mc{W}'_2}  \QQ(u,x | w_1',  w'_2) \times \mc{T}(y|x) \nonumber \\
&\times \QQ(v| y ,  w'_2)   \times \QQ_{\sf{w}_1'\sf{w}_2'}(w_1',w_2'),
\end{align*}
for all $ (u,x,y,v)$ with index $i=\big\{ 1,\ldots,  |\mc{U} \times \mc{X} \times \mc{Y} \times \mc{V}  |  + 2\big\}$.
Hence, the probability distribution $\QQ(u,x,y,v)$ and the conditional entropies $H(Y|W_1,W_2)$, $H(Y|W_2)$ and $H(U|W_1,W_2)$ are  preserved. The information constraint writes:
\begin{eqnarray*}
&&I( W_1;Y  |W_2 )  -   I(  U ; W_1,W_2  ) \\
 &= & H(Y  |W_2 )  - H( Y  |W_1,W_2  )  -   H(  U  )  + H(  U |W_1,W_2   ) \\
 &= & H(Y  |W_2' )  - H( Y  |W_1',W_2'  )  -   H(  U  )  + H(  U |W_1',W_2'   ) \\
&=& I( W_1';Y  |W_2' )  -   I(  U ; W_1',W_2'  ) ,
\end{eqnarray*}
with $\max\big( |\mc{W}'_1|,  |\mc{W}'_2| \big)\leq  |\mc{U} \times \mc{X}  \times \mc{Y}\times \mc{V}  | +2 $. This concludes the proof of Lemma \ref{lemma:CardinalityBoundC}, for the cardinality bounds of the supports of the auxiliary random variables $(W_1,W_2)$ of Theorem \ref{theo:CwithChannel}.
\end{proof}

 
 \section{Proof of Theorem \ref{theo:ConvexProblemCD}}\label{sec:ProofTheoConvexCD}

We consider two joint distributions $\QQ^1(u,x,w_1,w_2,y,v)$ and $\QQ^2(u,x,w_1,w_2,y,v)$ that belong to $\Q_{\sf{c}}$ and that achieve the maximum in the information constraint \eqref{eq:CwithChannel1}. 
We denote by $I_{\QQ^1}( W_1;Y  |W_2 )$ and $I_{\QQ^2}( W_1;Y  |W_2 )$ the mutual informations corresponding to the  distributions $\QQ^1(u,x,w_1,w_2,y,v)$ and $\QQ^2(u,x,w_1,w_2,y,v)$. For all $\lambda\in[0,1]$, we prove that any convex combination of the distributions $\QQ^{\lambda} = \lambda \cdot \QQ^1 + (1-\lambda) \cdot \QQ^2$ provides a larger information constraint than the convex combination of the information constraints. We define an auxiliary random variable $Z\in\{1,2\}$, independent of $U$ such that $\PP(Z = 1) = \lambda$ and $\PP(Z = 2) = 1 - \lambda$ and we consider the general distribution $\QQ^{\lambda}(u,x,w_1,w_2,y,v,z)$.
\begin{align}
& \lambda \cdot \bigg( I_{\QQ^1}( W_1;Y  |W_2 )  -   I_{\QQ^1}(  U ; W_1,W_2  ) \bigg) \nonumber\\
&+ (1-\lambda) \cdot \bigg( I_{\QQ^2}( W_1;Y  |W_2 )  -   I_{\QQ^2}(  U ; W_1,W_2  ) \bigg) \nonumber \\
&= \PP(Z = 1) \cdot \bigg( I_{\QQ^{\lambda}}( W_1;Y  |W_2,Z=1 )  \nonumber\\
&-   I_{\QQ^{\lambda}}(  U ; W_1,W_2 |Z=1 ) \bigg) \nonumber\\
&+ \PP(Z = 2) \cdot \bigg( I_{\QQ^{\lambda}}( W_1;Y  |W_2,Z=2 ) \nonumber\\
&-   I_{\QQ^{\lambda}}(  U ; W_1,W_2 |Z=2 ) \bigg) \label{eq:Concave2}  \\
&=  I_{\QQ^{\lambda}}( W_1;Y  |W_2,Z )  -   I_{\QQ^{\lambda}}(  U ; W_1,W_2 |Z )  \label{eq:Concave3}  \\
&=  I_{\QQ^{\lambda}}( W_1;Y  |W_2,Z )  -   I_{\QQ^{\lambda}}(  U ; W_1,W_2 ,Z ) \label{eq:Concave4}  \\
&=  I_{\QQ^{\lambda}}( W_1;Y  |W'_2 )  -   I_{\QQ^{\lambda}}(  U ; W_1,W'_2 ) \label{eq:Concave5} \\
&\leq \max_{\QQ \in \Q_{\sf{c}}} \bigg( I_{\QQ^{\lambda}}( W_1;Y  |W"_2 )  -   I_{\QQ^{\lambda}}(  U ; W_1,W"_2  ) \bigg) .\label{eq:Concave6} 
\end{align}
Equations \eqref{eq:Concave2} and  \eqref{eq:Concave3} come from the definition of the general distribution $\QQ^{\lambda}(u,x,w_1,w_2,y,v,z)$ with random variable $Z$.\\
Equation \eqref{eq:Concave4} comes from the independence between $U$ and $Z$.\\
Equation \eqref{eq:Concave5} comes from replacing $W'_2=(W_2 ,Z)$.\\
Equation \eqref{eq:Concave6} comes from taking the maximum over the joint distributions $\QQ \in \Q_{\sf{c}}$.

This result extends to any convex combination and we conclude that the information constraint \eqref{eq:CwithChannel1} is concave over the set of achievable distributions for causal decoding.


\section*{Acknowledgment}

The author would like to thank B. Larrousse and S. Lasaulce for useful discussions regarding the achievability proof of Theorem \ref{theo:1CwithChannel}, L. Wang and C. Weidmann for fruitful conversations regarding the proof of Theorem \ref{theo:AchievableUtility}, M. Bloch and R. Peretz for fruitful remarks regarding the case of information constraint equal to zero and the anonymous reviewers for providing very insightful comments.

%

\bibliographystyle{ieeetr}


\begin{IEEEbiography}[{\includegraphics[width=1in,height=1.25in,clip,keepaspectratio]{Mael_LeTreust.eps}}]{Mael Le Treust}
Maël Le Treust earned his Diplôme d'Etude Approfondies (M.Sc.) degree in Optimization, Game Theory \& Economics (OJME) from the Université de Paris VI (UPMC), France in 2008 and his Ph.D. degree from the Université de Paris Sud XI in 2011, at the Laboratoire des signaux et systèmes (joint laboratory of CNRS, Supélec, Université de Paris Sud XI) in Gif-sur-Yvette, France. Since 2013, he is a CNRS researcher at ETIS laboratory UMR 8051, Université Paris Seine, Université Cergy-Pontoise, ENSEA, CNRS, in Cergy, France. In 2012, he was a post-doctoral researcher at the Institut d'électronique et d'informatique Gaspard Monge (Université Paris-Est) in Marne-la-Vallée, France. In 2012-2013, he was a post-doctoral researcher at the Centre Énergie, Matériaux et Télécommunication (Université INRS ) in Montréal, Canada. From 2008 to 2012, he was a Math T.A. at the Université de Paris I (Panthéon-Sorbonne), Université de Paris VI (UPMC) and Université Paris Est Marne-la-Vallée, France. His research interests are strategic coordination, information theory, Shannon theory, game theory, physical layer security and wireless communications.\end{IEEEbiography}

\end{document}